  \pgfplotsset{compat=newest}
  \pgfplotsset{plot coordinates/math parser=false}
  \newlength\figureheight
  \newlength\figurewidth
\tikzset{cross/.style={cross out, draw=black, minimum size=2*(#1-\pgflinewidth), inner sep=0pt, outer sep=0pt},
%default radius will be 1pt. 
cross/.default={1pt}}
\newcommand{\ca}[1]{\mathcal{#1}}
\newcommand{\bb}[1]{\mathbb{#1}}
\newcommand{\leqnomode}{\tagsleft@true}
\newcommand{\reqnomode}{\tagsleft@false}
\DeclarePairedDelimiter{\ceil}{\lceil}{\rceil}
\newtheorem{theorem}{Theorem}[section]
\newtheorem{corollary}{Corollary}[theorem]
\newtheorem{lemma}[theorem]{Lemma}
\newtheorem{proposition}[theorem]{Proposition}
\newtheorem{observation}[theorem]{Observation}
\newtheorem{remark}[theorem]{Remark}
\newtheorem{example}{Example}
\begin{document}
%
% paper title
% Titles are generally capitalized except for words such as a, an, and, as,
% at, but, by, for, in, nor, of, on, or, the, to and up, which are usually
% not capitalized unless they are the first or last word of the title.
% Linebreaks \\ can be used within to get better formatting as desired.
% Do not put math or special symbols in the title.
\title{On Topological Properties of the Set of\\Stabilizing Feedback Gains}
%
%
% author names and IEEE memberships
% note positions of commas and nonbreaking spaces ( ~ ) LaTeX will not break
% a structure at a ~ so this keeps an author's name from being broken across
% two lines.
% use \thanks{} to gain access to the first footnote area
% a separate \thanks must be used for each paragraph as LaTeX2e's \thanks
% was not built to handle multiple paragraphs
%

\author{Jingjing Bu,  \and Afshin Mesbahi, \and Mehran Mesbahi\thanks{The authors are with the University of Washington; Emails: {\em bu+amesbahi+mesbahi@uw.edu}}}

\maketitle

% As a general rule, do not put math, special symbols or citations
% in the abstract or keywords.
\begin{abstract}
This work presents a fairly complete account on various topological and metrical aspects of feedback stabilization for single-input-single-output (SISO) continuous and discrete time linear-time-invariant (LTI) systems. In particular, we prove that the set of stabilizing output feedback gains for a SISO system with $n$ states has at most $\ceil{\frac{n}{2}}$ connected components. Furthermore, our analysis yields an algorithm for determining intervals of stabilizing gains for general continuous and discrete LIT systems; the proposed algorithm also computes the number of unstable roots in each unstable interval. Along the way, we also make a number of observations on the set of stabilizing state feedback gains for MIMO systems.
\end{abstract}

% Note that keywords are not normally used for peerreview papers.
\begin{IEEEkeywords}
   Control synthesis; feedback stabilization; topological and metrical analysis 
\end{IEEEkeywords}

% For peer review papers, you can put extra information on the cover
% page as needed:
% \ifCLASSOPTIONpeerreview
% \begin{center} \bfseries EDICS Category: 3-BBND \end{center}
% \fi
%
% For peerreview papers, this IEEEtran command inserts a page break and
% creates the second title. It will be ignored for other modes.
\IEEEpeerreviewmaketitle

\section{Introduction} 
In classical control, topological and metrical properties of stabilizing feedback gains are of paramount importance for the stability analysis and stabilization of LIT systems \cite{Vidyasagar_TAC_1982,Byrnes_book_1980, ohara1992differential,Hermann_TAC_1977no}. % Traditionally, feedback stabilization is parameterized by transfer functions \cite{Vidyasagar_TAC_1982, Byrnes_book_1980, Hermann_TAC_1977no}.
Recently, such 
%a great deal of effort has been made for
properties 
have received renewed interest in system literature as they 
have direct implications for adopting learning algorithms for control design.
This is particularly the case in the so-called direct policy algorithms, where it is of interest to directly adjust the control gain--without an explicit system identification step--say, using a gradient step.
The design objectives in these scenarios are typically functions over direct policies--often desired to be stabilizing feedback gains.
%in the matrix space.
% So, the properties are crucial in understanding cost functions and designing algorithms.
For example, in reinforcement learning, the policy gradient updates the feedback iteratively to get desired optimal controller. In this case, the cost functions are defined on the set of stabilizing controllers (assuming $+\infty$ elsewhere).\footnote{In this paper, we will use ``feedback controllers'' interchangeably with static feedback gains as ``dynamic'' controllers are not considered.} As such, understanding the topological and metrical properties of this set provides valuable insights in designing learning algorithms for dynamic systems. 
In the meantime, such insights can also reveal fundamental  shortcomings in certain optimization algorithms. For example, if the set of stabilizing feedback gains has several path-connected components, the solutions of gradient-type learning algorithms will be highly dependent on the initialization process.

It is thus surprising that despite the long historical interest in characterizing the set of stabilizing feedback gains, research works on its set-theoretic and topological properties are rather limited. This is potentially due to significantly more interest in characterizing the set of 	``certificates" for stabilizing controllers, e.g., in terms of linear matrix inequalities.

Of particular relevance to our work in directly characterizing
stabilizing feedback gains is that of
Ohara and colleagues~\cite{ohara1992differential}, 
who examined its differential geometric structure for multiple-input-multiple-output (MIMO) systems. 
In~\cite{Prakash_IJC_1983} and~\cite{Fam_TAC_1978},
an elegant geometric approach has been adopted to parametrize the set of stabilizing feedback gains; in particular,
it has been shown that for continuous and discrete single-input-single-output (SISO) and dyadic systems the corresponding sets can be bounded via two and three hyperplanes.
%, respectively~\cite{Prakash_IJC_1983} and~\cite{Fam_TAC_1978}.
%
Furthermore, the work of Ober~\cite{ober1987topology} has
shown that the set of stable SISO systems of order $n$ have $n+1$ connected components in the Euclidean topology 
while the set of stable MIMO systems is connected.
The work reported in~\cite{feng2018on} focuses on the connectedness of this set for both SISO and MIMO systems. We note that both works~\cite{ohara1992differential}~\cite{feng2018on} examine continuous-time systems.
%

%In this manuscript, we present topological and metrical properties of the set of stabilizing feedback gains in both continuous and discrete linear time-invariant (LTI) SISO systems. We discuss the openness, boundedness, convexity and connectedness in both continuous- and discrete-time systems. These properties are relevant in understanding cost functions defined on the set and algorithm design. Among these results, we prove that the set $\ca H$ and $\ca S$ both can have at most $\ceil{\frac{n}{2}}$ connected components where $n$ is the number of states. This bound is tight by our simulation results. Our analysis for the bounds reveals an algorithm to identify these path-connected components. If a local search algorithm is sophisticated enough to guarantee optimality on each path-connected component, combining with our initialization scheme, a global optimality will be achieved.
%
This paper discusses the topological, metrical, and geometric properties of the set of stabilizing controllers for both continuous and discrete-time LTI systems.
%\begin{itemize}
 % \item 
We show that the set of stabilizing state-feedback gains for a continuous SISO system is regular open, unbounded, in general nonconvex, and path-connected in the Euclidean topology. In the meantime, the set of stabilizing output-feedback controllers is shown to be open but not connected in general, and can be bounded or unbounded. 
It recent works, based on the implicit assumption that stable and unstable intervals
of the feedback gain interlace, it has been stated that 
the set of stabilizing output feedback controllers for SISO systems can have at most $n$ (in~\cite{feng2018on})
and $\ceil{\frac{n}{2}}$ (in \cite{feng2019}) connected components.
%
%The proofs reported in these works in the meantime, rely on an assumptions that does not hold for general continuous LTI SISO systems. 
If this assumption does not hold, however, the line of reasoning reported in~\cite{feng2018on, feng2019} lead
to the {\em upper bounds} of $2n$ and $n$, respectively.\footnote{In discussions with authors of~\cite{feng2019}, it has been  pointed out that a perturbation type argument can help address this issue; the approach adopted in this work is direct and leads to a constructive algorithm for characterizing the stabilizing and unstabilizing intervals.}
In this work, we prove a tight bound of $\ceil{\frac{n}{2}}$ for continuous as well as discrete time LTI systems; all of our results are constructive (they lead to algorithms for characterizing these sets) and rely on basic topology and analytic theory of polynomials~\cite{marden1966geometry, rahman2002analytic}.\footnote{Thus the emphasis on the SISO case; some of these results have been extended to MIMO case in~\cite{MIMO2019}.}

The separate treatment for continuous and discrete time systems is warranted;
in fact, in contrast to the folklore expectation of unified properties for continuous and discrete time systems, there are counterexamples to show that the analogies between the two are
far from complete~\cite{Jonckheere_TAC_1981, Wimmer_JMSEC_1995}.
The distinct difference between continuous and discrete LTI systems might be due to the fact that the generalized bilinear transform has poles and thus not continuous~\cite{Mehrmann_LAA_1996,Hung_LAA_1998}. Therefore, generalizing the proposed topological properties of the set of stabilizing feedback gains from continuous LTI systems to discrete ones is not straightforward. Nevertheless, in this paper we show that the set of stabilizing state feedback gains for discrete-time LTI SISO systems enjoys some of the topological properties as its continuous counterpart, i.e., open and path-connected in Euclidean topology and nonconvexity. But in contrast to the continuous case, the set of stabilizing state feedback gains is \emph{bounded}. For output feedback SISO systems, the corresponding set of stabilizing gains is open, bounded and in general nonconvex, but is no longer path-connected. Accordingly, we prove that the set can have at most $\ceil{\frac{n}{2}}$ path-connected components, which is a tight bound supported by simulation results.
% Mind that in output-feedback systems, the proofs between continuous- and discrete-time have nominal differences. 
      %\item
       {The present work also proposes an algorithm for determining the intervals of stabilizing feedback gains for general continuous and discrete LIT systems.
    %   , based on the obtained topological and metrical properties. 
       This algorithm also computes the number of unstable roots in each unstable interval.}%We point out algorithms to identify the stabilizing intervals in both continuous- and discrete-time LTI SISO systems.
      %  \item 
%      {\color{red}We demonstrate an application of the algorithms in SISO LQR control.}
%    \end{itemize}
% via \DeclareGraphicsExtensions.
%\caption{Simulation results for the network.}
%\label{fig_sim}
% \end{figure}

      \par
      The paper is organized as follows. In \S\ref{sec:notations}, we introduce the notation and the preliminary background; \S\ref{sec:hurwitz} \S\ref{sec:schur} are devoted to set-theoretic properties of Hurwitz and Schur stabilizing feedback gains, respectively, followed by numerical examples. Our results are intermingled with observations and remarks that further provide insights into some of the geometric and topological intricacies of feedback stabilization.

\section{Notation and Preliminaries}
\label{sec:notations}
% Linear Algebra
We denote by ${\bb M}_n(\bb R)$ the set of $n \times n$ real matrices and ${\bb {GL}}_n(\bb R)$ as its subset of invertible matrices; $\bb R^n$ and $\bb C^n$ denote the $n$-dimensional real and complex Euclidean spaces with $n=1$ identified with real and complex numbers.
 For a vector $v \in \bb R^n$, we use $v_j$ to denote the $j$th entry of $v$, where $v = (v_1, \dots, v_n)^\top$. %\par
% Complex analysis
We denote the open unit disk of $\bb C$ by $\bb D = \{\lambda \in \bb C: |\lambda| < 1\}$ and the left-half plane by $\bb H_{-} = \{\lambda \in \bb C: \text{\bf Re}(\lambda) < 0\}$;\footnote{ $\text{\bf Re}$ and $\text{\bf Im}$ refer to the real and imaginary parts of the complex number; when applied to a set of complex numbers, these operations are naturally extended to each element of that set.} $\bb H^{n}_{-}$ will be the $n$-dimensional version of $\bb H_{-}$. The notation $|\lambda|$ denotes the modulus of the complex number $\lambda \in \bb C$ and $\bar{\lambda}$ denotes its complex conjugate. We use $\bb C[\lambda]$ and $\bb R[\lambda]$ to denote polynomials with complex and real coefficients, respectively, where $\lambda$ is the corresponding indeterminant of the polynomial. For a polynomial $p$ over $\bb C$ or $\bb R$ 
%(\lambda) \in \bb C[\lambda]$ or $p(\lambda) \in \bb R[\lambda]$, 
we use $p'$ to denote its derivative with respect to the indeterminant, unless noted otherwise.
%\par
% Polynomials
By Fundamental Theorem of Algebra, a monic polynomial $p(\lambda) = \lambda^n + \alpha_{n-1} \lambda^{n-1} + \dots + \alpha_0 \in {\bb C}[\lambda]$ has $n$ roots (or zeros) counting multiplicities; we let ${\cal Z}_p$ to denote this set  (each zero is repeated according to its multiplicity). Mind that ${\cal Z}_p$ is not a well-defined object in $\bb C^n$ as we do not impose a natural ordering amongst the roots. Thus if ${\ca Z}_p = \{\lambda_1, \dots, \lambda_n\}$ with each $\lambda_j \in \bb C$, $\sigma {\cal Z}_p = \{\lambda_{\sigma(1)}, \dots, \lambda_{\sigma(n)}\}$ denotes the same set of roots for every permutation $\sigma$ in the permutation group $S_n$. Hence ${\cal Z}_p$ is more naturally viewed as an element of the quotient space $\bb C^n/S_n$, where the underlying equivalence relation $u \sim v$ is via,
\begin{align*}
  u = (u_1, \dots, u_n)^\top = (v_{\sigma(1)}, \dots, v_{\sigma(n)})^\top,
\end{align*}
for some $\sigma \in S_n$;
endow this quotient space $\bb C^n/S_n$ with a quotient topology induced by the canonical projection $\pi:~\bb C^n \to \bb C^n/S_n$. \par
The following result will subsequently be used in our analysis.
%states that there is a homeomorphism between the set of coefficients of a monic complex polynomial and its zeros.
%\begin{theorem}{(Theorem $A$ in~\cite{harris1987shorter})}
\begin{theorem}(\cite{harris1987shorter})
  \label{thm:zeros}
  There is a homeomorphism 
   $h \colon \bb C^n \to \bb C^n/S_n,$
mapping the coefficients of a monic complex polynomial to its zeros.
\end{theorem}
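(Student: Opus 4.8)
The plan is to construct the inverse of $h$ explicitly and then promote a continuous bijection to a homeomorphism by a compactness argument. Concretely, I would first consider the \emph{roots-to-coefficients} map $\Phi \colon \bb C^n \to \bb C^n$ sending a tuple of roots $(\lambda_1,\dots,\lambda_n)$ to the coefficient vector $(\alpha_0,\dots,\alpha_{n-1})$ of $\prod_{j=1}^n(\lambda-\lambda_j)$; by Vieta's formulas $\alpha_{n-k}=(-1)^k e_k(\lambda_1,\dots,\lambda_n)$, where $e_k$ is the $k$th elementary symmetric polynomial. Each component of $\Phi$ is a polynomial in the $\lambda_j$, so $\Phi$ is continuous, and it is visibly invariant under the $S_n$-action permuting the $\lambda_j$. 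Hence $\Phi$ descends through the canonical projection $\pi$ to a continuous map $\bar\Phi \colon \bb C^n/S_n \to \bb C^n$ with $\Phi = \bar\Phi \circ \pi$. The candidate homeomorphism is $h = \bar\Phi^{-1}$.

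Second, I would establish that $\bar\Phi$ is a bijection. Surjectivity is the Fundamental Theorem of Algebra: every monic $p$ factors as $\prod_j(\lambda-\lambda_j)$, so every coefficient vector is attained. Injectivity follows from uniqueness of factorization in $\bb C[\lambda]$: if two root tuples produce the same monic polynomial they must list the same zeros with the same multiplicities, i.e.\ they differ by a permutation and thus represent the same class in $\bb C^n/S_n$. So $\bar\Phi$ is a continuous bijection.

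Third --- the only step with real content --- I would upgrade this continuous bijection to a homeomorphism via properness rather than by estimating perturbed roots directly. Both $\bb C^n$ and the quotient $\bb C^n/S_n$ are locally compact Hausdorff (the quotient of a locally compact Hausdorff space by the finite group $S_n$ retains both properties). I claim $\Phi$ is proper. Indeed, by the classical Cauchy bound on root moduli (see \cite{marden1966geometry, rahman2002analytic}), every root $\mu$ of $p$ obeys $|\mu|\le 1+\max_{0\le k\le n-1}|\alpha_k|$; hence if $\|\Phi(\lambda)\|\le R$ then $\max_j|\lambda_j|\le 1+R$, so the preimage of any bounded set is bounded and, being closed, compact. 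Since $\pi$ is also proper, $\bar\Phi$ is proper, hence closed; a closed continuous bijection has continuous inverse, so $h=\bar\Phi^{-1}$ is the desired homeomorphism.

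The main obstacle, as flagged, is the continuity of $h$ itself --- the statement that the zeros of a polynomial vary continuously with its coefficients, including across collisions of roots where multiplicities change. A direct attack would require a Rouch\'e-type argument controlling how roots move under coefficient perturbations, which is delicate precisely at multiple roots. The properness route sidesteps this entirely: it trades the local perturbation analysis for a single global growth estimate (the Cauchy bound), which is elementary. I would therefore invest the care in verifying that $\bb C^n/S_n$ is genuinely locally compact Hausdorff and that properness of a continuous bijection between such spaces indeed forces the map to be closed, as these are the load-bearing topological facts.
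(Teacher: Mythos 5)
Your proposal is correct. Note that the paper itself offers no proof of this statement---it simply cites \cite{harris1987shorter}---so your argument is, if anything, more self-contained than what appears in the text. It is also essentially the same argument as in that reference: Harris and Martin likewise observe that the Vieta (roots-to-coefficients) map is a continuous, $S_n$-invariant polynomial map descending to a continuous bijection $\bar\Phi \colon \bb C^n/S_n \to \bb C^n$, and then upgrade it to a homeomorphism by a compactness device; where they extend $\bar\Phi$ to the one-point compactifications and invoke ``a continuous bijection from a compact space to a Hausdorff space is a homeomorphism,'' you instead invoke properness via the Cauchy root bound and the fact that a proper continuous map into a locally compact Hausdorff space is closed. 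These two devices carry identical mathematical content (properness is exactly the condition that lets the map extend continuously over the points at infinity), so the choice between them is cosmetic. The two facts you flag as load-bearing are indeed the right ones to check, and both are standard: the quotient of a locally compact Hausdorff space by a finite group action is locally compact Hausdorff (the projection $\pi$ is open and closed, and orbits are finite), and closedness of a proper map needs only the \emph{target} $\bb C^n$ to be locally compact Hausdorff, so in fact the Hausdorffness of $\bb C^n/S_n$ can even be deduced a posteriori from the homeomorphism itself. One small simplification: rather than arguing that $\pi$ is proper, you can get properness of $\bar\Phi$ directly from $\bar\Phi^{-1}(K) = \pi\bigl(\Phi^{-1}(K)\bigr)$, which is compact as the continuous image of a compact set.
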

In this manuscript, we are concerned with ``real'' linear time-invariant (LTI) systems, i.e., systems with real parameters and feedback gains. Hence the roots of the corresponding characteristic polynomial
$p(x) \in \bb R[x]$ will be invariant under complex conjugation, namely if ${\cal Z}_p = \{z_1, \dots, z_n\} \in \bb C^n/S_n$, then\footnote{Note the equivalence relation on ${\bb C}^n/S_n$.}
\begin{align*}
  \bar{\cal Z}_p = \{\bar{z}_1, \dots, \bar{z}_n\} = {\cal Z}_p;
\end{align*}
denote by $\bb C^n_{*}$ as the set of  vectors in $\bb C^n$ that are invariant under entry-wise conjugation.

The relation between the coefficients and roots for real polynomials follows from Theorem~\ref{thm:zeros}, as the restriction of a homeomorphism is again a homeomorphism.
\begin{corollary}
  \label{cor:real_zeros}
  Suppose that $p(\lambda)= \lambda^n + a_{n-1}\lambda^{n-1} + \dots + a_0 \in \bb R[\lambda]$ is a real polynomial. Then there is a homeomorphism
    $\hat{h}: \bb R^n \to \bb C^n_{*}/S_n$, mapping coefficients of $p(\lambda)$ to its roots.
\end{corollary}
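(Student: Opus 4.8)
The plan is to obtain $\hat h$ as the restriction of the homeomorphism $h\colon \bb C^n \to \bb C^n/S_n$ supplied by Theorem~\ref{thm:zeros}, after first identifying the real-coefficient locus $\bb R^n \subseteq \bb C^n$ with the conjugation-invariant root locus $\bb C^n_{*}/S_n \subseteq \bb C^n/S_n$. The only general fact about homeomorphisms I will invoke is that if $f\colon X \to Y$ is a homeomorphism and $A \subseteq X$, then $f|_A\colon A \to f(A)$ is again a homeomorphism for the subspace topologies; thus the entire proof reduces to verifying the set-level identity $h(\bb R^n) = \bb C^n_{*}/S_n$.

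First I would establish the algebraic characterization underlying this identity: a monic polynomial $p(\lambda) = \lambda^n + \sum_{j=0}^{n-1} a_j \lambda^j$ has all coefficients real if and only if its root multiset ${\cal Z}_p$ is invariant under entrywise conjugation. The forward direction is immediate from $\overline{p(\lambda)} = p(\bar\lambda)$ when the $a_j$ are real, which shows $\lambda \in {\cal Z}_p \iff \bar\lambda \in {\cal Z}_p$ with equal multiplicities, i.e.\ $\bar{\cal Z}_p = {\cal Z}_p$. For the converse, given a conjugation-invariant multiset I would pair each non-real root with its conjugate and reconstruct $p$ as a product of real linear factors $(\lambda - r)$ for real roots and real quadratic factors $(\lambda - z)(\lambda - \bar z) = \lambda^2 - 2\Re(z)\lambda + |z|^2$ for conjugate pairs; the resulting product lies in $\bb R[\lambda]$. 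Since $h$ sends coefficient vectors to root multisets, this equivalence says precisely that $h$ carries $\bb R^n$ onto $\bb C^n_{*}/S_n$ and, being a bijection, that $h^{-1}(\bb C^n_{*}/S_n) = \bb R^n$.

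Next I would settle the topological bookkeeping so that the restriction statement applies verbatim. Here $\bb R^n$ carries the subspace topology inherited from $\bb C^n$, which coincides with its usual Euclidean topology, while $\bb C^n_{*}/S_n$ is given the subspace topology it inherits from $\bb C^n/S_n$; since $\bb C^n_{*}$ is a closed, $S_n$-invariant subset of $\bb C^n$, its image $\pi(\bb C^n_{*})$ under the canonical projection is exactly the set of conjugation-invariant multisets, and this is the space denoted $\bb C^n_{*}/S_n$. With these identifications, defining $\hat h := h|_{\bb R^n}\colon \bb R^n \to \bb C^n_{*}/S_n$ and invoking the restriction fact completes the argument.

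The main obstacle I anticipate is not the algebra but making the two topologies match cleanly---specifically, confirming that the subspace topology on $\bb C^n_{*}/S_n$ viewed inside $\bb C^n/S_n$ is the correct target, rather than some a priori different quotient topology on $\bb C^n_{*}$ by the induced $S_n$-action. Because $\bb C^n_{*}$ is closed and $S_n$-saturated, these coincide, and the restriction of $h$ automatically transports the well-behaved topology; this compatibility is the one point I would check with care rather than merely assert.
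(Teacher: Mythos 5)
Your proposal is correct and follows essentially the same route as the paper: the paper's proof is precisely the observation that $\hat h$ is the restriction of the homeomorphism $h$ of Theorem~\ref{thm:zeros} to $\bb R^n$, with image $\bb C^n_{*}/S_n$. You simply make explicit the two points the paper leaves implicit — the set-level identity $h(\bb R^n)=\bb C^n_{*}/S_n$ via conjugation-invariance of real root multisets, and the coincidence of the quotient and subspace topologies on $\bb C^n_{*}/S_n$ (valid since $\bb C^n_{*}$ is closed and $S_n$-saturated) — both of which check out.
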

Consider now the continuous LTI SISO system,
\begin{equation}\label{eq:continuous_system}
\begin{split}
  \dot{\bf x}(t)= A {\bf x}(t) + b {\bf u}(t), \quad  {\bf y}(t) = c^\top {\bf x}(t),
\end{split}
\end{equation}
and its discrete time variant, 
\begin{equation}\label{eq:discrete_system}
\begin{split}
 {\bf x}(k+1)= A {\bf x}(k) + b {\bf u}(k), \quad {\bf y}(k)& = c^\top {\bf x}(k),
\end{split}
\end{equation}
where $A \in {\bb M}_n(\bb R)$ and $b, c \in \bb R^n$; such systems will be abbreviated in terms of the 
triplet $(A, b, c^\top)$. We say that the system $(A, b, c^\top)$ is controllable and observable if it satisfies the Kalman Rank Condition~\cite{zabczyk2009mathematical}, namely, $\text{\bf rank}([b, Ab, \dots, A^{n-1}b])=n$ and $\text{\bf rank}([c^\top, c^\top A, \dots, c^\top A^{n-1}]^\top) = n$.
 For synthesizing state feedback gains for SISO systems with dynamics of order $n$, we are interested in identifying $k \in \bb R^n$ to synthesize the control signal ${\bf u}(t) = k^\top {\bf x}(t)$; if output-feedback is of interest, the feedback gain is a scalar $k \in \bb R$ and ${\bf u}(t) = k {\bf y}(t) = k c^\top {\bf x}(t)$. For a controllable and observable triplet $(A, b, c^\top)$, we denote the set of \emph{Hurwitz stabilizing output feedback gains} as,\footnote{The $\max$ operation on a set identifies the maximum element of that set.}
\begin{align}
  \label{eq:continuous_H}
  \ca H &= \{k \in \bb R: \max \{ \text{\bf Re}({\cal Z}_{p(\lambda, {A-kbc^\top})})\} < 0\},
          \end{align}
and the set of \emph{Schur stabilizing output-feedback gains} by
\begin{align}
  \label{eq:discrete_S}
  \ca S &=\{k \in \bb R: \max \{ |{\cal Z}_{p(\lambda, {A-kbc^\top})}|\} < 1\},
\end{align}
where $p(\lambda, A-kbc^\top)$ denotes the characteristic polynomial of the closed-loop system with feedback gain $k \in \bb R$.
% \text{\bf Re} designates the real part of the complex number (or a set of them), modulus of a set is the set of modulus' of each of its elements, and $\max$ when applied to a set of real numbers returns its maximum element.
Naturally, we could have defined the sets $\ca H$ and $\ca S$ in terms of the eigenvalues of $A-kbc^\top$;
thus $\max \{ |{\cal Z}_{p(\lambda, {A})}|\}$ is simply the spectral radius of the matrix $A$, that we denote by $\rho (A)$.

When we examine state feedback with the same system parameters $(A, b)$, the sets $\ca H_{\bf x}$ and $\ca S_{\bf x}$ are defined as,
\begin{align}
  \label{eq:continuous_H_s}
  \ca H_{\bf x} &= \{k \in \bb R^n: \max \{ \text{\bf Re}({\cal Z}_{p(x, {A-bk^\top})}) \} < 0\},\\
  \label{eq:discrete_S_s}
  \nonumber \ca S_{\bf x} &= \{k \in \bb R^n: \max \{ |{\cal Z}_{p(\lambda, {A-kbc^\top})}|\} < 1\}\\
  &= \{k \in \bb R^n: \rho(A-kbc^\top) < 1\},
\end{align}
where we have used the subscript $\bf x$ to denote state feedback.\footnote{That is, the state ``${\bf x}$'' is available for feedback.}

We now observe a relation between $\ca H$ and $\ca H_{\bf x}$; a similar relation holds between $\ca S$ and $\ca S_{\bf x}$. This relation will be used in our subsequent analysis.
\begin{observation}
  \label{obs:state-output-relation}
  For a controllable and observable system $(A, b, c^\top)$,
  \begin{align*}
    \ca H = \{ k \in \bb R: (kc) \cap \ca H_{\bf x} \neq \emptyset\}.
  \end{align*}
\end{observation}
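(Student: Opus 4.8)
The plan is to reduce the asserted set equality to a single algebraic identity between the two closed-loop matrices and then read off the conclusion from the definitions. The essential observation is that, because the output-feedback gain $k$ is a \emph{scalar}, the rank-one update $k\,bc^\top$ may be regrouped as $b(kc)^\top$; indeed $(kc)^\top = k\,c^\top$ since $k\in\bb R$, so
\[
A - kbc^\top = A - b(kc)^\top .
\]
The right-hand side is precisely the closed-loop matrix produced by \emph{state} feedback with the gain vector $kc \in \bb R^n$ (compare \eqref{eq:continuous_H} with \eqref{eq:continuous_H_s}). Since the two matrices literally coincide, their characteristic polynomials, and hence their root multisets $\ca{Z}_{p(\lambda,\,\cdot)} \in \bb C^n_{*}/S_n$, are identical.

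First I would fix $k \in \bb R$ and unwind the definition of $\ca H$: by \eqref{eq:continuous_H}, $k \in \ca H$ means $\max\{\text{\bf Re}(\ca{Z}_{p(\lambda,\,A-kbc^\top)})\} < 0$. Substituting the identity above, this is word-for-word the statement $\max\{\text{\bf Re}(\ca{Z}_{p(\lambda,\,A-b(kc)^\top)})\} < 0$, which by \eqref{eq:continuous_H_s} is exactly the membership $kc \in \ca H_{\bf x}$. Thus $k \in \ca H$ if and only if the point $kc$ lies in $\ca H_{\bf x}$, i.e. $\{kc\} \cap \ca H_{\bf x} \neq \emptyset$; running this equivalence in both directions yields the claimed set equality. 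Geometrically, transporting $\ca H$ along the map $k \mapsto kc$ identifies it with the slice of the state-feedback region $\ca H_{\bf x}$ cut out by the line $\bb R c \subset \bb R^n$.

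There is no genuine obstacle here, the content being the scalar regrouping $k\,bc^\top = b(kc)^\top$, but two points warrant care. The first is purely bookkeeping: the gain defining $\ca H$ is a scalar whereas the gain defining $\ca H_{\bf x}$ is a vector, so one must track the embedding $k \mapsto kc$ and read $(kc)$ as the singleton $\{kc\}$, so that ``intersection nonempty'' is just membership. The second is to confirm that the standing controllability and observability hypotheses play no role in the equivalence itself; they serve only to make the slice nondegenerate, as observability forces $c \neq 0$ and hence renders the line $\bb R c$ genuinely one-dimensional and the parametrization $k \mapsto kc$ injective.
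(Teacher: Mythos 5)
Your proposal is correct and follows exactly the paper's own (one-line) argument: the paper simply notes that $k \in \ca H$ is equivalent to $kc \in \ca H_{\bf x}$, which is the scalar regrouping $A - kbc^\top = A - b(kc)^\top$ that you spell out. The additional detail you provide (the bookkeeping of the embedding $k \mapsto kc$ and the remark that controllability/observability are not needed for the equivalence itself) is a faithful elaboration of the same idea, not a different route.
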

\begin{proof}
  We only need to observe that $k \in \ca H$ is equivalent to having $k c \in \ca H_{\bf x}$.
\end{proof}
Denote by $(A^{\flat}, b^{\flat})$ the {controllable canonical form}~\cite{kailath_Book_1980} of the pair $(A, b)$;
$\ca H^{\flat}_{\bf x}, \ca S^{\flat}_{\bf x}$ then denote the corresponding set of Hurwitz/Schur stabilizing state feedback gains~\cite{kailath_Book_1980}. {We now observe that the sets $\ca H^{\flat}_{\bf x}$ and $\ca S^{\flat}_{\bf x}$ are related to $\ca H_{\bf x}$ and $\ca S_{\bf x}$ through a change of coordinates.
% of $\ca H_{\bf x}, \ca S_{\bf x}$.}
%
%We now observe that the set of stabilizing feedback controllers for $(A, b)$ is diffeomorphic to the set of stabilizing feedback gains of $(A^{\flat}, b^{\flat})$.
\begin{observation}
  \label{obs:feedback_equiv}
  Let $(A, b)$ be a controllable pair and $(A^{\flat}, b^{\flat})$ be its corresponding {controllable canonical form}. Then % $\ca S^{\flat}_{\bf x}$ and $\ca H^{\flat}_{\bf x}$ are diffeomorphic to $\ca S_{\bf x}$ and $\ca H_{\bf x}$, respectively.
{$\ca S_{\bf x} = T^\top\ca S^{\flat}_{\bf x} \coloneqq \{T^\top k: k \in \ca S^{\flat}_{\bf x}\}$ and $\ca H_{\bf x}=T^\top\ca H^{\flat}_{\bf x} \coloneqq \{T^\top k: k \in \ca H^{\flat}_{\bf x}\}$, where $T \in \bb{GL}_n(\bb R)$ and $(A^{\flat}, b^{\flat}) = (TAT^{-1}, Tb)$.}
\end{observation}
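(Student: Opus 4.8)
The plan is to reduce both set identities to a single similarity relation between the closed-loop matrices. State feedback with gain $k$ on $(A,b)$ produces the closed-loop matrix $A - bk^\top$, and the stabilizing sets $\ca H_{\bf x}$ and $\ca S_{\bf x}$ depend on $k$ only through the spectrum (equivalently, the characteristic polynomial) of this matrix. It therefore suffices to exhibit, for each candidate gain $k^\flat$ for the canonical pair, a gain $k$ whose closed-loop matrix is \emph{similar} to $A^\flat - b^\flat (k^\flat)^\top$, and to verify that the correspondence $k^\flat \mapsto k$ is precisely the linear map $k = T^\top k^\flat$ asserted in the statement.

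First I would fix the substitution $k = T^\top k^\flat$, equivalently $k^\top = (k^\flat)^\top T$, and conjugate the closed-loop matrix by $T$. Using $(A^\flat, b^\flat) = (TAT^{-1}, Tb)$, a direct expansion gives
\begin{align*}
  T(A - bk^\top)T^{-1} &= TAT^{-1} - (Tb)\,k^\top T^{-1}\\
  &= A^\flat - b^\flat\,(k^\flat)^\top T T^{-1}\\
  &= A^\flat - b^\flat (k^\flat)^\top .
\end{align*}
Hence $A - bk^\top$ and $A^\flat - b^\flat (k^\flat)^\top$ are similar, so they share the same characteristic polynomial and thus the same multiset of eigenvalues, $\ca Z_{p(\lambda,\, A - bk^\top)} = \ca Z_{p(\lambda,\, A^\flat - b^\flat (k^\flat)^\top)}$.

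Next, since the defining condition for $\ca H_{\bf x}$ (maximal real part of the spectrum negative) and for $\ca S_{\bf x}$ (spectral radius below one) are functions of this eigenvalue multiset alone, the coincidence of spectra yields $k \in \ca H_{\bf x} \Leftrightarrow k^\flat \in \ca H^\flat_{\bf x}$ and $k \in \ca S_{\bf x} \Leftrightarrow k^\flat \in \ca S^\flat_{\bf x}$, under the relation $k = T^\top k^\flat$. Because $T \in \bb{GL}_n(\bb R)$, the map $k^\flat \mapsto T^\top k^\flat$ is a linear bijection of $\bb R^n$; letting $k^\flat$ range over $\ca H^\flat_{\bf x}$ (respectively $\ca S^\flat_{\bf x}$) therefore sweeps out exactly $T^\top \ca H^\flat_{\bf x}$ (respectively $T^\top \ca S^\flat_{\bf x}$), which gives the two claimed equalities.

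I do not expect a substantive obstacle here; the only point demanding care is the transpose bookkeeping — confirming that it is $T^\top$, rather than $T$ or $T^{-1}$, that implements the change of gain. This is forced by the fact that the feedback enters as the rank-one term $bk^\top$: conjugation by $T$ sends $bk^\top \mapsto (Tb)(k^\top T^{-1}) = b^\flat \bigl(T^{-\top} k\bigr)^\top$, and matching this to $b^\flat (k^\flat)^\top$ pins down $k^\flat = T^{-\top} k$, i.e. $k = T^\top k^\flat$, exactly as stated.
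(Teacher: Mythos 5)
Your proposal is correct and follows essentially the same route as the paper: both rest on the similarity identity $T(A - b k^\top)T^{-1} = A^\flat - b^\flat (k^\flat)^\top$ under the substitution $k = T^\top k^\flat$, so that the closed-loop spectra coincide and invertibility of $T^\top$ turns the pointwise equivalence into the two set equalities. The only cosmetic difference is that the paper proves one inclusion for $\ca S_{\bf x}$ and invokes symmetry for the rest, whereas you package both directions at once via the bijection $k^\flat \mapsto T^\top k^\flat$.
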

\begin{proof}
  We only prove % $\ca S^{\flat}_{\bf x}$ is diffeomorphic to $\ca S_{\bf x}$;
  {the relation between $\ca S^{\flat}_{\bf x}$ and $\ca S_{\bf x}$;}
  the proof for $\ca H^{\flat}_{\bf x}$ and $\ca H_{\bf x}$ follows analogously.
%  \sout{By our assumption, there is $T \in \bb{GL}_n(\bb R)$ such that}
%\[ \cancel{A^{\flat} = T A T^{-1}, \quad b^{\flat} = Tb.} \]
%
When $k \in \ca S^{\flat}_{\bf x}$, 
% i.e., $\rho(A-bk^\top) < 1$,
\begin{align*}
  \rho(A^{\flat}-b^{\flat}k^\top) = \rho(T(A - bk^\top T)T^{-1}) = \rho(A-bk^\top T) < 1;
\end{align*}
hence, $T^\top k \in \ca S_{\bf x}$ and $T^\top\ca S^{\flat}_{\bf x} \subseteq \ca S_{\bf x}$. The set inclusion in the other direction follows analogously; as such, $T^\top\ca S^{\flat}_{\bf x} = \ca S_{\bf x}$.
%\coloneqq \{T^\top k: k \in \ca S^{\flat}_{\bf x}\}$. 
%\sout{Since the map $k \mapsto T^\top k$ is a diffeomorphism on $\bb R^n$, we conclude $\ca S_{\bf x}$ and $\ca S^{\flat}_{\bf x}$ are diffeomorphic.}
\end{proof}
\begin{remark}
  {Since $k \mapsto T^{\top}k$ is a diffeomorphism on $\bb R^n$, topological properties of the set of stabilizing feedback gains, such as connectedness, can be studied for the {controllable canonical form instead.}} 
%  \sout{This observation shows that topological properties such as connectedness, can be studied for the {\color{blue} controllable canonical form} instead. This is also valid for a controllable and observable triple $(A, b, c^\top)$, i.e., we may without loss of generality assume $(A, b)$ is in the {\color{blue} controllable canonical form}. This observation is useful as the characteristic polynomial of the closed-loop system for $(A^{\flat}, b^{\flat}, c^\top)$ admits a rather simple form. Namely, if $\mathbf r_n(A^{\flat}) = (a_0, a_1, \dots, a_n)$, where $\mathbf r_j(A)$ denotes the $j^{\text{th}}$ row of $A$, the characteristic function of the closed-loop system for a feedback gain $k \in \bb R$ is given by}
%\[  \cancel{p(\lambda, A-kbc^\top) = \lambda^n + (a_{n-1}-k c_{n-1}) \lambda^{n-1} + \dots + (a_0 - k c_0). }\]
{Furthermore, for a controllable and observable triplet $(A, b, c^\top)$, we may assume that $(A, b)$ is in the controllable canonical form in order to characterize the stabilizing output feedback gains. In fact, since $(A^{\flat}, b^{\flat}) = (TAT^{-1}, Tb)$, the set of stabilizing gains for $(TAT^{-1}, Tb, c^\top  T^{-1})$ will coincide with $\ca H$ and $\ca S$ for the triple $(A, b, c^\top)$. This observation is useful as the characteristic polynomial of the closed-loop system for $(A^{\flat}, b^{\flat}, c^\top)$ admits a rather simple form. 
%Namely, if $\mathbf r_n(A^{\flat}) = (a_0, a_1, \dots, a_n)$, where $\mathbf r_j(A)$ denotes the $j^{\text{th}}$ row of $A$, the characteristic polynomial of the closed-loop system for a feedback gain $k \in \bb R$ is given by
%\[  p(\lambda, A-kbc^\top) = \lambda^n + (a_{n-1}-k c_{n-1}) \lambda^{n-1} + \dots + (a_0 - k c_0). \]
}
\end{remark}
\section{Properties of Hurwitz Stabilizing Feedback Gains}
\label{sec:hurwitz}
Consider again the continuous LTI SISO system~(\ref{eq:continuous_system})
in relation to the sets $\ca H$~\eqref{eq:continuous_H} and $\ca H_{\bf x}$~\eqref{eq:continuous_H_s}.
% In this section, we concern the continuous linear time-invariant system
% \begin{align*}
  % &\dot{x}(t) = Ax(t) + b u(t), \\
  % & y(t) = c^T x(t),
% \end{align*}
% where $A \in M_n(\bb R)$, $b, c \in \bb R^n$ and $u(t)$ is a scalar for every $t \in \bb R$. If state-feedback system is of interest, the dynamic is
% \begin{align*}
  % &\dot{x}(t)=Ax(t)+b \tilde{u}(t),
% \end{align*}
% where $A \in M_n(\bb R)$, $b \in \bb R^n$ and $\tilde{u}(t) \in \bb R^n$ for every $t \in \bb R$. \par
The diffeomorphism between the set of stabilizing feedback gains for a system and its {controllable canonical form} (Observation~\ref{obs:feedback_equiv}) allows us
to prove a number of topological and metrical properties for $\ca H$ and $\ca H_{\bf x}$.
In fact, Observation~\ref{obs:feedback_equiv} leads to the following properties through 
the application of theory of polynomials: 
(a) $\ca H$ is open in the Euclidean topology for both state-feedback and output-feedback systems,
(b) $\ca H_{\bf x}$ is unbounded but $\ca H$ can be either bounded or unbounded,
(c) the sets $\ca H$ and $\ca H_{\bf x}$ are both convex when the corresponding system has order two,
(d) $\ca H_{\bf x}$ is connected and $\ca H$ can have at most $\ceil{\frac{n}{2}}$ connected components. 
We now provide the proofs for these observations.
\begin{lemma}
  \label{lemma:open_cont}
 The set $\ca H$ is open in $\bb R$ and $\ca H_{\bf x}$ is open in $\bb R^n$.
\end{lemma}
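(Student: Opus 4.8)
The plan is to realize $\ca H$ (and likewise $\ca H_{\bf x}$) as the preimage of an open set under a continuous map, so that openness follows at once from continuity. The natural map factors through two pieces: the gain-to-coefficient map, which is polynomial, and the coefficient-to-root map, which is the homeomorphism $\hat h$ of Corollary~\ref{cor:real_zeros}.

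First I would treat the output-feedback case. For each $k \in \bb R$ the closed-loop matrix $A - kbc^\top$ depends affinely on $k$, so the coefficients of its characteristic polynomial $p(\lambda, A-kbc^\top)$ are polynomials in $k$. Collecting them defines a manifestly continuous map $\phi \colon \bb R \to \bb R^n$, $k \mapsto (a_0(k), \dots, a_{n-1}(k))$. Composing with $\hat h \colon \bb R^n \to \bb C^n_*/S_n$ gives a continuous map $\Psi = \hat h \circ \phi \colon \bb R \to \bb C^n_*/S_n$ sending each $k$ to the unordered root set $\ca Z_{p(\lambda, A-kbc^\top)}$.

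Next I would check that the stability condition carves out an open subset of the quotient. The set $\bb H_-^n = \{z \in \bb C^n : \text{\bf Re}(z_j) < 0 \text{ for all } j\}$ is open in $\bb C^n$ and invariant under the coordinate-permuting $S_n$-action; hence $\pi^{-1}(\pi(\bb H_-^n)) = \bb H_-^n$ is open, which by the definition of the quotient topology means $\pi(\bb H_-^n)$ is open in $\bb C^n/S_n$. A gain $k$ lies in $\ca H$ precisely when $\Psi(k) = \ca Z_{p(\lambda, A-kbc^\top)}$ belongs to $\pi(\bb H_-^n)$. Therefore $\ca H = \Psi^{-1}(\pi(\bb H_-^n))$ is the preimage of an open set under the continuous map $\Psi$, and so is open in $\bb R$.

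Finally, the state-feedback case is structurally identical: replace $k \in \bb R$ by $k \in \bb R^n$ and the affine map $k \mapsto A - bk^\top$, so that the coefficient map $\phi \colon \bb R^n \to \bb R^n$ is again polynomial; the same composition with $\hat h$ and the same open condition give $\ca H_{\bf x} = \Psi^{-1}(\pi(\bb H_-^n))$, hence open in $\bb R^n$. The only point demanding a moment of care — and the closest thing to an obstacle — is the passage to the quotient: one must observe that the stability region is the image under $\pi$ of the open \emph{symmetric} set $\bb H_-^n$, so that $\pi^{-1}\pi(\bb H_-^n) = \bb H_-^n$ and openness transfers to the quotient; granting this, everything reduces to continuity of polynomial maps and the bicontinuity of $\hat h$.
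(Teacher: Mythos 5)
Your proof is correct, and it follows the same skeleton as the paper's: express the closed-loop characteristic coefficients as a continuous (polynomial) function of the gain, compose with the coefficient-to-root homeomorphism of Corollary~\ref{cor:real_zeros}, and realize $\ca H$ (resp.\ $\ca H_{\bf x}$) as the preimage of an open set. The only substantive difference is how the quotient-topology step is handled: the paper descends the \emph{function} $\max \circ\, \text{\bf Re}$ to a continuous map $\tilde{\upsilon}$ on $\bb C^n_*/S_n$ (universal property of the quotient for maps \emph{out} of it) and takes the preimage of $(-\infty,0)$, whereas you descend the \emph{set} $\bb H_-^n$, observing that it is $S_n$-invariant and open, so that $\pi^{-1}(\pi(\bb H_-^n)) = \bb H_-^n$ and hence $\pi(\bb H_-^n)$ is open in the quotient. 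These are dual and equally valid maneuvers; yours has the mild advantage of not needing any real-valued spectral-abscissa function, and it also bypasses the reduction to controllable canonical form, since the coefficients of $\det(\lambda I - (A - kbc^\top))$ are polynomial in $k$ for an arbitrary triplet. One small point to tidy: your $\Psi$ lands in $\bb C^n_*/S_n$, so strictly you want openness of $\pi(\bb H_-^n) \cap (\bb C^n_*/S_n)$ in the subspace topology — which is immediate, or equivalently you can regard $\Psi$ as a map into $\bb C^n/S_n$ via the (continuous) inclusion and take the preimage there.
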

\begin{proof}
  By Observation~\ref{obs:feedback_equiv}, without loss of generality, we shall assume that the system
$(A, b, c^\top)$ is in {the controllable canonical form}. 
  Let $a = (a_0, \dots, a_{n-1})$ be the last row of $A$ and $c = (c_0, \dots, c_{n-1})^\top$. 
  For any $k \in \bb R$, the characteristic polynomial of this system is given by,
  \begin{align*}
    p(\lambda, k) = \lambda^n + (a_{n-1}-k c_{n-1}) \lambda^{n-1} + \dots + (a_0 - k c_0).
  \end{align*}
  We note that for a fixed $k \in \bb R$,
  the map $\tilde{\upsilon} : \mathbb C^n_{*}/S_n \to \bb R$ given by
  \begin{align*}
    {\cal Z}_{p(\lambda,k)} \mapsto \max \{ \text{\bf Re}({\cal Z}_{p(\lambda, k)}) \},
  \end{align*}
  is continuous since $\upsilon \coloneq \max \circ \, \text{\bf Re}: \mathbb C^n_{*} \to \bb R$ is continuous and there is a unique continuous map $\tilde{\upsilon}: \bb C^n_{\*}/S_n \to \bb R$ such that $\upsilon = \tilde{\upsilon} \circ \pi$:
\begin{figure}[H]
        \centering
       \begin{tikzcd}
          \bb C^n \arrow[d, "\pi"] \arrow[rd, "\upsilon"] & \\
          \bb C^n_{\text{*}}/S_n \arrow[r, "\tilde{\upsilon}"] & \bb R;
       \end{tikzcd}
      \end{figure}
this follows from the properties of the quotient topology (see Theorem 3.73 in~\cite{lee2010introduction}).
Thus the map,
  \begin{align*}
    g: k \mapsto {\cal Z}_{p(\lambda,k)} \mapsto \max \{ \text{\bf Re}({\cal Z}_{p(\lambda,k)})\},
  \end{align*}
  is continuous as it is a composition of continuous maps. 
  Thus as the pre-image of the open interval $(-\infty, 0)$ under the continuous map $g$, the set
  $\ca H$ is an open subset of $\bb R$ and as such, $\ca H$ is a union of disjoint open 
  intervals.\footnote{That is, it can always be represented as such.}
  %  \par
 % Recall that the set $\ca H_{\bf x}$ is a subset of $\bb R^n$. 
  Following a similar argument, the map $g_{\bf x}: \bb R^n \to \bb R$ defined via the composition,
    \begin{align*}
    g_{\bf x} : k \mapsto {\cal Z}_{p(\lambda,k)} \mapsto \max \{ \text{\bf Re}({\cal Z}_{p(\lambda,k)}) \}, 
  \end{align*}
  is continuous and 
  %$\ca H_{\bf x} = g_{\bf x}^{-1}((-\infty, 0))$. 
  hence $\ca H_{\bf x}$ is open in $\bb R^n$.
\end{proof}
We shall point out another favorable property of $\ca H_{\bf x}$, namely that it is {\em regular} open. In other words, the closure of the set of Hurwitz stabilizing controllers is the set of marginally stabilizing controllers and $\ca H_{\bf x}$ is precisely the interior of the set of marginally stabilizing controllers.
\begin{lemma}
  \label{lemma:regular_open}
  Let $$\ca B_{\bf x} = \{k \in \bb R^n: \max \{ \text{\bf Re}({\cal Z}_{p(\lambda, A-bk^\top)}) \} = 0\}.$$
  Then $\ca B_{\bf x}$ is the boundary of $\ca H_{\bf x}$ (that is, $\partial \ca H_{\bf x}$) and the closure
  of ${\ca H}$ is,
  \begin{equation}
 \bar{\ca H}_{\bf x} = \ca H_{\bf x} \cup \ca B_{\bf x} = \{k \in \bb R^n: \max \{ \text{\bf Re}({\cal Z}_{p(\lambda, A-bk^\top)}) \} \le 0\}. \label{closure-of-H}
  \end{equation}
 \end{lemma}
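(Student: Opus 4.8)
The plan is to run everything through the continuous map
$g_{\bf x}\colon \bb R^n \to \bb R$, $k \mapsto \max\{\text{\bf Re}({\cal Z}_{p(\lambda, A-bk^\top)})\}$, whose continuity was established in Lemma~\ref{lemma:open_cont}, together with the fact that full state feedback on a controllable pair assigns the closed-loop characteristic polynomial freely. By Observation~\ref{obs:feedback_equiv} and the remark following it, $k \mapsto T^\top k$ is a diffeomorphism of $\bb R^n$ and hence preserves closures and boundaries, so I would assume without loss of generality that $(A,b)$ is in controllable canonical form. In that form the coefficient vector of $p(\lambda, k)$ is an affine bijective image of $k$, so (in the spirit of Corollary~\ref{cor:real_zeros}) every monic real degree-$n$ polynomial equals $p(\lambda, k)$ for a unique $k \in \bb R^n$. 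Note that $\ca H_{\bf x} = g_{\bf x}^{-1}((-\infty,0))$ and $\ca B_{\bf x} = g_{\bf x}^{-1}(\{0\})$.

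First I would dispatch the easy inclusion. Since $g_{\bf x}$ is continuous, the set $\{k : g_{\bf x}(k) \le 0\} = g_{\bf x}^{-1}((-\infty,0])$ is closed and evidently contains $\ca H_{\bf x}$; therefore $\bar{\ca H}_{\bf x} \subseteq \{k : g_{\bf x}(k)\le 0\}$, which is one half of~\eqref{closure-of-H}.

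The hard part will be the reverse inclusion, namely showing that every $k_0$ with $g_{\bf x}(k_0)=0$ lies in $\bar{\ca H}_{\bf x}$ (points with $g_{\bf x}(k_0)<0$ are already in $\ca H_{\bf x}$). My approach is a root-shifting perturbation: if $\{\lambda_1,\dots,\lambda_n\} = {\cal Z}_{p(\lambda,k_0)}$ satisfies $\max_i \text{\bf Re}(\lambda_i)=0$, then for $\epsilon>0$ the shifted polynomial $q_\epsilon(\lambda) \coloneqq p(\lambda+\epsilon, k_0)$ is again monic and real, with roots $\{\lambda_i-\epsilon\}$, so that $\max_i \text{\bf Re}(\lambda_i-\epsilon) = -\epsilon < 0$. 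Because state feedback in canonical form realizes any such polynomial, there is a unique $k_\epsilon$ with $p(\lambda,k_\epsilon)=q_\epsilon(\lambda)$; the coefficients of $q_\epsilon$ are polynomial in $\epsilon$ and reduce to those of $p(\lambda,k_0)$ at $\epsilon=0$, so continuity of the inverse affine coefficient map gives $k_\epsilon \to k_0$ as $\epsilon \to 0^+$. Since $g_{\bf x}(k_\epsilon) = -\epsilon < 0$, each $k_\epsilon \in \ca H_{\bf x}$, whence $k_0 \in \bar{\ca H}_{\bf x}$. I expect this to be the only genuine obstacle, and it is overcome precisely because controllability lets the perturbed spectrum be produced by an honest feedback gain rather than an abstract limit of polynomials.

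Finally I would extract the boundary. As $\ca H_{\bf x}$ is open (Lemma~\ref{lemma:open_cont}), $\intr(\ca H_{\bf x}) = \ca H_{\bf x}$, so $\partial \ca H_{\bf x} = \bar{\ca H}_{\bf x} \setminus \ca H_{\bf x} = \{g_{\bf x}\le 0\}\setminus\{g_{\bf x}<0\} = \{g_{\bf x}=0\} = \ca B_{\bf x}$, as claimed. I would also remark that the identical construction with $\epsilon<0$ (shifting the roots to the right) produces points $k_\epsilon$ with $g_{\bf x}(k_\epsilon) = -\epsilon > 0$ converging to any boundary point $k_0$; hence no point of $\ca B_{\bf x}$ lies in $\intr(\bar{\ca H}_{\bf x})$, giving the stronger regular-open conclusion $\ca H_{\bf x} = \intr(\bar{\ca H}_{\bf x})$ promised in the preceding paragraph.
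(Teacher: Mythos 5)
Your proof is correct, and at its core it rests on the same idea as the paper's proof: at a gain $k_0 \in \ca B_{\bf x}$, perturb the marginally stable spectrum slightly to the left (respectively, right) to exhibit nearby stabilizing (respectively, non-stabilizing) gains. The difference is in execution, and your version is the more rigorous one. The paper simply asserts that ``we can perturb the marginally stable eigenvalues'' and that this yields sequences $\{k_n\} \subseteq \ca H_{\bf x}$ and $\{k_n'\} \subseteq \ca H_{\bf x}^c$ converging to $k_0$, leaving implicit both how a root perturbation is realized by an actual feedback gain and why the perturbed root set remains the zero set of a \emph{real} polynomial (perturbing a single complex marginal eigenvalue must be mirrored on its conjugate). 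Your uniform shift $q_\epsilon(\lambda) = p(\lambda+\epsilon, k_0)$ resolves both points at once: realness and the direction of the shift of every root are automatic, the coefficients depend polynomially (hence continuously) on $\epsilon$, and the affine bijection between gains and coefficients in controllable canonical form (via Observation~\ref{obs:feedback_equiv}) lifts $q_\epsilon$ back to a gain $k_\epsilon \to k_0$. You also organize the deduction differently, obtaining the easy inclusion from closedness of $g_{\bf x}^{-1}((-\infty,0])$ and the boundary identity $\partial \ca H_{\bf x} = \ca B_{\bf x}$ from openness of $\ca H_{\bf x}$ (Lemma~\ref{lemma:open_cont}), rather than the paper's direct two-sided membership argument; and your $\epsilon<0$ variant supplies a complete proof of the regular-openness statement that the paper, after the lemma, declares ``immediate to deduce.'' In short: same underlying perturbation idea, but your route is a constructive implementation that fills the gaps the paper's sketch leaves open.
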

\begin{proof}
  Note that for any $k \in \ca B_{\bf x}$, we can perturb the marginally stable eigenvalues to be stable, i.e., having negative real parts. This means that there exists a sequence $\{k_n\} \subseteq \ca H_{\bf x}$ such that $k_n \to k$. On the other hand, we may as well perturb the marginally stable eigenvalues to become unstable, i.e., having positive real parts, suggesting that there is a sequence $\{k_n'\} \subseteq \ca H_{\bf x}^c$ such that $k_n' \to k$. Hence $\partial H_{\bf x} = \ca B_{\bf x}$ and (\ref{closure-of-H}) follows.
 \end{proof}
It is also immediate to deduce that the interior of $\bar{\ca H}_{\bf x}$,
namely $\left(\bar{\ca H}_{\bf x}\right)^{\circ},$ is $\ca H_{\bf x}$; thus $\ca H_{\bf x}$
is regular open.

Let us now examine the boundedness of the sets $\ca H$ and $\ca H_{\bf x}$.
%By Pole Shifting Theorem~\cite{rissanen1960control}, it is rather clear that $\ca H_s$ is unbounded.
\begin{observation}
The set $\ca H_{\bf x}$ is unbounded. 
\end{observation}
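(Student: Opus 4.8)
The plan is to reduce to the controllable canonical form and then exploit the affine correspondence between a state-feedback gain and the coefficients of the resulting closed-loop characteristic polynomial. By Observation~\ref{obs:feedback_equiv} we have $\ca H_{\bf x} = T^\top \ca H^{\flat}_{\bf x}$ for some $T \in \bb{GL}_n(\bb R)$, and a linear isomorphism carries unbounded sets to unbounded sets, so it suffices to prove that $\ca H^{\flat}_{\bf x}$ is unbounded. Hence I would assume from the start that $(A, b)$ is in controllable canonical form, with $a = (a_0, \dots, a_{n-1})$ the last row of $A$ and $b = (0, \dots, 0, 1)^\top$.

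In this form $bk^\top$ affects only the last row, so $A - bk^\top$ has last row $(a_0 - k_1, \dots, a_{n-1} - k_n)$ and, following the convention of Lemma~\ref{lemma:open_cont}, closed-loop characteristic polynomial
\begin{align*}
  p(\lambda, k) = \lambda^n + (a_{n-1} - k_n)\lambda^{n-1} + \dots + (a_0 - k_1).
\end{align*}
Thus the map $\Phi : k \mapsto (a_0 - k_1, \dots, a_{n-1} - k_n)$ sending a gain to the coefficient vector of $p(\lambda, k)$ is an affine bijection of $\bb R^n$ (its linear part is $-I$ up to reindexing), and $\ca H^{\flat}_{\bf x} = \Phi^{-1}(\ca C)$, where $\ca C \subset \bb R^n$ is the set of coefficient vectors of monic degree-$n$ Hurwitz polynomials. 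Since $\Phi^{-1}$ is affine with invertible linear part, $\ca H^{\flat}_{\bf x}$ is unbounded if and only if $\ca C$ is.

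To see that $\ca C$ is unbounded I would exhibit a single unbounded curve inside it: the family $p_s(\lambda) = (\lambda + s)^n$ with $s > 0$ consists of Hurwitz polynomials (all roots equal $-s < 0$), yet its constant term $s^n \to \infty$ as $s \to \infty$, so the associated coefficient vectors escape every ball. Hence $\ca C$, and therefore $\ca H^{\flat}_{\bf x}$ and $\ca H_{\bf x}$, is unbounded. The argument is largely routine; the only step genuinely needing justification is that $\Phi$ is a bijection, which is precisely the SISO pole-placement property of the controllable canonical form (equivalently, Ackermann's formula). Once that is granted, the unboundedness of the Hurwitz region in coefficient space — witnessed by letting the roots march off to $-\infty$ — transfers back through the affine isomorphism to $\ca H_{\bf x}$.
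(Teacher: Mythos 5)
Your proof is correct and takes essentially the same route as the paper: reduce to controllable canonical form via Observation~\ref{obs:feedback_equiv}, place all closed-loop poles at $-s$, and observe that the resulting constant coefficient $s^n \to \infty$ forces the gains to leave every bounded set. The only cosmetic difference is that you justify the pole-placement step by exhibiting the explicit affine bijection between gains and coefficient vectors in canonical form, whereas the paper cites the pole-shifting theorem and then extracts the same constant-coefficient blow-up via Vieta's formula.
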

\begin{proof}
  By Observation~\ref{obs:feedback_equiv}, it suffices to assume that the pair $(A, b)$ is in {controllable canonical form}. For any $k = (k_0, \dots, k_n)^\top \in \bb R^n$, the characteristic polynomial of the 
  corresponding closed-loop system is,
\begin{align}
  p(\lambda,k) = \lambda^n + (a_{n-1}-k_{n-1}) \lambda^{n-1} + \dots + (a_0 - k_0). \label{poly-k}
\end{align}
  By Pole-shifting theorem~\cite{rissanen1960control}, for every $n$-tuple $(-j, -j, \dots, -j)$, with $j \in \bb R$, there is some $k^j \in \bb R^n$ such that $k^j \in \ca H_{\bf x}$ and the zeros of $p(\lambda, A-b(k^j)^\top)$ are exactly $(-j, \dots, -j)$. But $a_0 - k_0^j = (-j)^n$ by Vieta's formula.\footnote{The indexing of  $k^j$ is with reference to (\ref{poly-k}).} Hence $\ca H_{\bf x}$ is not bounded.
  \end{proof}
For an output feedback system, the set $\ca H$ can either be bounded or unbounded depending on the properties of the system $(A, b, c^\top)$; this is demonstrated in the following example.
\begin{example}
  Let the triplet $(A^{\flat}, b^{\flat}, c^\top)$, in controllable canonical form, be controllable.\footnote{Of course, the controllability of the triplet is independent of $c$.} Let $a = (a_{0}, a_{1}, \dots, a_{n-1})$ be the last row of $A^{\flat}$. Then,
  \begin{enumerate}
    \item If for some $i, j \in \{0, \dots, n-1\}$, $c_i > 0$ and $c_j < 0$, then $\ca H$ is bounded.
    \item Suppose that $n=4$ and the entries of $c$ are positive. If $c_3 c_2 c_1 < c_3^2 c_0$ then $\ca H$ is unbounded; when $c_3 c_2 c_1 > c_3^2 c_0$, $\ca H$ is bounded. 
  \end{enumerate}
The assertions in this example are consequences of the Routh-Hurwitz Criterion.
If $k$ is stabilizing, then 
\begin{align*}
  a_{n-1} - k c_{n-1} > 0, \dots, a_0 - k c_0 > 0.
\end{align*}
If there exists a $k$ that satisfies the above inequalities, and for some $i,j$, 
$c_i > 0$ and $c_j < 0$, then {$k$ satisfies,} 
\begin{align*}
  \frac{a_j}{c_j} < k < \frac{a_i}{c_i}.
\end{align*}
Hence, $\ca H$ must be bounded. For part $(b)$, 
the Routh-Hurwitz Criterion states that,
\begin{align*}
  &a_3 - k c_3 > 0, a_2 - k c_2 > 0, a_1 - kc_1 >0, a_0 - k c_0 > 0, \\
  &(a_3-kc_3)(a_2-kc_2)(a_1 - kc_1) \\
  &\quad > (a_1 -kc_1)^2 + (a_3-kc_3)^2(a_0-kc_0).
\end{align*}
%Hence, $k < \alpha$ for some $\alpha \in \bb R$. 
We note that for sufficiently negative $k$, the last inequality holds since $c_1c_2c_3 >  c_3^2 c_0$; hence $\ca H$ is not bounded. On the other hand, if $c_1c_2c_3 < c_3^2 c_0$, then $k$ must be bounded from below; thus $\ca H$ is bounded. 
\end{example}
We further make an observation on a necessary condition for the unboundedness of the set $\ca H$; see also~\cite{feng2019}.
\begin{observation}
  \label{obs:condition_unbounded}
If $(A, b, c^\top)$ is controllable and observable, a necessary condition for $\ca H$ to be unbounded is 
that the nonzero entries of $c$ have the same sign. Moreover, if $\ca H$ is unbounded, then $\ca H$ must only include one of the two intervals: $(-\infty, M)$ and $(M', \infty)$ for some $M, M' \in \bb R$.
\end{observation}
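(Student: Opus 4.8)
The plan is to reduce to the controllable canonical form and then exploit the elementary necessary condition for stability already isolated in the preceding Example, namely that every coefficient of a monic real Hurwitz polynomial is strictly positive. By the Remark following Observation~\ref{obs:feedback_equiv} I may assume that $(A,b)$ is in controllable canonical form, so that the closed-loop characteristic polynomial is $p(\lambda,k)=\lambda^n+(a_{n-1}-kc_{n-1})\lambda^{n-1}+\dots+(a_0-kc_0)$, where $a=(a_0,\dots,a_{n-1})$ is the last row of $A$. If $k\in\ca H$ then $p(\cdot,k)$ is Hurwitz; since a real Hurwitz polynomial factors into linear factors $(\lambda+r)$ with $r>0$ and quadratic factors $(\lambda^2+\beta\lambda+\gamma)$ with $\beta,\gamma>0$, and products of polynomials with positive coefficients have positive coefficients, every coefficient is positive, i.e. $a_i-kc_i>0$ for each $i\in\{0,\dots,n-1\}$. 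Note also that observability forces $c\neq 0$, so at least one $c_i$ is nonzero.

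For the first assertion I would argue by contraposition. Suppose $c$ has entries of opposite sign, say $c_i>0$ and $c_j<0$. The constraints $a_i-kc_i>0$ and $a_j-kc_j>0$ then read $k<a_i/c_i$ and $k>a_j/c_j$, so every $k\in\ca H$ satisfies $a_j/c_j<k<a_i/c_i$. Hence $\ca H$ is contained in a bounded interval and is therefore bounded. Contrapositively, if $\ca H$ is unbounded then $c$ cannot have entries of opposite sign, so its nonzero entries share a common sign, which is exactly the claimed necessary condition.

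For the second assertion I would then feed this sign information back into the positivity constraints. Assume $\ca H$ is unbounded, so by the first part all nonzero entries of $c$ have a common sign. If they are all positive, then for any index $i$ with $c_i>0$ the constraint $a_i-kc_i>0$ gives $k<a_i/c_i$, whence $\ca H\subseteq(-\infty,M)$ with $M=\min\{a_i/c_i:c_i>0\}$; thus $\ca H$ is bounded above and, being unbounded, can only be unbounded toward $-\infty$. By the symmetric computation, if the nonzero entries are all negative then $\ca H\subseteq(M',\infty)$ with $M'=\max\{a_i/c_i:c_i<0\}$, so $\ca H$ can only be unbounded toward $+\infty$. In either case the unboundedness is one-sided, so $\ca H$ meets at most one of the two rays $(-\infty,M)$ and $(M',\infty)$ in an unbounded set. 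Since the Routh--Hurwitz inequalities defining $\ca H$ are polynomial in $k$, the set $\ca H$ is semialgebraic and hence a finite union of the disjoint open intervals of Lemma~\ref{lemma:open_cont}; for such a set one-sided unboundedness is equivalent to containing the corresponding ray beyond some point, which is the stated conclusion.

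I expect the only delicate point to be the correct invocation of the positivity-of-coefficients condition: it is genuinely only \emph{necessary} (not sufficient) for Hurwitz stability, but that is precisely what is needed here, since it is used solely to bound the feasible set of gains $k$ from one or both sides. Everything else is bookkeeping on the signs of the $c_i$, and no appeal to the finer root-locus or interlacing structure is required for this observation.
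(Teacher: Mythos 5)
Your proof is correct and follows essentially the same route as the paper's: pass to the controllable canonical form, invoke the fact that a monic Hurwitz polynomial has all coefficients positive to get $a_i - kc_i > 0$, and read off the sign constraints on $c$ to bound $k$ on one or both sides. Your added details (the factorization argument for positivity of coefficients, and the semialgebraic/finite-union-of-intervals remark ensuring one-sided unboundedness means containing a ray) only make explicit what the paper leaves implicit.
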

\begin{proof}
  If a monic polynomial is Hurwitz stable, all of its coefficients are positive (Theorem $2.4$ in~\cite{zabczyk2009mathematical}). Hence, we have $a_j - k c_j > 0$ for every $j$. If the nonzero entries of $c$ do not have the same sign, $k$ should be bounded. Since the system is observable, $c \neq 0$. 
Thereby, either $k < M$ or $k > M'$, for some $M, M' \in \bb R$.
\end{proof}
We now make a few observations on the convexity of the sets $\ca H$ and $\ca H_{\bf x}$;
needless to say, these observations have direct algorithmic implications. 
It is known that a convex combination of stable polynomials is not necessarily convex~\cite{bialas1985convex}. However, a convex combination of stable monic polynomials with degree $2$ is convex; this follows from the Routh-Hurwitz criterion. 
%In fact, we observe that the sets $\ca H$ and $\ca H_{\bf x}$ are convex if the system has two states, i.e., $n=2$.
\begin{observation}
  \label{obs:state_convex}
  For the state feedback system $(A,b)$, the set
  $\ca H_{\bf x}$ is convex if $n=2$.
\end{observation}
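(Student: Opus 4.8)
The plan is to reduce to the controllable canonical form and then invoke the degree-two Routh-Hurwitz criterion directly. By Observation~\ref{obs:feedback_equiv}, the map $k \mapsto T^\top k$ carries $\ca H^{\flat}_{\bf x}$ onto $\ca H_{\bf x}$; since $T^\top \in \bb{GL}_n(\bb R)$ is an invertible linear map and such maps send convex sets to convex sets, it suffices to establish that $\ca H^{\flat}_{\bf x}$ is convex.

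First I would write out the closed-loop characteristic polynomial in canonical form. With $k = (k_0, k_1)^\top$ and $a = (a_0, a_1)$ the last row of $A^{\flat}$, equation~(\ref{poly-k}) specializes to
\[
  p(\lambda, k) = \lambda^2 + (a_1 - k_1)\lambda + (a_0 - k_0).
\]
Next, I would apply the Routh-Hurwitz criterion in the degree-two case: a monic quadratic $\lambda^2 + \beta_1 \lambda + \beta_0$ is Hurwitz stable if and only if $\beta_1 > 0$ and $\beta_0 > 0$. Translating this into a condition on the feedback gain, $k \in \ca H^{\flat}_{\bf x}$ exactly when $k_1 < a_1$ and $k_0 < a_0$. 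Thus $\ca H^{\flat}_{\bf x}$ is the intersection of the two open half-planes $\{k : k_1 < a_1\}$ and $\{k : k_0 < a_0\}$, which is manifestly convex (indeed an open quadrant). Applying the linear bijection $T^\top$ then yields convexity of $\ca H_{\bf x}$.

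There is no genuine obstacle in this argument: the whole proof rests on the elementary fact that the Hurwitz region of a monic quadratic, expressed in its coefficients, is a product of half-lines, together with the preservation of convexity under invertible linear maps. The only point meriting a moment of care is verifying that the reduction afforded by Observation~\ref{obs:feedback_equiv} is effected by a \emph{linear} (hence convexity-preserving) map rather than merely a homeomorphism; this holds precisely because $T^\top$ is linear and invertible. I would also remark that the same reasoning fails for $n \ge 3$, since the higher-degree Routh-Hurwitz conditions involve genuinely nonlinear (bilinear and higher) inequalities among the coefficients, consistent with the known nonconvexity of the stability region in those cases.
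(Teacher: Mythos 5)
Your proof is correct and takes essentially the same approach as the paper: reduce to the controllable canonical form and apply the degree-two Routh--Hurwitz criterion (a monic quadratic is Hurwitz if and only if both coefficients are positive). The only difference is cosmetic --- the paper checks that a convex combination of two stabilizing gains yields a closed-loop polynomial with positive coefficients, whereas you describe $\ca H^{\flat}_{\bf x}$ explicitly as the open quadrant $\{k : k_1 < a_1,\ k_0 < a_0\}$, which is a slightly sharper statement of the same fact.
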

\begin{proof}
  It suffices to show this for $(A^{\flat}, b^{\flat})$. Let $k = (k_0, k_1)^\top$ and $k' = (k'_0, k'_1)^\top$ be two stabilizing feedback gains. Then the characteristic polynomial of the corresponding closed-loop systems are,
  \begin{align*}
    p_1(\lambda) = \lambda^2 + (a_1 - k_1) \lambda + (a_0 - k_0), \\
    p_2(\lambda) =\lambda^2 + (a_1 - k'_1) \lambda + (a_0 - k'_0). 
  \end{align*}
Note that $p_1$ and $p_2$ are stable if and only if the coefficients are positive.\footnote{For convexity analysis, this is in fact the key property for systems of order $2$.} Hence,
if ${\hat k} = (1-\delta)k + \delta k'$, for $\delta  \in (0,1),$ 
then $p_{\hat k}(x) = (1-\delta) p_1(x) + \delta p_2(x)$ and $p_{\hat k}$ is stable by the positivity of its coefficients.
\end{proof}
\begin{observation}
  For the output feedback system $(A, b, c^\top)$ with $n=2$, the set $\ca H$ is convex.
\end{observation}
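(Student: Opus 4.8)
The plan is to deduce the convexity of $\ca H \subseteq \bb R$ directly from the convexity of $\ca H_{\bf x}$ for order-two state-feedback systems, which was just established in Observation~\ref{obs:state_convex}. The bridge is Observation~\ref{obs:state-output-relation}: a scalar output-feedback gain $k$ belongs to $\ca H$ if and only if the associated state-feedback gain $kc \in \bb R^2$ belongs to $\ca H_{\bf x}$. Equivalently, $\ca H = L^{-1}(\ca H_{\bf x})$, where $L\colon \bb R \to \bb R^2$ is the linear map $L(k) = kc$. Since the preimage of a convex set under a linear (hence affine) map is again convex, and $\ca H_{\bf x}$ is convex for $n=2$, I would conclude at once that $\ca H$ is convex. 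Concretely, if $k_1, k_2 \in \ca H$ and $\delta \in (0,1)$, then $L\!\left((1-\delta)k_1 + \delta k_2\right) = (1-\delta)L(k_1) + \delta L(k_2)$ lies in $\ca H_{\bf x}$ by its convexity, so $(1-\delta)k_1 + \delta k_2 \in \ca H$.

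Before invoking this chain I would use Observation~\ref{obs:feedback_equiv} together with the following remark to assume, without loss of generality, that $(A,b)$ is in controllable canonical form (with $c$ transported accordingly), so that all of the relevant polynomial data is explicit. A second, entirely self-contained route is also available and may be preferable for a scalar gain: in controllable canonical form the closed-loop characteristic polynomial is $p(\lambda, k) = \lambda^2 + (a_1 - k c_1)\lambda + (a_0 - k c_0)$ with $a = (a_0, a_1)$ the last row of $A$ and $c = (c_0, c_1)^\top$. By the Routh-Hurwitz criterion for degree two, $p(\lambda, k)$ is Hurwitz if and only if both coefficients are positive, i.e.\ $a_1 - k c_1 > 0$ and $a_0 - k c_0 > 0$. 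Each inequality is affine in the single variable $k$ and therefore carves out an interval (a half-line, all of $\bb R$, or the empty set); $\ca H$ is their intersection, and an intersection of intervals in $\bb R$ is an interval, hence convex.

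I do not anticipate a genuine obstacle here, since the order-two state-feedback convexity has already done the substantive work and the statement is essentially a corollary. The only points deserving care are the degenerate cases in the direct argument, where $c_0$ or $c_1$ vanishes and the corresponding inequality becomes vacuous or unsatisfiable, and making sure the canonical-form reduction transports $c$ consistently so that the output-feedback set is genuinely preserved. Both are routine, and I would favor the preimage-under-a-linear-map argument as the cleanest presentation.
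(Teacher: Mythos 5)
Your preferred argument is correct and is essentially the paper's own proof: the paper likewise combines Observation~\ref{obs:state-output-relation} with the $n=2$ convexity of $\ca H_{\bf x}$ from Observation~\ref{obs:state_convex}, phrasing the pullback along the linear map $k \mapsto kc$ as convexity of the intersection of $\ca H_{\bf x}$ with the span of $c$. Your self-contained Routh--Hurwitz alternative (two affine inequalities in $k$, so $\ca H$ is an intersection of intervals in $\bb R$) is also valid and more elementary, but it is not needed beyond the route you already favor.
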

\begin{proof}
  Recall that by Observation~\ref{obs:state-output-relation},
  \begin{align*}
    \ca H = \{ k \in \bb R: (kc) \cap \ca H_{\bf x} \neq \emptyset\}.
  \end{align*}
  Noting that $\text{\bf span}(c) \cap H_{\bf x}$ is a convex subset of $\bb R^2$; by Observation~\ref{obs:state_convex}, the proof follows.
\end{proof}
We can also use the Routh-Hurwitz stability criteria to show the nonconvexity of
the set of stabilizing feedback gains when $n > 2$. For example, let $n=3$ and consider the {controllable canonical form} $(A^{\flat}, b^{\flat})$ with the last row of $A^{\flat}$ set to $0$. Then the stabilizing feedback gain is parametrized by three parameters $k = (k_1, k_2, k_3)$ and the characteristic polynomial of $A-bk^\top$ is 
\begin{align*}
  p(\lambda, A-bk^\top) = \lambda^3 + k_3 \lambda^{2}+ k_2 \lambda + k_1.
\end{align*}
The following example essentially shows that convex combinations of stable polynomials are not necessarily stable.%
%{\color{blue}
\begin{example}
  Consider the system, 
\begin{align*}
  A = 
        \begin{pmatrix}
          0 & 1 & 0 \\
          0 & 0 & 1 \\
          0 & 0 & 0
        \end{pmatrix}, \; b=\begin{pmatrix}
          0 \\
          0 \\
          1
        \end{pmatrix}.
\end{align*}
For $k_1 = (-24, -5, -5)^{\top}$ and $k_2 = (-0.9, -1, -1)^{\top}$, the characteristic
  polynomials of the corresponding closed-loop systems are given by
%\begin{align*}
%  p_1(\lambda) = \lambda^3 + \lambda^2 + 10 \lambda + 9, \; \mbox{and} \; 
%  p_2(\lambda) = \lambda^3 + \lambda^2 + \lambda + 0.9.
%\end{align*}
$  p_1(\lambda) = \lambda^3 +5 \lambda^2 + 5 \lambda + 24$ and $p_2(\lambda) = \lambda^3 + \lambda^2 + \lambda + 0.9.$
Both polynomials are stable and as such, $k_1, k_2 \in \ca H_{\bf x}$; however, $k' = 0.5 k_1 + 0.5 k_2$ yields an unstable characteristic polynomial $p(\lambda)
%= 0.5 p_1(\lambda)+0.5 p_2(\lambda) 
= \lambda^3 +3  \lambda^2 + 3 \lambda + 12.45.$ 
%for the closed loop system.
\end{example}
%----
\subsection{Connectedness properties of $\ca H_{\bf x}$ and $\ca H$}
We will now delve into connectedness of the 
sets $\ca H$ and $\ca H_{\bf x}$, requiring more delicate arguments as compared with
their boundedness and convexity properties.
For the state feedback system,
\begin{align*}
  \dot{\bf x}(t) = A {\bf x}(t) + b{\bf u}(t),
\end{align*}
by Corollary~\ref{cor:real_zeros}, $\ca H_{\bf x}$ is connected in $\bb R^n$. 
We now show that this set is in fact contractible, i.e., it can be continuously deformed to a point~\cite{lee2010introduction}.
\begin{lemma}
  \label{lemma:continuous_state_connected}
When $(A, b)$ is controllable, the set of stabilizing feedback controllers $\ca H_{\bf x} \subseteq \bb R^n$
  is connected and contractible.
\end{lemma}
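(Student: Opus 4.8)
The plan is to pass from the space of feedback gains to the space of roots, where the relevant set becomes manifestly contractible. By Observation~\ref{obs:feedback_equiv} we may assume $(A,b)$ is in controllable canonical form, so that the closed-loop characteristic polynomial is $p(\lambda,k) = \lambda^n + (a_{n-1}-k_{n-1})\lambda^{n-1} + \dots + (a_0 - k_0)$. The map sending $k$ to the coefficient vector $(a_0 - k_0, \dots, a_{n-1} - k_{n-1})$ is an affine diffeomorphism of $\bb R^n$, so $\ca H_{\bf x}$ is homeomorphic to the set of real monic Hurwitz polynomials of degree $n$, viewed inside coefficient space. It therefore suffices to contract that set, since contractibility is a topological invariant and in particular implies connectedness.

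First I would invoke Corollary~\ref{cor:real_zeros}: the homeomorphism $\hat{h} : \bb R^n \to \bb C^n_{*}/S_n$ carries the Hurwitz set onto $\pi(\bb H^{n}_{-} \cap \bb C^n_{*})$, the set of conjugation-symmetric $n$-tuples of points in the open left half-plane, taken modulo permutation. Because $\hat{h}$ is a homeomorphism, $\ca H_{\bf x}$ is contractible if and only if this root-configuration set is, and the problem reduces to exhibiting a contraction in root space.

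Second, I would write down an explicit contraction. For $t \in [0,1]$ define $F_t$ on $\bb C^n$ by applying entrywise the affine map $z \mapsto (1-t)z - t$, which slides every root along the straight segment toward $-1$. Since $\bb H_{-}$ is convex and $-1 \in \bb H_{-}$, each segment stays in $\bb H_{-}$, so $F_t$ preserves $\bb H^{n}_{-}$; since the contraction center $-1$ is real, $\overline{(1-t)z - t} = (1-t)\bar{z} - t$, so $F_t$ commutes with entrywise conjugation and preserves $\bb C^n_{*}$. As $F_t$ is applied coordinatewise it is $S_n$-equivariant, hence descends to a continuous map on $\bb C^n_{*}/S_n$. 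At $t=0$ this descended homotopy is the identity and at $t=1$ it collapses everything to the single class of $(-1,\dots,-1)$, i.e.\ the polynomial $(\lambda+1)^n$. Transporting through $\hat{h}$ and then back through the affine identification produces a contraction of $\ca H_{\bf x}$, and contractible spaces are connected.

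The main obstacle is the passage to the quotient: one must verify that $F_t$ is genuinely well defined and continuous on $\bb C^n_{*}/S_n$, not merely on $\bb C^n$. This is handled by equivariance of the entrywise map together with the universal property of the quotient topology already exploited in Lemma~\ref{lemma:open_cont}, and it is precisely here that choosing a \emph{real} contraction center is essential, so that conjugation symmetry---and hence reality of the closed-loop polynomial---is maintained along the entire homotopy. Everything else is a routine check that the composite $[0,1]\times \ca H_{\bf x} \to \ca H_{\bf x}$ assembled from $\hat{h}$, $F_t$, and $\hat{h}^{-1}$ is continuous.
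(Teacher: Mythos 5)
Your proposal is correct and takes essentially the same route as the paper: both pass to root space via Corollary~\ref{cor:real_zeros}, contract the conjugation-symmetric left-half-plane configurations along straight segments to $(-1,\dots,-1)$, and pull the resulting homotopy back through the affine identification of gains with coefficients. Your write-up simply makes explicit the details the paper leaves implicit---convexity of $\bb H_{-}$, the need for a \emph{real} contraction center to preserve $\bb C^n_{*}$, and the $S_n$-equivariance needed for the homotopy to descend to the quotient.
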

\begin{proof}
  Let $\{\bb H_{-}^n\}_{*}$ denote the set of $n$-tuples $v \in \bb H_{-}^{n}$ invariant under (entry-wise) complex conjugation. We note that  $\{\bb H_{-}^n\}_{*}/S_n$ is connected in $\bb C^n/S_n$ by noting that every $v \in \{\bb H_{-}^n\}_{*}/S_n$ is path-connected.
  %\sout{and indeed contractible} to $(-1, -1, \dots, -1)^\top \in \bb C^n$ via $t \mapsto (1-t)v + t(-1, \dots, -1)^\top$. % 
{This path in fact defines a homotopy between the identity and the constant map $(-1, \dots, -1)$.}
  %
%  \sout{Hence, $\pi(\{\bb H_{-}^n\}_{*})$ is connected in $\bb C^n_{*}/S_n$ (recall that $\pi$ is the canonical projection $\pi: \bb C^n \to \bb C^n/S_n$; see \S~\ref{sec:notations}.)} 
  Mind that $$\ca H = (a_{n-1}, \dots, a_0)^\top - \hat{\sigma}^{-1}(\{\bb H_{-}^n\}_{*}/S_n),$$ i.e., an affine translation in $\bb R^n$. By Corollary~\ref{cor:real_zeros}, it now follows that $\ca H_{\bf x}$ is connected and contractible.
\end{proof}
\begin{remark}
  Note that even though the set $\ca H_{\bf x}$ in contractible to a point, it is not necessary star-convex. This is due to nonlinearity of the Vieta's map $\hat{\sigma}^{-1}$.
\end{remark}
%The set $\ca H$ of stabilizing output feedback gains 
%is not connected in general. As we show next, however, 
%its number of its connected components
%is bounded by $\ceil{\frac{n}{2}},$ where $n$ is the number of states. 
%-------
\subsubsection*{\bf Connected Components of $\ca H$}
\label{subsec:comonents_continuous}
 We now develop bounds on the number of connected components of $\ca H$ as it is not necessary connected. Lemma~\ref{lemma:2n_bound_continuous} provides a bound of $n$ and Lemma~\ref{lemma:n_bound_continuous} will tighten the bound to $\ceil{\frac{n}{2}}$. 
   We have chosen to present the two lemmas in sequence, 
since the proof of Lemma~\ref{lemma:2n_bound_continuous} is straightforward but tightening the bound to $\ceil{\frac{n}{2}}$ requires more delicate analysis.
%\begin{remark}
%We note that upper bounds of $n$ and $\ceil{\frac{n}{2}}$ for the number of connected components of the set of stabilizing output gains for SISO case have recently been reported, in~\cite{feng2018on}, \cite{feng2019}, respectively.
%%
%However, the proposed lines of reasoning in \cite{feng2018on,feng2019 }, do not take into account the possibility of having adjacent stabilizing connected components (see Remark~\ref{remark:continuous_tangent} and Figure~\ref{fig:continuous_tangent} to demonstrate this phenomena). 
%%
%%  Note a bound of $n$ appears in the unpublished manuscript~\cite{feng2018on} (Theorem $1$). However, the proof there is not correct (see Remark~\ref{remark:continuous_tangent} and Figure~\ref{fig:continuous_tangent} for details). 
%\end{remark}

Let us start our analysis by recalling the smooth dependence of simple roots
of a polynomial on its coefficients, subsequently used in
Lemma~\ref{lemma:n_bound_continuous}.
\begin{lemma}
  \label{lemma:smooth_dependence}
  Let $a_j: {\cal I} \to \bb R$ be $C^{\infty}$ functions for $j = \{1, \dots, n\}$, where ${\cal I} \subseteq \bb R$ is an open interval. If $t_0 \in {\cal I}$ and $\lambda_0$ is a simple root of the polynomial $f(\lambda) = a_n(t_0) \lambda^n + a_{n-1}(t_0) \lambda^{n-1} + \dots + a_0(t_0) \in \bb R[\lambda]$ with $t_0 \in {\cal I}$, then there exists a $C^{\infty}$ function $\eta: {\cal J} \to \bb C$ over an open interval ${\cal J} \subseteq {\cal I}$ such that $t_0 \in {\cal J}$, $\eta(t_0) = \lambda_0$ and $\eta(t)$ is a zero of $$f(\lambda,t)= a_n(t) \lambda^n + a_{n-1}(t) \lambda^{n-1} + \dots + a_0(t),$$ for every $t \in {\cal J}$.
\end{lemma}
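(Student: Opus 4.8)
The plan is to produce $\eta$ via the Implicit Function Theorem (IFT), treating the polynomial as a function of the pair $(t,\lambda)$ and exploiting the fact that a simple root is precisely one at which the $\lambda$-derivative is nonvanishing. First I would define the map $F \colon \mathcal{I} \times \mathbb{C} \to \mathbb{C}$ by $F(t,\lambda) = a_n(t)\lambda^n + \dots + a_0(t)$, so that $F(t_0,\lambda_0) = f(\lambda_0) = 0$ and the desired $\eta$ is exactly a local solution of $F(t,\eta(t)) = 0$ passing through $(t_0,\lambda_0)$.

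Next I would verify the two hypotheses of the IFT. For smoothness: $F$ is $C^\infty$ jointly in $(t,\lambda) \in \mathcal{I} \times \mathbb{R}^2$, being a finite sum of products of the $C^\infty$ functions $a_j(t)$ with the (real-analytic, hence $C^\infty$) monomials $\lambda^j$; here I identify $\mathbb{C} \cong \mathbb{R}^2$ via real and imaginary parts. For nondegeneracy: since $\lambda_0$ is a simple root, $\partial_\lambda F(t_0,\lambda_0) = f'(\lambda_0) \neq 0$.

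The one subtlety --- and the step I expect to matter most --- is that $t$ is a real variable while $\lambda$ ranges over $\mathbb{C}$, so there is no single holomorphic structure on the whole domain and I cannot invoke a purely complex IFT. Instead I would apply the real IFT to $F \colon \mathcal{I} \times \mathbb{R}^2 \to \mathbb{R}^2$, where the relevant requirement is invertibility of the partial Jacobian of $F$ in the two real $\lambda$-coordinates. Here holomorphy in $\lambda$ alone suffices: by the Cauchy--Riemann equations this $2 \times 2$ real Jacobian has determinant $|\partial_\lambda F(t_0,\lambda_0)|^2 = |f'(\lambda_0)|^2$, which is strictly positive because $\lambda_0$ is simple, so the Jacobian is invertible.

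With both hypotheses in hand, the smooth IFT yields an open interval $\mathcal{J} \subseteq \mathcal{I}$ with $t_0 \in \mathcal{J}$ and a unique $C^\infty$ map $\eta \colon \mathcal{J} \to \mathbb{C}$ satisfying $\eta(t_0) = \lambda_0$ and $F(t,\eta(t)) = 0$ for all $t \in \mathcal{J}$; unwinding the definition of $F$, this says precisely that $\eta(t)$ is a zero of $f(\lambda,t)$ for each $t \in \mathcal{J}$, which is the claim. The only remaining care is to note that the IFT returns a solution map of the same regularity class as $F$, namely $C^\infty$, which is the standard conclusion of the smooth version of the theorem.
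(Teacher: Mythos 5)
Your proposal is correct and follows essentially the same route as the paper, which also proves this lemma by a direct appeal to the Implicit Function Theorem using joint smoothness of $f(\lambda,t)$ and the fact that $f'(\lambda_0,t_0)\neq 0$ at a simple root. In fact your write-up is more careful than the paper's one-line proof, since you explicitly handle the mixed real/complex domain by passing to the real IFT on $\mathcal{I}\times\mathbb{R}^2$ and using the Cauchy--Riemann equations to get the Jacobian determinant $|f'(\lambda_0)|^2>0$ --- a detail the paper leaves implicit.
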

\begin{proof}
  This follows from Implicit Function Theorem~\cite{krantz2012implicit}. First note that $f(\lambda, t)$ is $C^{\infty}$ in both $\lambda$ and $t$; we note that at $(\lambda_0, t_0)$, $f'(\lambda_0, t_0) \neq 0$ since $\lambda_0$ is a simple zero. 
\end{proof}
We now prove an upper bound of $n$ on the number of connected components of $\ca H$.
\begin{lemma}
  \label{lemma:2n_bound_continuous}
  If $\ca H \neq \emptyset$, it has at most $n$ connected components.
\end{lemma}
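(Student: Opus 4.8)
The plan is to reduce to the controllable canonical form and then bound the finitely many values of $k$ at which the closed loop loses stability. By Observation~\ref{obs:feedback_equiv} and the remark following it we may assume $(A,b)$ is in controllable canonical form, so that the closed-loop characteristic polynomial has the affine form $p(\lambda,k)=p_A(\lambda)-k\,p_c(\lambda)$, with $p_A$ monic of degree $n$ and $\deg p_c\le n-1$. By Lemma~\ref{lemma:open_cont}, $\ca H$ is open, hence a disjoint union of open intervals, and since $\ca H=g^{-1}\big((-\infty,0)\big)$ for the continuous map $g(k)=\max\{\text{\bf Re}(\ca Z_{p(\lambda,k)})\}$, every finite endpoint of these intervals lies in $g^{-1}(0)$; that is, at each finite boundary point $p(\cdot,k)$ has a purely imaginary root $i\omega$ with $\omega\in\bb R$. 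It therefore suffices to bound the number of $k$ for which $p(\cdot,k)$ has a root on the imaginary axis.

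To count these, I would write the condition $p(i\omega,k)=0$ as $p_A(i\omega)=k\,p_c(i\omega)$ and separate it into even and odd parts. Writing $p_A(i\omega)=u(\omega)+i\,v(\omega)$ and $p_c(i\omega)=s(\omega)+i\,t(\omega)$ with $u,s$ even and $v,t$ odd real polynomials, a purely imaginary root forces both $u(\omega)=k\,s(\omega)$ and $v(\omega)=k\,t(\omega)$. Eliminating $k$ gives $\Phi(\omega):=u(\omega)t(\omega)-v(\omega)s(\omega)=0$, where $\Phi$ is an odd real polynomial of degree at most $2n-1$. Being odd, $\Phi$ vanishes at $\omega=0$ and its other real roots occur in $\pm$ pairs; writing $\Phi(\omega)=\omega\,\Psi(\omega^2)$ with $\deg\Psi\le n-1$ shows there are at most $n-1$ admissible frequencies $\omega>0$, hence at most $n$ values $\omega\ge 0$ in all. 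Using the hypothesis $\ca H\neq\emptyset$ to exclude a common imaginary root of $p_A$ and $p_c$ (which would force $\ca H=\emptyset$), each such $\omega$ determines $k$ uniquely via $k=p_A(i\omega)/p_c(i\omega)$. Thus there are at most $n$ values of $k$ at which $p(\cdot,k)$ has an imaginary root.

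Finally I would assemble the count. These at most $n$ boundary values partition $\bb R$ into at most $n+1$ open intervals; on the interior of each, no root lies on the imaginary axis, so the number of open right-half-plane roots is constant and each interval lies entirely inside or entirely outside $\ca H$. Since the boundary values themselves are only marginally stable (hence not in $\ca H$), each ``inside'' interval is a distinct connected component. To remove the spurious ``$+1$'' I would note that at least one of the two unbounded end-intervals is excluded: Hurwitz stability requires every coefficient $a_j-k c_j$ to be positive, and since $c\neq 0$ by observability these cannot all remain positive as $k\to+\infty$ and as $k\to-\infty$ simultaneously (cf.\ Observation~\ref{obs:condition_unbounded}). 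Hence at most $n$ intervals are ``inside,'' giving at most $n$ connected components. The main obstacle is the bookkeeping in the middle step: ruling out the degenerate frequencies where $p_c(i\omega)=0$ or where $\Phi\equiv 0$, and checking that distinct frequencies do not inflate the count of boundary $k$-values; the hypothesis $\ca H\neq\emptyset$ is precisely what tames these degeneracies.
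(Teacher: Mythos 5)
Your proposal is correct and takes essentially the same route as the paper's own proof: reduce to controllable canonical form, observe that boundary gains correspond to purely imaginary closed-loop roots, eliminate $k$ to obtain an odd real polynomial of degree at most $2n-1$ (your $\Phi = ut - vs$ is, up to sign, exactly the paper's $\phi(\beta) = \text{\bf Im}\left(\left(\lambda^n + s(\lambda)\right)r(\bar{\lambda})|_{\lambda = i\beta}\right)$), pair $\pm\omega$ to obtain at most $n$ critical gains and hence at most $n+1$ intervals, and discard one unbounded interval via positivity of the coefficients of a Hurwitz polynomial (Observation~\ref{obs:condition_unbounded}). The only cosmetic difference is that you justify the constancy of the unstable-root count on each interval by continuity of the roots where the paper invokes Cauchy's argument principle, and both arguments leave the degenerate case $\Phi \equiv 0$ to a remark --- which, as you correctly note, is excluded by the hypothesis $\ca H \neq \emptyset$.
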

\begin{proof}
  Note that in the SISO case, $\ca H$ is a subset of $\bb R$; furthermore, 
it suffices to show that $(A^{\flat}, b^{\flat}, c^\top)$ has at most $n$ connected components. 
 Recall that for $k \in \bb R$, the characteristic polynomial of a closed-loop system $A^{\flat}-k b^{\flat} c^\top$ is given by
  \begin{align*}
    p(\lambda, k) = \lambda^n + (a_{n-1} - k c_{n-1}) \lambda^{n-1} + \dots + a_0 - k c_0,
  \end{align*} where $a \coloneqq (a_0, a_1, \dots, a_{n-1})$ is the last row of $A^{\flat}$ and $c_j$'s are components of $c$.
  Let $\zeta: \mathbb R^n \to \ca P_n(\lambda)$ denote the natural bijection, assigning 
  coefficients to monic polynomials.  We denote by,
  \begin{align*}
    \Gamma = \{a \in \bb R^n: \zeta(a) \text{ has at least one zero on imaginary axis}\},
  \end{align*}
  and $\ell(k) = a - k c$, i.e., a parametrized line in $\bb R^n$.
  Suppose that $\ell (k)$ intersects $\Gamma$ for finitely many $k$'s, listed in an increasing order $k_1, \dots, k_q$; this fact will be proved subsequently. Let $n_{p(\lambda, k)}(\bb H_{-})$ denote the number of roots of $p(\lambda, k)$ in $\bb H_{-}$. Moreover, let $\gamma(r)$ be a counterclockwise oriented curve in $\bb C$ consisting the line segment $[-ir, ir]$ and the semicircle $S(r,\theta)= r e^{i \theta}$, with $\theta \in [\pi/2, 3\pi/2]$. For each $k \in (k_j, k_{j+1})$, we define 
\begin{align*}
m_r(k)= \frac{1}{2\pi i}\int_{\gamma(r)} \frac{p'(\lambda,k)}{p(\lambda, k)} \, d\lambda, \; \mbox{and} \; m(k) = \lim_{r \to \infty} m_r(k).
\end{align*}
Note that if $p(\lambda, k)$ does not vanish on $\gamma(r)$, by Cauchy's Argument Principle~\cite{rudin2006real}, 
$m_r(k)$ 
%$\int_{\gamma(r)} \frac{p'(z,k)}{p(z,k)}dz$ 
is the number of zeros of $p(\lambda, k)$ inside the curve $\gamma(r)$. 
However, since $p(\lambda, k)$ has at most $n$ roots, the integral is well-defined except at finitely many $r$'s. 
Hence $m(k)$ is well-defined and $m(k) = n_{p(\lambda,k)}(\bb H_{-})$. 
In the meantime, the function $m(k)$ is continuous in $k$ and integer-valued, and thereby, 
$m(k) = n_{p(\lambda, k)}$ is constant on each interval $(k_j, k_{j+1})$. 
That is, either $n_{p(\lambda, k)} = n$ or $n_{p(\lambda, k)} < n$, corresponding to either stabilizing or non-stabilizing gains, respectively. So by inspecting the number of intersections between $\ell(k)$ and $\Gamma$, one can derive an upper bound on the number of connected components of $\ca H$.
Let 
  \begin{equation}
r(\lambda) = c_{n-1}\lambda^{n-1} + \dots + c_0, \; \mbox{and} \; 
%  \end{equation}
%and 
%\begin{equation}
s(\lambda) = a_{n-1} \lambda^{n-1} + \dots + a_0. \label{rs}
\end{equation}
Consider an intersection of $\ell(k)$ and $\Gamma$ that is, when for 
some $k \in \bb R$, there exists $\lambda = i \beta$, $\beta \in \bb R$, 
for which $p(i\beta, k) =0$. 
We first observe that $r(i\beta) \neq 0$ since otherwise $i\beta$ would be a root 
for every $k \in \bb R$. This on the other hand, implies that $\ca H$ is empty. 
Hence $p(i\beta, k) = 0$ implies that,
  \begin{align}
    k = \frac{(i\beta)^n + a_{n-1} (i\beta)^{n-1} + \dots + a_0}{c_{n-1} (i\beta)^{n-1} + \dots + c_0}. \label{eq:myeq1_c}
  \end{align}
  Since $k \in \bb R$, $\beta$ must be a root of,
  \begin{align*}
    \phi(\beta) = \text{\bf Im}\left(\left(\lambda^n + s(\lambda)\right)r(\bar{\lambda})|_{\lambda=i \beta}\right) = 0.
  \end{align*}
  We note that $\phi(\beta) \in \bb R[\beta]$ has degree at most $2n-1$. Thus it can be written as,
  \begin{align*}
    \phi(\beta) = i^{2n-2} \beta^{2n-1} + i^{2n-4} d_{2n-3} \beta^{2n-3} + \dots + i^2 d_3 \beta^3 + d_1 \beta,
  \end{align*}
  for some set of real coefficients $\{d_{2n-3}, d_{2n-5}, \dots, d_1\} \subseteq \bb R$,
noting that all exponentials of $i$ are either $1$ or $-1$. 
Now let us set $\tau(\beta) = i^{2n-2} \beta^{2n-2} + i^{2n-4} d_{2n-3} \beta^{2n-4} + \dots + i^2 d_3 \beta^2 + d_1 \in \bb R[\beta]$; hence, $\phi(\beta) = \beta\tau(\beta)$. 
Now we note that $\tau(\beta)$ is an even polynomial in $\beta$ (having only even degrees).
% terms and we are interested in real solutions. Putting
Letting
  \begin{align}
\label{eq:alg_poly_cont}
    \upsilon(\beta) = i^{2n-2} \beta^{n-1} + i^{2n-4} d_{2n-3} \beta^{n-2} + \dots + i^2 d_3 \beta + d_1,
  \end{align}
we have  $\tau(\beta) = \upsilon(\beta^2)$. 
Thereby, the roots of $\tau$ are the square roots of those of $\upsilon$. 
This implies that $\tau$ has two real roots if $\upsilon$ has a positive real root. We further observe 
that if $\beta_0$ is a positive real root of $\upsilon(\beta)$, then $\sqrt{\beta_0}$ and $-\sqrt{\beta_0}$ are the real roots of $\tau$. 

Let us now consider the scenario that leads to greatest upper bound on the number of connected
components of $\ca H$.
This scenario corresponds to the situation where $\phi(\beta)$ has
$2n-1$ real roots; let these roots be $\{0, \beta_1, -\beta_1, \dots, \beta_{n-1}, -\beta_{n-1}\},$ 
where $\beta_j \in \bb R_+$ for each $j$. These roots can be mapped to feedback gains via  relation~\eqref{eq:myeq1_c}; in fact, $\beta_j$ and $-\beta_j$ are mapped to the same $k$. 
Thus adding the $k$ that corresponds to the $0$ root via relation~\eqref{eq:myeq1_c}, we will have at most $n$ values for $k$. These values on the other hand, divide the real line into $n+1$ intervals. But only one of the two unbounded intervals can be stabilizing by Observation~\ref{obs:condition_unbounded}; the upper bound of $n$ 
on the number of connected components of $\ca H$ now follows.
\end{proof}
\begin{remark}
  \label{remark:continuous_tangent}
  Note that having $n$ connected components for $\ca H$ implies that we can have a situation
  where $\{k_1, \dots, k_n\}$ are marginally stabilizing gains and the open intervals $(k_j, k_{j+1})$ are stabilizing. 
  %This appear  appears to be a rather extreme scenario. But 
  Figure~\ref{fig:continuous_tangent} demonstrates this phenomena for two adjacent intervals: there is a gain $k_0$ such that the closed-loop system is marginally stable but $(k', k_0)$ and $(k_0, k'')$ are both stabilizing for some $k', k'' \in \bb R$.\footnote{Note that Theorem $1$ in~\cite{feng2018on} uses the same line of reasoning as the proof of 
  Lemma~\ref{lemma:2n_bound_continuous}
  to arrive at the improved bound of  $\ceil{\frac{n}{2}}$ for the number of connected components in $\ca H$,
  assuming that the intervals constructed above are stabilizing/non-stabilizing interlacing intervals. However, as Figure~\ref{fig:continuous_tangent} depicts, this assumption is not valid in general. The $\ceil{\frac{n}{2}}$ bound is still valid, however, and can be obtained by utilizing the structure of the polynomial $\phi(\beta)$ and the relation between the feedback gain $k$ and parameter $\beta$.} The system parameters\footnote{These parameters are actually chosen carefully according to the analysis of Lemma~\ref{lemma:n_bound_continuous}.} are
{
\begin{align*}
  A = \begin{pmatrix}
    -0.825 & -1.21 & -\frac{625919}{4800000} \\
    1 & 0 & 0 \\
    0 & 1 & 0
  \end{pmatrix}, \; b = \begin{pmatrix}
    1 \\ 0 \\ 0
  \end{pmatrix}, \; 
  c= \begin{pmatrix}
    1 \\ 7.5 \\ 12.5
  \end{pmatrix}.
\end{align*}
}
  \end{remark}
  \begin{figure}[ht]
    \centering
    % This file was created by matlab2tikz.
%
%The latest updates can be retrieved from
%  http://www.mathworks.com/matlabcentral/fileexchange/22022-matlab2tikz-matlab2tikz
%where you can also make suggestions and rate matlab2tikz.
%
\definecolor{mycolor1}{rgb}{0.00000,0.44700,0.74100}%
\definecolor{mycolor2}{rgb}{0.85000,0.32500,0.09800}%
\definecolor{mycolor3}{rgb}{0.92900,0.69400,0.12500}%
\begin{tikzpicture}[scale=0.5]

\begin{axis}[%
width=4.602in,
height=3.566in,
at={(0.772in,0.481in)},
scale only axis,
unbounded coords=jump,
xmin=-15,
xmax=0,
ymin=-7.5,
ymax=7.5,
xlabel={Real Axis},
ylabel={Imaginary Axis},
axis background/.style={fill=white},
xmajorgrids,
ymajorgrids,
legend style={at={(0,1)},anchor=north west, legend cell align=left, align=left, draw=white!15!black}
]
\draw (-0.354698225677819,1.00108961713154) node[cross=2pt] {};
\draw (-0.354698225677819,-1.00108961713154) node[cross=2pt] {};
\addplot [color=mycolor1, line width=1.0pt]
  table[row sep=crcr]{%
-0.354698225677819	1.00108961713154\\
-0.349465822083785	1.00344337711179\\
-0.34812124710757	1.00406518965821\\
-0.346432736946199	1.00485600360314\\
-0.34431338636297	1.00586439241268\\
-0.341655103844884	1.00715429418056\\
-0.338324044549953	1.00881054060378\\
-0.33415555886618	1.01094661570718\\
-0.328949082507148	1.01371554940476\\
-0.322463926958977	1.01732510222237\\
-0.314417868041968	1.02205851642012\\
-0.304491927592826	1.02830174169066\\
-0.292346739292493	1.03657647007966\\
-0.277657561478706	1.04757443750427\\
-0.260173849480896	1.06218154191871\\
-0.239800957739291	1.08147252280899\\
-0.216683560061022	1.106656689625\\
-0.191251872307521	1.13897446323483\\
-0.164196473459144	1.17958085903414\\
-0.136377755408117	1.2294735156734\\
-0.108720571531596	1.28949868353332\\
-0.0821483445899266	1.36041881202589\\
-0.0575764923267429	1.44299875451198\\
-0.0359526748169604	1.53807738618155\\
-0.0183226236284267	1.64661359821359\\
-0.00590734444599295	1.76970967278811\\
-0.000187002284114145	1.90861841344252\\
-0.00299312239984881	2.06473826646529\\
-0.0166139107685004	2.2395971069357\\
-0.0439190324151744	2.43482184427192\\
-0.0885112547302268	2.65208749052824\\
-0.154913585246009	2.89303521632065\\
-0.248802198823774	3.15914331111727\\
-0.377297714248701	3.45152662928323\\
-0.549330375221843	3.77062709525877\\
-0.776098569700807	4.11573678771984\\
-1.07164508587593	4.48425973292062\\
-1.45358181597388	4.87055626635237\\
-1.94400161765963	5.2640977440626\\
-2.57062615007686	5.64642644531347\\
-3.36825123925588	5.9859012169623\\
-4.38056733207016	6.22792078964176\\
-5.0205837944799	6.28589410683153\\
-5.66245264523166	6.27446976450313\\
-6.47179413367598	6.16117533375336\\
-7.28286166698287	5.93188309910392\\
-8.30486851780994	5.45605286851885\\
-8.81649321978989	5.12385049540644\\
-9.3284629470838	4.71200574521551\\
-9.85045963684721	4.18619037480752\\
-10.372735028362	3.50563611783465\\
-10.8952579850304	2.54935495240839\\
-11.1566038407559	1.84277662368324\\
-11.2872965326178	1.33868797372205\\
-11.4180019306217	0.391984583896645\\
-11.4301199097003	3.38636047360415e-07\\
-11.0498732957531	0\\
-9.39845510703528	0\\
-7.32635505455467	0\\
-6.56768637859538	0\\
-6.12923999450568	0\\
-5.8407167694877	0\\
-5.63824762642569	0\\
-5.03063684362512	0\\
-5	0\\
};
\draw (-5, 0) circle (2pt);
\addlegendentry{Locus 1}

\addplot [color=mycolor2, line width=1.0pt]
  table[row sep=crcr]{%
-0.354698225677819	-1.00108961713154\\
-0.349465822083785	-1.00344337711179\\
-0.34812124710757	-1.00406518965821\\
-0.346432736946199	-1.00485600360314\\
-0.34431338636297	-1.00586439241268\\
-0.341655103844884	-1.00715429418056\\
-0.338324044549953	-1.00881054060378\\
-0.33415555886618	-1.01094661570718\\
-0.328949082507148	-1.01371554940476\\
-0.322463926958977	-1.01732510222237\\
-0.314417868041968	-1.02205851642012\\
-0.304491927592826	-1.02830174169066\\
-0.292346739292493	-1.03657647007966\\
-0.277657561478706	-1.04757443750427\\
-0.260173849480896	-1.06218154191871\\
-0.239800957739291	-1.08147252280899\\
-0.216683560061022	-1.106656689625\\
-0.191251872307521	-1.13897446323483\\
-0.164196473459144	-1.17958085903414\\
-0.136377755408117	-1.2294735156734\\
-0.108720571531596	-1.28949868353332\\
-0.0821483445899266	-1.36041881202589\\
-0.0575764923267429	-1.44299875451198\\
-0.0359526748169604	-1.53807738618155\\
-0.0183226236284267	-1.64661359821359\\
-0.00590734444599295	-1.76970967278811\\
-0.000187002284114145	-1.90861841344252\\
-0.00299312239984881	-2.06473826646529\\
-0.0166139107685004	-2.2395971069357\\
-0.0439190324151744	-2.43482184427192\\
-0.0885112547302268	-2.65208749052824\\
-0.154913585246009	-2.89303521632065\\
-0.248802198823774	-3.15914331111727\\
-0.377297714248701	-3.45152662928323\\
-0.549330375221843	-3.77062709525877\\
-0.776098569700807	-4.11573678771984\\
-1.07164508587593	-4.48425973292062\\
-1.45358181597388	-4.87055626635237\\
-1.94400161765963	-5.2640977440626\\
-2.57062615007686	-5.64642644531347\\
-3.36825123925588	-5.9859012169623\\
-4.38056733207016	-6.22792078964176\\
-5.0205837944799	-6.28589410683153\\
-5.66245264523166	-6.27446976450313\\
-6.47179413367598	-6.16117533375336\\
-7.28286166698287	-5.93188309910392\\
-8.30486851780994	-5.45605286851885\\
-8.81649321978989	-5.12385049540644\\
-9.3284629470838	-4.71200574521551\\
-9.85045963684721	-4.18619037480752\\
-10.372735028362	-3.50563611783465\\
-10.8952579850304	-2.54935495240839\\
-11.1566038407559	-1.84277662368324\\
-11.2872965326178	-1.33868797372205\\
-11.4180019306217	-0.391984583896645\\
-11.4301199097003	-3.38636047360415e-07\\
-11.8346026938872	0\\
-14.4179833187548	0\\
% -22.991948710781	0\\
% -31.9395011346396	0\\
% -42.6873830812112	0\\
% -55.9512396060725	0\\
% -72.4809508612447	0\\
% -1442.91704642187	0\\
inf	0\\
};
\addlegendentry{Locus 2}
\draw (-0.115603548644362,0) node[cross=2pt] {};
\addplot [color=mycolor3, line width=1.0pt]
  table[row sep=crcr]{%
-0.115603548644362	0\\
-0.127117916683786	0\\
-0.130077622608547	0\\
-0.133794942867554	0\\
-0.138461666544935	0\\
-0.144316589806698	0\\
-0.151655844695855	0\\
-0.16084450467152	0\\
-0.172328694429981	0\\
-0.186646386230695	0\\
-0.20443321279309	0\\
-0.226416664720044	0\\
-0.253388089086732	0\\
-0.286138613423208	0\\
-0.325347484343209	0\\
-0.371428075650485	0\\
-0.424372886486915	0\\
-0.483675986389851	0\\
-0.548402102053118	0\\
-0.61739127456786	0\\
-0.68949919158586	0\\
-0.763766238680869	0\\
-0.839477522040414	0\\
-0.916141330821596	0\\
-0.993431633838254	0\\
-1.07112694501433	0\\
-1.14905986699725	0\\
-1.22708024460308	0\\
-1.3050301159589	0\\
-1.38272758989444	0\\
-1.45995716803564	0\\
-1.53646476448203	0\\
-1.61195634365015	0\\
-1.68609959618249	0\\
-1.75852840873784	0\\
-1.82885008214203	0\\
-1.896655331349	0\\
-1.96153106226308	0\\
-2.02307575551209	0\\
-2.08091698614534	0\\
-2.13473017956157	0\\
-2.1842571983206	0\\
-2.20864242952288	0\\
-2.22932288404116	0\\
-2.25131119425524	0\\
-2.26984741474419	0\\
-2.28943751981162	0\\
-2.29799001921255	0\\
-2.30585246798549	0\\
-2.31323960173003	0\\
-2.3200693319717	0\\
-2.32640393190619	0\\
-2.32940247709091	0\\
-2.3308622216851	0\\
-2.33229655399501	0\\
-2.33242826392119	0\\
-2.3325597617647	0\\
-2.33745982075609	0\\
-2.36488212023466	0\\
-2.38840687426255	0\\
-2.40837585500095	0\\
-2.4251627343302	0\\
-2.43915172066292	0\\
-2.49633395939439	0\\
-2.5	0\\
};
\draw (-2.5,0) circle (2pt);
\addlegendentry{Locus 3}

\end{axis}
\end{tikzpicture}%
    \caption{An example where the feedback gain $k_0$ leads to a marginally stable closed-loop system yet both intervals $(k', k_0)$ and $(k_0, k'')$ are stabilizing for some $k', k'' \in \bb R$.}
    \label{fig:continuous_tangent}
  \end{figure}
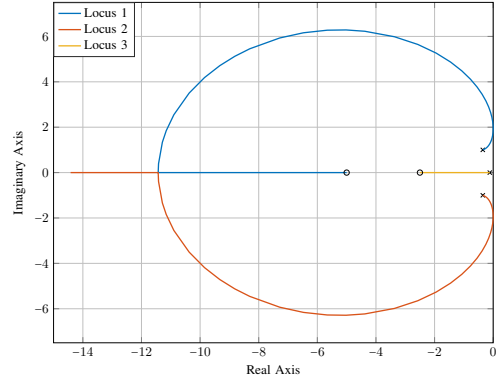
We now proceed to show that the bound on the number of connected components of $\ca H$ can be tightened to $\ceil{\frac{n}{2}}$. The proof of this tighter bound follows a distinct line of reasoning, as necessitated by examples such as that shown in Figure~\ref{fig:continuous_tangent}.
The result contains several technical details. Let us outline the idea behind the proof first: we denote by
$\{0, \beta_1, -\beta_1, \dots, \beta_n, -\beta_n\}$ as the real roots of $\phi(\beta)$ and $\{k_1, \dots, k_{n}\}$ in an increasing order as the feedback gains acquired via relation~\eqref{eq:myeq1_c} in Lemma~\ref{lemma:2n_bound_continuous}:
\begin{enumerate}
  \item We will first show that when two adjacent intervals $(k_{j-1}, k_j)$ and $(k_j, k_{j+1})$ are stabilizing, then $p(\lambda, k_j) = \lambda^n + (a_{n-1}-k_j c_{n-1}) \lambda^{n-1} + \dots + (a_0 - k_j c_0)$ would have the non-stable mode $\lambda_0 = i\beta$, i.e., the mode with zero real part, as a simple root of $p(\lambda, k_j) \in \bb R[\lambda]$.
 By Lemma~\ref{lemma:smooth_dependence}, this would imply that we can find some $C^{\infty}$ function 
  $\eta: {\cal I} \to \bb C$ (with ${\cal I} \subset \bb R$ and $k_i \in {\cal I}$)
 such that $\eta(t)$ tracks the zero of $p(\lambda,k)$ locally with $\eta(k_j) = r$. 
 Indeed, what we really need is that the curve of the root $i\beta$ is differentiable at $i\beta$.
    \item If two adjacent intervals are both stabilizing, the curve $\eta(t)$ is tangent to the imaginary axis at $t_0$. We  show that this observation leads to having $-\beta_j$ and $\beta_j$ as multiple zeros of $\phi(\beta)$. Mind the subtlety here: $\lambda_0$ is the simple root of the polynomial $p(\lambda, k_j) \in \bb R[\lambda]$ in $\lambda$ whereas $\pm \beta_j$ are multiple zeros of the polynomial $\phi(\beta) \in \bb R[\beta]$ in $\beta$ (recall the expression for $\phi(\beta)$ in the proof of Lemma~\ref{lemma:2n_bound_continuous}.). See Figure \ref{fig:relation} for a demonstration of the relations.
      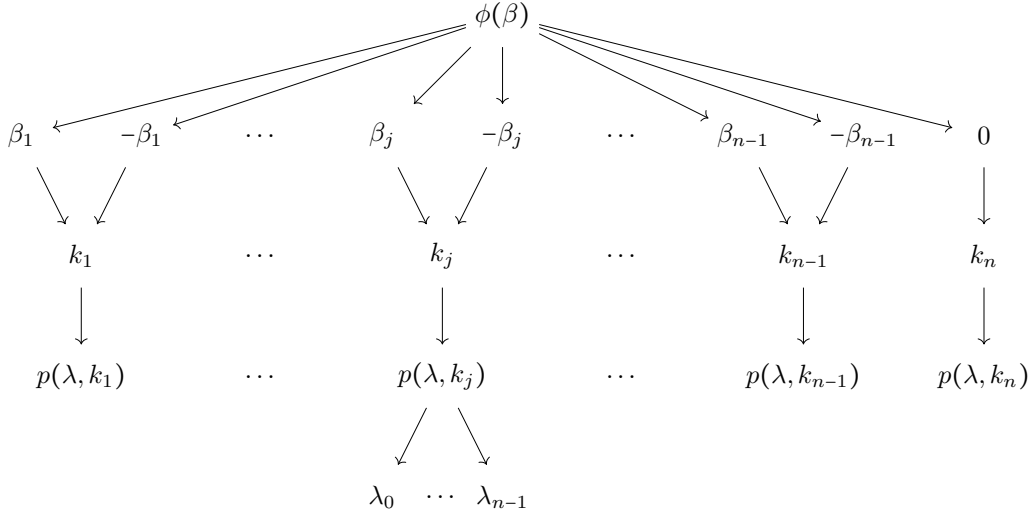
\begin{figure*}
\begin{center}
\begin{tikzpicture}[scale=.8]
\GraphInit[vstyle=Empty]
\tikzset{EdgeStyle/.style = {->}}
\tikzset{VertexStyle/.style = {minimum size = 24 pt}}
\Vertex[Math, L=\phi(\beta{})]{A}
\Vertex[Math, L=\beta_1, x=-8,y=-2]{B}
\Vertex[Math, L=-\beta_1,x=-6,y=-2]{C}
\Vertex[Math, L=\dots, x=-4,y=-2]{D}
\Vertex[Math, L=\beta_j,x=-2,y=-2]{E}
\Vertex[Math, L=-\beta_j,x=0,y=-2]{F}
\Vertex[Math, L=\dots, x=2,y=-2]{G}
\Vertex[Math, L=\beta_{n-1}, x=4,y=-2]{H}
\Vertex[Math, L=-\beta_{n-1}, x=6,y=-2]{I}
\Vertex[Math, L=0, x=8,y=-2]{J}
\Vertex[Math, L=k_1, x=-7,y=-4]{K}
\Vertex[Math, L=\dots, x=-4,y=-4]{L}
\Vertex[Math, L=k_j, x=-1,y=-4]{M}
\Vertex[Math, L=\dots, x=2,y=-4]{N}
\Vertex[Math, L=k_{n-1}, x=5,y=-4]{O}
\Vertex[Math, L=k_n, x=8,y=-4]{P}
\Vertex[Math, L={p(\lambda{}, k_1)}, x=-7,y=-6]{KK}
\Vertex[Math, L=\dots, x=-4,y=-6]{LL}
\Vertex[Math, L={p(\lambda, k_j)}, x=-1,y=-6]{MM}
\Vertex[Math, L=\dots, x=2,y=-6]{NN}
\Vertex[Math, L={p(\lambda,k_{n-1})}, x=5,y=-6]{OO}
\Vertex[Math, L={p(\lambda, k_{n})}, x=8,y=-6]{PP}
\Vertex[Math, L={\lambda_0}, x=-2,y=-8]{MM1}
\Vertex[Math, L=\dots, x=-1,y=-8]{MM2}
\Vertex[Math, L=\lambda_{n-1}, x=0,y=-8]{MM3}
\Edges(A, B)
\Edges (A, C)
% \Edges( A, D)
\Edges[style={->}](A,E) 
\Edges( A, F)
% \Edges( A, G)
\Edges( A, H)
\Edges( A, I)
\Edges( A, J)
\Edges( B, K)
\Edges( C, K)
% \Edges( D, L)
\Edges( E, M)
\Edges( F, M)
% \Edges( G, N)
\Edges( H, O)
\Edges( I, O)
\Edges( J, P)
\Edges( K, KK)
%\Edges( L, LL)
\Edges( M, MM)
% \Edges( N, NN)
\Edges( O, OO)
\Edges( P, PP)
\Edges( MM, MM1)
% \Edges( MM, MM2)
\Edges( MM, MM3)
\end{tikzpicture}
\end{center}
\caption{\label{fig:relation}The relation between roots of $\phi(\beta)$ and $p(\lambda,k)$, as the root locus intersects the imaginary axis for continuous time systems.}
\end{figure*}
      \item Using the multiplicities of $\lambda_j\,'s$ as roots of $\phi(\beta)$ and a careful counting of the stabilizing/non-stabilizing intervals lead to the final result.
\end{enumerate}
We shall developing several propositions before we prove the main result. First we provide an asymptotic expansion of the zeros of $p(\lambda, k)$ with respect to $k$. Note that this is not simply a Taylor expansion around $k$, as a multiple root is not differentiable with respect to $k$.
% {\color{red}
% Variations of these roots due to a small perturbation can be investigated using Puiseux series~\cite{boyd2014solving}.
% It should be noted that the Taylor series is a special case of the Puiseux series for a simple root.%~\cite{avrachenkov2013analytic}.
% }
Recall the definitions of the polynomials $r(\lambda)$ and $s(\lambda)$ (\ref{rs}) used in the proof of Lemma~\ref{lemma:2n_bound_continuous}.
%\begin{align*}
%  r(z) &\coloneqq c_{n-1}z^{n-1} + c_{n-2} z^{n-2} + \dots + c_0,\\
%  s(z) &\coloneqq a_{n-1}z^{n-1} + a_{n-2} z^{n-1} + \dots + a_0.
%\end{align*}
\begin{proposition}
  \label{prop:asymptotic_root}
Suppose for $k_0 \in \bb R$, $\lambda_0$ is a root of $p(\lambda, k_0) \in \bb R[\lambda]$ with multiplicity $m \in \bb N$ and $r(\lambda_0) \neq 0$. Then for $k$ that is sufficiently close to $k_0$,  $p(\lambda, k)$ will have $m$ roots given by,
  \begin{equation*}
    \lambda_j = \lambda_0 + \left((k-k_j) \frac{r(\lambda_0)}{h(\lambda_0)}\right)^{1/m} \omega_j + o(|k-k_j|^{1/m}),
  \end{equation*}
  for $j=1, \dots, m$,
  where $\omega_j$'s are the $m$th roots of unity,\footnote{That is, there are the zeros of $z^m - 1$.}
  $h(\lambda) \in \bb R[\lambda]$ is such that $p(\lambda, k_0) = (\lambda-\lambda_0)^m h(\lambda)$,
and $o(|k-k_j|^{1/m})$ signifies a function $f(k)$ for which $\lim_{k \to k_j} |f(k)|/{|k-k_j|^{1/m}} = 0.$\footnote{Recall root locus rules!}
\end{proposition}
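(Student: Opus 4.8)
The plan is to localize near $(\lambda_0, k_0)$ and resolve the $m$-fold degeneracy by a Newton--Puiseux rescaling, after which the implicit function theorem applies. First I would exploit the affine-in-$k$ structure of $p$ recorded in the proof of Lemma~\ref{lemma:2n_bound_continuous}: since $p(\lambda, k) = \lambda^n + s(\lambda) - k\,r(\lambda)$, subtracting the value at $k_0$ yields the identity
\[
  p(\lambda, k) = p(\lambda, k_0) - (k-k_0)\,r(\lambda) = (\lambda-\lambda_0)^m h(\lambda) - (k-k_0)\,r(\lambda),
\]
where $h(\lambda_0) \neq 0$ because $\lambda_0$ is a root of exact multiplicity $m$. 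Writing $w = \lambda - \lambda_0$ and $\mu = k - k_0$, the roots of $p(\cdot, k)$ near $\lambda_0$ are precisely the small solutions $w$ of $w^m h(\lambda_0 + w) = \mu\, r(\lambda_0 + w)$.

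Next I would introduce the self-similar scaling $w = \mu^{1/m} z$, fixing a branch of $\mu^{1/m}$ on a sector; the $m$-th roots of unity $\omega_j$ will index the resulting branches. Setting $\nu = \mu^{1/m}$ and dividing by $\mu$, the equation becomes
\[
  G(z, \nu) \coloneqq z^m\, h(\lambda_0 + \nu z) - r(\lambda_0 + \nu z) = 0,
\]
a polynomial, hence analytic, function of $(z,\nu)$. At $\nu = 0$ it degenerates to $G(z,0) = z^m h(\lambda_0) - r(\lambda_0)$, whose roots are $z_j = (r(\lambda_0)/h(\lambda_0))^{1/m}\,\omega_j$. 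The hypothesis $r(\lambda_0) \neq 0$ is the crucial point here: it forces $z_j \neq 0$, so $\partial_z G(z_j, 0) = m\, z_j^{m-1} h(\lambda_0) \neq 0$ and each $z_j$ is a \emph{simple} root. The implicit function theorem then yields analytic branches $z_j(\nu)$ with $z_j(0) = z_j$ solving $G(z_j(\nu),\nu)=0$ for $\nu$ near $0$.

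Back-substituting gives the claimed expansion: since $z_j(\nu) = z_j + O(\nu)$,
\[
  \lambda = \lambda_0 + \mu^{1/m} z_j(\mu^{1/m}) = \lambda_0 + \left((k-k_0)\tfrac{r(\lambda_0)}{h(\lambda_0)}\right)^{1/m}\omega_j + o\!\left(|k-k_0|^{1/m}\right).
\]
To see that these $m$ branches exhaust the roots collapsing to $\lambda_0$, I would invoke continuity of the roots in the coefficients (Corollary~\ref{cor:real_zeros}), so that exactly $m$ roots of $p(\cdot, k)$ tend to $\lambda_0$ as $k \to k_0$; the branches have pairwise distinct leading coefficients $z_j$, hence are distinct for small $\mu \neq 0$ and account for all $m$ of them. (The case $m=1$ is just the smooth dependence of Lemma~\ref{lemma:smooth_dependence}.)

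The main obstacle is conceptual rather than computational: because a multiple root is not differentiable in $k$, the implicit function theorem cannot be applied in the $\lambda$-variable directly, and the correct move is to identify the self-similar scale $|k-k_0|^{1/m}$ and blow it up. That this particular scaling is the right one---equivalently, that the Newton polygon of $G$ consists of the single edge joining $(m,0)$ to $(0,1)$ of slope $-1/m$---is the heart of the matter; once the blow-up is performed the degeneracy splits into $m$ simple roots and the rest is routine. A secondary bookkeeping point is fixing a branch/sector of $\mu^{1/m}$ and, if one insists on staying within $\bb R[\lambda]$, tracking complex-conjugate pairs; the cleanest argument works over $\bb C$ and reads off the real expansion at the end.
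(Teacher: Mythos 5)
Your proof is correct, and it follows the same core strategy as the paper: exploit the affine dependence on $k$ to write $p(\lambda,k)=(\lambda-\lambda_0)^m h(\lambda)-(k-k_0)r(\lambda)$, then perform the Newton--Puiseux rescaling $w=\mu^{1/m}z$. Indeed, your function $G(z,\nu)=z^m h(\lambda_0+\nu z)-r(\lambda_0+\nu z)$ is exactly the paper's $\psi_t(\lambda)=\lambda^m\hat h(t^{1/m}\lambda)-\hat r(t^{1/m}\lambda)$ up to the normalization by $h(\lambda_0)$. Where you diverge is the final localization step: the paper shows $\psi_t\to\psi_0$ uniformly on compact sets and invokes Rouch\'e's theorem to place exactly one zero of $\psi_t$ in a small disk around each $z_j$, whereas you observe that each $z_j$ is a \emph{simple} zero of $G(\cdot,0)$ (this is precisely where $r(\lambda_0)\neq 0$ enters, in both arguments) and apply the holomorphic implicit function theorem to obtain analytic branches $z_j(\nu)$. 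The two tools buy slightly different things: Rouch\'e needs no nondegeneracy beyond the disks being zero-free on their boundaries and is in that sense more robust, while the IFT gives you analyticity of the branches in $\nu$ for free, hence a full Puiseux expansion rather than just the leading term plus $o(|k-k_0|^{1/m})$. A further point in your favor is that you explicitly close the exhaustiveness gap---arguing via continuity of the root map (Corollary~\ref{cor:real_zeros}) that exactly $m$ roots of $p(\cdot,k)$ collapse to $\lambda_0$, and that your $m$ branches are pairwise distinct for small $\nu\neq 0$, hence account for all of them---a counting step the paper's proof leaves implicit when it asserts that the zeros of $\psi_t$ are $r(0)^{1/m}\omega_j+o(1)$ (strictly, this holds only for the zeros of $\psi_t$ remaining in a fixed bounded region; the remaining $n-m$ zeros escape to infinity as $t\to 0$). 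Your treatment of the branch choice for $\mu^{1/m}$ on sectors is also consistent with the paper's remark following the proposition about interpreting the statement separately for $k\uparrow k_j$ and $k\downarrow k_j$.
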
  {Before we prove the proposition, let us remark that in the case that $(k-k_j)r(\lambda_0)/h(\lambda_0)$ is negative, there would be $m$ choices for $((k-k_j)r(\lambda_0)/h(\lambda_0))^{1/m}$, namely $(-(k-k_j)r(\lambda_0)/h(\lambda_0))^{1/m} e^{i (\pi + 2 \pi l)/m}$ for $l=\{0, 1, \dots, m-1\}$. Note that these numbers differ multiplicatively from each other by $e^{i2\pi \nu/m},$ with $\nu \in \{0, \dots, m-1\}$; the expression above would not be affected by selecting any of these numbers. Moreover, this statement should be interpreted separately for $k \uparrow k_j$ and $k \downarrow k_j$. The difference amounts to a rotation of $\lambda_j$'s.}
\begin{proof}
  Since $\lambda_0$ is a root of multiplicity $m$, $p(\lambda, k_0) = (\lambda-\lambda_0)^m h(\lambda)$,
  where $h(\lambda) \in \bb R[\lambda]$ with $h(\lambda_0) \neq 0$. 
  Hence, for $k \in \bb R$, we can write $p(\lambda, k)$ as
  \begin{align*}
    p(\lambda, k) &= \lambda^n + (a_{n-1} - k c_{n-1}) \lambda^{n-1} + \dots + a_0 - k c_0 \\
              &= \lambda^n + (a_{n-1}\lambda^{n-1} - k_0 c_{n-1}) \lambda^{n-1} + \dots + a_0 - k_0 c_0 \\
              &\quad - (k-k_0)(c_{n-1} \lambda^{n-1} + \dots + c_0)\\
            &= p(\lambda, k_0) - (k-k_0) r(\lambda) \\
            &= (\lambda-\lambda_0)^m h(\lambda) - (k-k_0)r(\lambda).
  \end{align*}
  Now suppose that $(\lambda-\lambda_0)^m h(\lambda) - (k-k_j) r(\lambda) = 0$. 
  Putting $\hat{\lambda} = \lambda-\lambda_0$, $\hat{h}(\hat{\lambda}) = h(\hat{\lambda}+\lambda_0)/h(\lambda_0)$, $\hat{r}(\hat{\lambda}) = r(\hat{\lambda}+\lambda_0)/h(\lambda_0)$ and $t = k-k_j$, it suffices to show that the zeros of $\hat{\lambda}^m \hat{h}(\hat{\lambda}) - t \, \hat{r}(\hat{\lambda})$ are exactly,
  \begin{align*}
    \hat{\lambda}_j = \left( \frac{r(\lambda_0)}{h(\lambda_0)}\right)^{1/m} \omega_j + o(|t|^{1/m}), \text{ for } j=1, \dots, m,
  \end{align*}
  as $t \to 0$. Note that $\hat{h}(0) = 1$ and $\hat{h}(\hat{\lambda}) \in \bb R[\hat{\lambda}]$. 
  %{In the case that $r(\lambda_0)/h(\lambda_0)$ is negative, there would be $m$ choices for $(r(\lambda_0)/h(\lambda_0))^{1/m}$, namely $(-r(\lambda_0)/h(\lambda_0))^{1/m} \omega_j$, since one may choose any number satisfying $z^m = r(\lambda_0)/h(\lambda_0)$ in $\bb C$. Note that these numbers differ multiplicatively from each other by $\omega_j^{l}$ , with $l \in \{0, \dots, m-1\}$; the expression above would not be affected by selecting any of these numbers.}\par
 Let  $t > 0$ and observe that if $z$ is a zero of
  $$\psi_t(\lambda) \coloneqq \lambda^m \hat{h}(t^{1/m}\lambda) - \hat{r}(t^{1/m}\lambda), $$ then $t^{1/m}z$ would be a zero of $\lambda^m h(\lambda) - t \, r(\lambda)$. 
On a compact set $[0, T]$ ($T \in \bb R_+$), $\psi_t(\lambda) \to \psi_0(\lambda) = \lambda^M - r(0)$ uniformly. Let us denote the zeros of $\psi_0(\lambda) = \lambda^m - r(0)$ by
  \begin{align*}
    z_j = \left(r(0)\right)^{1/m} \omega_j, \text{ for }j=1, 2, \dots, m,
  \end{align*}
  where $\omega_j$'s are the $m$th roots of unity. Choose a sufficiently small $\varepsilon > 0$ such that the disks $B_{z_j}(\varepsilon)$ are disjoint. Since $\partial B_{z_j}(\varepsilon)$ is compact and $\psi_0(\lambda)$ does not vanish on $\partial B_{z_j}(\varepsilon)$, there is some $l > 0$ such that $|\psi_0(\lambda)| > l$. Since $\psi_t(\lambda) \to \psi_0(\lambda)$ uniformly on any compact subset of $\bb C$, there is some $t^* > 0$, such that $|\psi_t(\lambda) - \psi_0(\lambda)| < l$ for $t \in (-t^*, t^*)$ and $\lambda \in \bar{B}_{z_j}(\varepsilon)$. By Rouch\'{e}'s Theorem~\cite{rudin2006real}, there is exactly one zero for $\psi_t(\lambda)$ in each $B_{z_j}(\varepsilon)$ for $t \in (-t^*, t^*)$. As such, the zeros of $\psi_t(\lambda)$ are given by,
  \begin{align*}
   r(0)^{1/m} \omega_j + o(1).
  \end{align*}
  It then follows that,
  \begin{align*}
    \lambda_j' = \left( t\frac{r(\lambda_0)}{h(\lambda_0)}\right)^{1/m} \omega_j + o(|t|^{1/m}), \text{ for } j=1, \dots, m;
   \end{align*}
  the case where $t < 0$ can be treated similarly by considering $t \, h(\lambda) = (-t)(-h(\lambda))$.
Hence, 
  \begin{align*}
    \lambda_j = \lambda_0 + \left( (k-k_j) \frac{r(\lambda_0)}{h(\lambda_0)}\right)^{1/m} \omega_j + o(|k-k_j|^{1/m}), 
  \end{align*} for  $j=1, \dots, m$.
\end{proof}
Before we proceed to the proof of the main result of this section,
let us make an observation pertaining to the derivative of $f(\beta) \in \bb R[\beta]$ 
evaluated on the imaginary axis.
\begin{proposition}
  \label{prop:poly_derivative_continuous}
   Consider $f(\lambda) = \alpha_{m} \lambda^m + \dots + \alpha_1 \lambda + \alpha_0 \in \bb R[\lambda]$ 
%    is a polynomial, 
  and let $\varphi(\beta) = \text{\bf Im}\left(f(\lambda)\big \vert_{\lambda=i\beta}\right) \in \bb R[\beta]$.
  Then we have,
    \begin{align*}
      \frac{d \varphi(\beta)}{d\beta} = \text{\bf Re}\left( \frac{d f}{d\lambda}\Big \vert_{\lambda=i\beta}\right).
    \end{align*}
\end{proposition}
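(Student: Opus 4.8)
The plan is to avoid expanding $f(i\beta)$ into real and imaginary parts coefficient by coefficient, and instead exploit the chain rule together with the elementary identity $\text{\bf Im}(iw) = \text{\bf Re}(w)$, valid for every $w \in \bb C$. First I would introduce the auxiliary complex-valued function of a real variable $g(\beta) \coloneqq f(i\beta)$, so that by definition $\varphi(\beta) = \text{\bf Im}(g(\beta))$. The whole argument then reduces to differentiating $g$ and reading off its imaginary part.

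Next, since $f$ is a polynomial (hence entire) and $\beta \mapsto i\beta$ is a smooth map of the real line into $\bb C$, the ordinary chain rule gives $g'(\beta) = i\, f'(i\beta)$, where $f' = df/d\lambda$. Because $\text{\bf Im}$ is $\bb R$-linear and $\beta$ is real, differentiation in $\beta$ commutes with taking the imaginary part, so that $\varphi'(\beta) = \text{\bf Im}(g'(\beta)) = \text{\bf Im}\!\left(i\, f'(i\beta)\right)$. Applying the identity $\text{\bf Im}(iw) = \text{\bf Re}(w)$ with $w = f'(i\beta)$ then yields $\varphi'(\beta) = \text{\bf Re}(f'(i\beta)) = \text{\bf Re}\!\left(\tfrac{df}{d\lambda}\big|_{\lambda=i\beta}\right)$, which is exactly the claimed equality.

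There is no genuine obstacle here; the statement is essentially a bookkeeping identity, and the only points deserving a line of justification are that the chain rule applies to the composition of the entire function $f$ with the real-to-complex map $\beta \mapsto i\beta$, and that $d/d\beta$ commutes with $\text{\bf Im}$ for a complex-valued function of a real argument. Both are immediate once one writes $g = \text{\bf Re}(g) + i\,\text{\bf Im}(g)$ and differentiates componentwise. As a purely algebraic sanity check (and an alternative proof, should one prefer to stay elementary), one can expand $f(i\beta) = \sum_j \alpha_j i^j \beta^j$ and observe that $\varphi(\beta)$ collects precisely the odd-degree terms, with $\text{\bf Im}(\alpha_j i^j \beta^j) = \alpha_j(-1)^{(j-1)/2}\beta^j$ for odd $j$; differentiating term by term and comparing against the real part of $f'(i\beta) = \sum_j \alpha_j j\, i^{j-1}\beta^{j-1}$ recovers the same equality, since $i^{j-1}$ is real exactly when $j$ is odd.
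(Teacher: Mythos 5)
Your proof is correct, but it takes a genuinely different route from the paper's. The paper argues by direct expansion: it splits on the parity of $m$, writes out $\varphi(\beta)$ explicitly as the sum of surviving terms $i^{j-1}\alpha_j\beta^j$, differentiates term by term, computes $\text{\bf Re}\bigl(\frac{df}{d\lambda}\big\vert_{\lambda=i\beta}\bigr)$ by the same kind of expansion, and matches the two expressions coefficient by coefficient. You replace all of that with three structural observations: the chain rule $\frac{d}{d\beta}f(i\beta) = i\,f'(i\beta)$ for the composition of the entire function $f$ with $\beta \mapsto i\beta$, the commutation of $d/d\beta$ with $\text{\bf Im}$ for a complex-valued function of a real variable, and the pointwise identity $\text{\bf Im}(iw) = \text{\bf Re}(w)$. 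This is shorter, eliminates the parity case analysis entirely, and is strictly more general: nowhere do you use that the coefficients $\alpha_j$ are real, nor even that $f$ is a polynomial rather than an arbitrary entire function, so the same argument covers complex coefficients, whereas the paper's computation is tied to $f \in \bb R[\lambda]$. What the paper's expansion buys in exchange is the explicit coefficient form of $\varphi$ and $\varphi'$, which is the form actually manipulated later (for instance in the proof of Lemma~\ref{lemma:n_bound_continuous}, where $\phi'(\beta)$ is processed through the even/odd decomposition $r = r_e + r_o$); your closing ``sanity check'' expansion is, in substance, the paper's proof verbatim, so you have both arguments in hand.
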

\begin{proof}
 When $m$ is odd we have,
  $$\varphi(\beta) = i^{m-1} \alpha_m \beta^m + i^{m-3} \alpha_{m-2} \beta^{m-2} + \dots + i^2 \alpha_{3} \beta^{3} + \alpha_1 \beta.$$
Hence,
\begin{align*}
  \varphi'(\beta) &= m i^{m-1} \alpha_m \beta^{m-1} + (m-2) i^{m-3} \alpha_{m-2} \beta^{m-3} + \cdots \\
  &\quad + 3 i^2 \alpha_3 \beta^2 + \alpha_1.
\end{align*}
But we also note that,
\begin{align*}
  \text{\bf Re} \left( \frac{df}{d\lambda} \Big\vert_{\lambda=i\beta}\right) &= \text{\bf Re}\left( \left(m\alpha_m x^{m-1} + \cdots + \alpha_1\right)|_{x=i\beta}\right) \\
                                                             &= \text{\bf Re} \left( m \alpha_m (i\beta)^{m-1} + \cdots + \alpha_1\right) \\
                                                             &= m i^{m-1} \alpha_m \beta^{m-1} + (m-2)i^{m-3} \alpha_{m-2} \beta^{m-3}\\
                                                             & + \cdots + 3 i^2 \alpha_3 \beta^2 + \alpha_1 \\
  &= \varphi'(\beta)
\end{align*}
The case when $m$ is even follows analogously.
\end{proof}
We are now ready to prove the bound $\ceil{\frac{n}{2}}$ on the number of connected components
of $\ca H$.
\begin{lemma}
  \label{lemma:n_bound_continuous}
Let $\ca H \neq \emptyset$; then it has at most $\ceil{\frac{n}{2}}$ connected components.
\end{lemma}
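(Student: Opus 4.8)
The plan is to sharpen the counting of Lemma~\ref{lemma:2n_bound_continuous} by tracking the \emph{multiplicities} of the real roots of $\phi(\beta)$ and converting the failure of stabilizing/non-stabilizing interlacing (the tangency phenomenon of Figure~\ref{fig:continuous_tangent}) into a genuine loss of such multiplicity. As in Lemma~\ref{lemma:2n_bound_continuous} I assume $(A^\flat, b^\flat, c^\top)$ is in controllable canonical form, write $p(\lambda, k) = P(\lambda) - k\, r(\lambda)$ with $P(\lambda) = \lambda^n + s(\lambda)$, and let $k_1 < \dots < k_q$ (with $q \le n$) be the gains at which the root locus meets the imaginary axis, partitioning $\bb R$ into the open intervals $I_0, \dots, I_q$. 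Since every $k_j$ is marginally stabilizing, $k_j \notin \ca H$, so each component of $\ca H$ is exactly one of the stabilizing intervals $I_\ell$; thus it suffices to bound the number $N$ of stabilizing $I_\ell$. I would count via endpoint incidences: assigning to each $k_j$ the number $\iota_j \in \{0,1,2\}$ of components it bounds, one has $\sum_j \iota_j = 2N - (\#\text{ semi-infinite components}) \ge 2N - 1$, since by Observation~\ref{obs:condition_unbounded} at most one unbounded interval is stabilizing. The goal becomes the upper bound $\sum_j \iota_j \le n$.

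The heart of the argument is that $\iota_j = 2$ forces $\beta_j$ to be at least a double root of $\phi$. First, if $I_{j-1} = (k_{j-1}, k_j)$ and $I_j = (k_j, k_{j+1})$ are both stabilizing, the crossing mode $\lambda_0 = i\beta_j$ must be a \emph{simple} root of $p(\lambda, k_j) \in \bb R[\lambda]$: were its multiplicity $m \ge 2$, Proposition~\ref{prop:asymptotic_root} would produce, for $k$ on at least one side of $k_j$, a perturbed root $\lambda_0 + ((k-k_j)r(\lambda_0)/h(\lambda_0))^{1/m}\omega_\ell + o(|k-k_j|^{1/m})$ with positive real part, contradicting stability on that side. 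Simplicity lets me invoke Lemma~\ref{lemma:smooth_dependence} to track the mode by a $C^\infty$ curve $\eta$ with $\eta(k_j) = i\beta_j$; stability on both sides makes $\text{\bf Re}(\eta(\cdot))$ attain an interior maximum of $0$ at $k_j$, so $\text{\bf Re}(\eta'(k_j)) = 0$, i.e. the locus is tangent to the imaginary axis. To convert this into a statement about $\phi$, I would write $\phi(\beta) = \text{\bf Im}\left(G(i\beta)\right)$ with the \emph{real} polynomial $G(\lambda) = P(\lambda)\,r(-\lambda)$, and apply Proposition~\ref{prop:poly_derivative_continuous} to obtain $\phi'(\beta_j) = \text{\bf Re}\left(G'(i\beta_j)\right)$. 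Substituting the crossing identity $P(i\beta_j) = k_j\, r(i\beta_j)$ and using $r(-i\beta_j) = \overline{r(i\beta_j)}$ collapses this to $\phi'(\beta_j) = \text{\bf Re}\left(\overline{r(i\beta_j)}\,p'(i\beta_j,k_j)\right)$, while differentiating the locus gives $\eta'(k_j) = r(i\beta_j)/p'(i\beta_j,k_j)$; hence $\text{\bf Re}(\eta'(k_j)) = 0$ is equivalent to $\phi'(\beta_j) = 0$. Thus $\iota_j = 2$ implies $\mu_j \ge 2$, where $\mu_j$ is the multiplicity of $\beta_j$ as a root of $\phi$, equivalently of $\beta_j^2$ as a root of $\upsilon$, since $\phi(\beta) = \beta\,\upsilon(\beta^2)$.

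With this in hand the count closes. For every $j$ with $\beta_j > 0$ one has $\iota_j \le \mu_j$: the case $\iota_j = 2$ is the tangency just treated, while $\iota_j \in \{0,1\}$ is immediate since $\mu_j \ge 1$. The origin $\beta = 0$ (a single real root crossing through $0$ when $c_0 \ne 0$, and not a $k$-dependent crossing when $c_0 = 0$) changes the left-half-plane root count by an odd number and so can bound at most one component, giving $\iota_0 \le 1$. Because $\upsilon$ has degree $n-1$, the positive real roots of $\upsilon$ satisfy $\sum_{\beta_j > 0}\mu_j \le n-1$, and therefore $\sum_j \iota_j \le \iota_0 + \sum_{\beta_j > 0}\mu_j \le 1 + (n-1) = n$. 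Combining with $\sum_j \iota_j \ge 2N - 1$ yields $2N - 1 \le n$, i.e. $N \le \ceil{\frac{n}{2}}$ (an integer bound, since $N \le (n+1)/2$). When several distinct $\beta_j$ map to a common gain, the same inequality $\iota \le \sum \mu$ holds at that gain, so the estimate is unaffected.

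The step I expect to be most delicate is the passage from ``two adjacent stabilizing intervals'' to ``$\phi'(\beta_j) = 0$'': one must rule out higher-multiplicity crossing modes via the Puiseux-type expansion of Proposition~\ref{prop:asymptotic_root} (including the phase subtleties for $k \uparrow k_j$ versus $k \downarrow k_j$), and then carry out the algebraic reduction of $\phi'(\beta_j)$ to the root-locus derivative cleanly, with the conjugation bookkeeping $r(-i\beta_j) = \overline{r(i\beta_j)}$ handled correctly. The residual care lies in the degenerate cases --- a multiple real root at the origin, or several $\beta_j$ sharing a gain --- which I would dispatch by the same multiplicity accounting rather than by a separate perturbation argument.
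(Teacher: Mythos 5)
Your proposal is correct and follows essentially the same route as the paper's own proof: the same use of Proposition~\ref{prop:asymptotic_root} to force simplicity of the crossing mode when two adjacent intervals are stabilizing, the same tangency argument converting $\text{\bf Re}(\eta'(k_j))=0$ into $\phi'(\beta_j)=0$ via Proposition~\ref{prop:poly_derivative_continuous} (your $G(\lambda)=P(\lambda)\,r(-\lambda)$ is just a cleaner packaging of the paper's even/odd split $r_e(\lambda)-r_o(\lambda)$), and the same budget of $n-1$ for the total multiplicity of positive roots of $\upsilon$ combined with the fact that at most one unbounded interval is stabilizing. Your endpoint-incidence count $\sum_j \iota_j \ge 2N-1$ together with $\sum_j \iota_j \le n$ is an equivalent (and arguably tidier) rearrangement of the paper's count of stabilizing intervals via $\mu'$, $\nu'$, and $|\Upsilon|$, and it handles the origin crossing and the several-roots-per-gain degeneracies correctly.
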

\begin{proof}
Consider two adjacent intervals $(k_{j-1}, k_j)$ and $(k_j, k_{j+1})$ that are both stabilizing. Note that by assumption $p(\lambda_0, k_j) = 0$ for some pure imaginary $\lambda_0$. First, let us examine whether $\lambda_0$  can have multiplicity $m \ge 2$. By Proposition~\ref{prop:asymptotic_root} for $k$ sufficiently close to $k_0$,  $p(\lambda, k)$ will have $m$ roots given by,
  \begin{align*}
   \lambda_j = \lambda_0 + \left((k-k_j) \frac{r(\lambda_0)}{h(\lambda_0)}\right)^{1/m} \omega_j + o(|k-k_j|^{1/m}) ,
  \end{align*}
  for $j=1, \dots, m$, where $\omega_j$'s are the $m$th roots of unity.
  %, i.e., $z^M = 1$, and $o(|k-k_j|^{1/M})$ denotes that $\lim_{k \to k_j} \frac{o(|k-k_j|)}{|k-k_j|^{1/M}} = 0$.\\
 When $m \ge 3$, as $k \to k_j$, $m$ roots, from $m$ equally spaced directions in the complex plane (see Figure~\ref{fig:equal_space_continuous}), would tend to $\lambda_0$.
  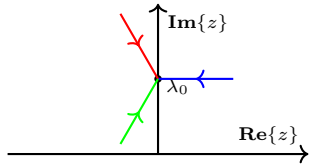
\begin{figure}[ht]
    \centering
    \begin{tikzpicture}
    \begin{scope}[thick,font=\scriptsize, decoration={
    markings,
    mark=at position 0.5 with {\arrow{>}}}]
    % Axes:
    % Are simply drawn using line with the `->` option to make them arrows:
    % The main labels of the axes can be places using `node`s:
    \draw [->] (-2,0) -- (2,0) node [above left]  {${\bf Re} \{z\}$};
    \draw [->] (0,0) -- (0,2) node [below right] {${\bf Im} \{z\}$};

    % Axes labels:
    % Are drawn using small lines and labeled with `node`s. The placement can be set using options
    \iffalse% Single
    % If you only want a single label per axis side:
    \draw (1,-2pt) -- (1,2pt)   node [above] {$1$};
    \draw (-1,-2pt) -- (-1,2pt) node [above] {$-1$};
    \draw (-2pt,1) -- (2pt,1)   node [right] {$i$};
    \draw (-2pt,-1) -- (2pt,-1) node [right] {$-i$};
    \else% Multiple
    % If you want labels at every unit step:
    % \foreach \n in {-4,...,-1,1,2,...,4}{%
    % \draw (\n,-3pt) -- (\n,3pt)   node [above] {$\n$};
    % \draw (-3pt,\n) -- (3pt,\n)   node [right] {$\n i$};
    % }
    \fi
    \node at (-0.1, .9) [circle, right] {$\lambda_0$};
    \draw[fill=black] (0, 1) circle[radius=1pt];
    \draw[postaction={decorate}, blue] (1, 1) -- (0, 1);
    \draw[postaction={decorate}, red] (-0.5, 1.866) -- (0,1);
    \draw[postaction={decorate},green] (-0.5, 0.134) -- (0,1);
    \end{scope}
    % The circle is drawn with `(x,y) circle (radius)`
    % You can draw the outer border and fill the inner area differently.
    % Here I use gray, semitransparent filling to not cover the axes below the circle
    %\path [draw=none,fill=gray,semitransparent] (0, 0) circle (1);
\end{tikzpicture}
    \caption{Roots coming from $m=3$ equally spaced directions in the complex plane tend to $\lambda_0$}\label{fig:equal_space_continuous}
  \end{figure}
  In this case, at least one direction would be in the right half plane, contradicting the assumption
 that both $(k_{j-1}, k_j)$ and $(k_j, k_{j+1})$ are stabilizing. If $m=2$, the roots would traverse a {horizontal line} passing through $\lambda_0$,\footnote{Note that in this case either $k\uparrow k_j$ or $k \downarrow k_j$ would yield the quantity $(k-k_j) r(\lambda_0) /h(\lambda_0)$ in Proposition~\ref{prop:asymptotic_root} positive, and the resulting expression would be a scalar multiple of the primitive $2^{\text{nd}}$ roots of unity, i.e., $\pm \alpha $ for some $\alpha \in \bb R$.}  {with on of the intervals as nonstabilizing}. This is again a contradiction. 
  Hence $\lambda_0$ must be simple. This on the other hand would imply the differentiability of the root with respect to $k$. By the asymptotic formula above (with $m=1$), the derivative of $\eta'(k_j) = {r(\lambda_0)}/{h(\lambda_0)}$; note that in this scenario $h(\lambda_0) = p'(\lambda_0, k_j)$. But the condition that both intervals are stabilizing implies that $\eta'(\lambda_0)$ is pure imaginary (the tangent of the curve should be the imaginary axis, see Figure~\ref{fig:continuous_tangent}); that is
  \begin{align*}
    \frac{d \eta}{dk}\Big \vert_{k_j} = i \gamma,
  \end{align*}
  for some $\gamma \in \bb R$. This implies that if $\lambda_0 = i\beta$ ($\beta \in \bb R$)
  is a zero of $\phi(\beta)$; {since $\eta'(\lambda_0)$ is pure imaginary,} we would also have,
  \begin{align}
    \label{eq:cont_alg_derivative}
    \text{\bf Re} \left( \left(\lambda^n + s(\lambda)-k_j r(\lambda) \right)' r(\bar{\lambda})|_{\lambda = i \beta}\right) = 0,
  \end{align}
  where $s(\lambda) = a_{n-1}\lambda^{n-1}+ \dots + a_0$. 
  Putting $r(\lambda)=r_o(\lambda) + r_e(\lambda)$, where $r_o(\lambda)$ consists of
  the terms with odd degrees and $r_e(\lambda)$ consists terms of even degrees, we observe 
  that $r(\bar{\lambda})|_{\lambda=i \beta} = r(-i \beta)= r_e(i\beta) - r_o(i\beta) = \left( r_e(\lambda)-r_o(\lambda)\right)|_{z = i\beta}$. Now by Proposition~\ref{prop:poly_derivative_continuous}, we have
\begin{align*}
  \phi'(\beta) &= \text{\bf Re} \left[\left( \left(\lambda^n + s(\lambda)\left(r_e(\lambda)-r_o(\lambda)\right)\right)\right)'|_{\lambda=i\beta} \right] \\
               &= \text{\bf Re} \left\{ \left[\left(\lambda^n + s(\lambda)\right)' \left(r_e(\lambda)-r_o(\lambda)\right) \right. \right. \\
               & \quad + \left. \left. \left(\lambda^n + s(\lambda)\right)\left(r_e'(\lambda) - r_o'(\lambda)\right) \right]|_{\lambda=i\beta}\right\} .
\end{align*}
Noting that $\left(\lambda^n + s(\lambda) - k\, r(\lambda)\right)|_{\lambda=\lambda_0} = 0$, we have
\begin{align*}
  \phi'(\beta) &= \text{\bf Re} \left\{ \left[ \left(\lambda^n + s(\lambda)\right)' \left(r_e(\lambda)-r_o(\lambda)\right) \right. \right. \\
&\quad +\left. \left.  \left(-k\,r(\lambda)\right)\left(r_e'(\lambda)-r_o'(\lambda)\right) \right]|_{\lambda=i\beta}\right\}.
\end{align*}
Now $r_e'(\lambda)$ consists of terms with odd degrees and $r_o'(\lambda)$ of those with even degrees;
this implies that,
\begin{align*}
  \left(r_e'(\lambda)-r_o'(\lambda)\right)|_{\lambda=i\beta} &=  \left(-r_e'(\lambda) - r_o(\lambda)\right)|_{\lambda=-i\beta} \\&= -r'(\bar{\lambda})|_{\lambda=i\beta},
\end{align*}
minding that $r'(\bar{\lambda})$ refers to the derivative of $r'(\lambda)$ evaluated at $\bar{\lambda}$.
Furthermore, 
\begin{align*}
  \text{\bf Re}\left[\left( -k r'(\lambda) r(\bar{\lambda})\right)|_{\lambda=i\beta} \right] = \text{\bf Re} \left[ \left(-kr'(\bar{\lambda}) r(\lambda)\right)|_{\lambda=i\beta}\right].
\end{align*}
Combining all these observations, we conclude that $\phi'(\beta) = 0$. 

Now let $\{0, u_1,-u_1, \dots, u_\mu,-u_{\mu}, v_1, -v_1, \dots, v_\nu, -v_\nu\}$ denote the roots of the polynomials $\phi(\beta)$, where $u_i$'s and $v_j$'s are nonnegative real numbers, and $v_1, -v_1, \dots, v_\nu, -v_\nu$ are roots with multiplicity greater than $1$. We must then have $2\mu + 2(2\nu) \le 2n-2$. These roots will be mapped to the gains $k$ via the relation,
\begin{align*}
  k = \frac{q(\lambda)}{\lambda^n + r(\lambda)}\Big \vert_{\lambda=i\beta}.
\end{align*}
Now let $\Lambda = \{k_1, \dots, k_{\mu'}\}$ be the set of {distinct real} gains corresponding to $\{u_1, -u_1, \dots, u_{\mu}, -u_{\mu}\}$ and $\Pi = \{k_1, \dots, k_{\nu'}\}$ be the set of {distinct real} gains corresponding to $\{v_1, -v_1, \dots, v_{\nu}, -v_{\nu}\}$. Note  that $\mu' \le \mu$ and $\nu' \le \nu$ since it is possible that multiple roots are mapped to the same $k$; append $k$ corresponding to $0$ via~\eqref{eq:myeq1_c}. We must then have $\mu' + 2 \nu' \le n$. Now if $k_j \in \Lambda$, then one of the intervals $(k_{j-1}, k_j)$ and $(k_j, k_{j+1})$ is not stabilizing. Let $\Upsilon$ be the collection of intervals that are not stabilizing. It follows then that if $k \in \Lambda$, $k$ must be the end point of an interval in $\Upsilon$. Note that only one of the unbounded intervals could be stabilizing; thereby,
\begin{align*}
  \mu' \le 1 + 2(|\Upsilon| - 1) = 2|\Upsilon| - 1.
\end{align*}
{Now, Lemma~\ref{lemma:2n_bound_continuous} implies that the set of feedback (real) gains $k$ is divided into $\nu' + \mu' +1$ intervals. As such, $\nu' + \mu' +1 - |\Upsilon|$ is the number of stabilizing intervals, obtained by subtracting the number of  non-stabilizing intervals from the total number of intervals. Hence,}
%Lastly, $\nu' + \mu' +1 - |\Upsilon|$ is the number of stabilizing intervals. Hence,
\begin{align*}
  \mu' + \nu' +1 - |\Upsilon| &\le \mu' + \nu' + 1 - \frac{\mu' + 1}{2} \\ &= \nu' + \frac{\mu' + 1}{2} \le \frac{2 \nu' + \mu' + 1}{2} \le \frac{n+1}{2} = \ceil{\frac{n}{2}}.
\end{align*}
\end{proof}
%%
% use section* for acknowledgment
%%%%
\begin{remark}
  We note that $\ceil{\frac{n}{2}}$ is a tight upper bound. Indeed, Figure~\ref{fig:continuous_tangent} already indicates that there are two disjoint stabilizing intervals for a SISO system with $n=3$. Figure~\ref{fig:connected_components_cont} provides a more transparent view of this; the system parameters are,
  \begin{align*}
    A = \begin{pmatrix}
      0 & 1 & 0 \\
      0 & 0 & 1 \\
      -0.133 & -1.125 & -0.625
    \end{pmatrix}, \; b = \begin{pmatrix}
      0 \\
      0 \\
      1
    \end{pmatrix}, \; c=\begin{pmatrix}
      12.5 \\
      7.5 \\
      1
    \end{pmatrix}.
  \end{align*}
  \begin{figure}[ht]
   \centering
   \input{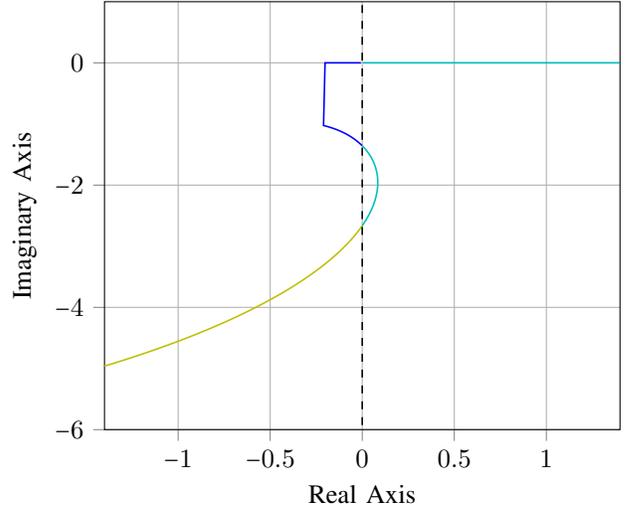}
   \caption{\label{fig:connected_components_cont}The figure depicts how the root with the largest real part varies with respect to the feedback gain $k$. The blue and yellow segments correspond to two stabilizing intervals.}
  \end{figure}
  \end{remark}
%%%%
{We mention that although 
%the assumption that
% in~\cite{feng2019}, i.e., 
stabilizing and non-stabilizing intervals do not necessary interlace for all controllable and observable triplets $(A, b, c^\top)$, this property is indeed generic.\footnote{A property is generic when it holds except on an algebraic (Zariski closed) set.} To this end, we first observe that the set $\ca{CO}$ of all controllable and observable triplets $(A, b, c^{\top})$ is open in $\bb M_{n \times n}(\bb R)\times \bb R^n \times \bb R^n$.
  \begin{proposition}
    The set $\ca{CO}$ is (Zariski) open.
  \end{proposition}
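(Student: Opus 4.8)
The plan is to realize $\ca{CO}$ as the complement of a single algebraic hypersurface in the coordinates of $(A,b,c)$, which is the defining feature of a basic Zariski-open set. First I would form the two Kalman matrices
\[
\mathcal{C}(A,b) = [\,b,\; Ab,\; \dots,\; A^{n-1}b\,], \qquad
\mathcal{O}(A,c) = [\,c,\; A^\top c,\; \dots,\; (A^\top)^{n-1}c\,]^\top,
\]
both of which are $n \times n$. Since matrix multiplication and transposition are polynomial operations in the entries, every entry of $\mathcal{C}$ and $\mathcal{O}$ is a polynomial in the coordinates of $(A,b,c) \in \bb M_n(\bb R) \times \bb R^n \times \bb R^n$; by the Leibniz formula the determinants $P_c \coloneqq \det \mathcal{C}(A,b)$ and $P_o \coloneqq \det \mathcal{O}(A,c)$ are therefore themselves polynomials in these coordinates.

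The key step is the standard reformulation of the Kalman rank condition: because $\mathcal{C}$ and $\mathcal{O}$ are square, $(A,b)$ is controllable if and only if $P_c \neq 0$, and $(A,c^\top)$ is observable if and only if $P_o \neq 0$. Consequently
\[
\ca{CO} = \{(A,b,c^\top): P_c \neq 0\} \cap \{(A,b,c^\top): P_o \neq 0\},
\]
so the complement of $\ca{CO}$ is the union $\{P_c = 0\} \cup \{P_o = 0\}$ of two algebraic sets (equivalently, the single vanishing locus $\{P_c P_o = 0\}$). A finite union of Zariski-closed sets is Zariski closed, and hence $\ca{CO}$ is Zariski open, as claimed.

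I do not expect a genuine obstacle here; the argument is essentially a bookkeeping statement once the two Kalman determinants are identified as polynomials. The only points deserving a line of justification are (i) that the determinant is polynomial in the matrix entries, which is immediate from its definition, and (ii) nonemptiness of $\ca{CO}$, so that $P_c P_o$ is not the zero polynomial and the assertion is not vacuous. Nonemptiness follows from the controllable canonical form, which exhibits an explicitly controllable and observable triple; this also guarantees that the closed complement is a proper, lower-dimensional algebraic subset, which is precisely what makes the subsequent notion of a \emph{generic} property meaningful. The only modelling choice to record is the convention for $\mathcal{O}$, but every convention yields the same vanishing locus, so the conclusion is unaffected.
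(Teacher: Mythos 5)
Your proof is correct and follows essentially the same route as the paper: identify the Kalman determinants as polynomials in the entries of $(A,b,c)$ and realize the complement of $\ca{CO}$ as their common vanishing locus, hence algebraic. If anything, your two-set decomposition $\{P_c = 0\} \cup \{P_o = 0\}$ is slightly cleaner than the paper's three-way disjoint splitting of the complement (whose individual pieces, as defined there, are only locally closed), and your remark on nonemptiness is a useful addition the paper leaves implicit.
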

  \begin{proof}
    Putting $\bf C$ to denote the controllability matrix and $\bf O$ to denote the observability matrix. We observe $\ca{\left(CO\right)}^{c} = \ca C^c \cup \ca O^c \cup \ca C^c \ca O^c$, where $\ca C^c$ denotes the collection of non-controllable but observable systems, $\ca O^c$ denotes the collection of non-observable but controllable systems, and $\ca C^c \ca O^c$ denotes the collection of non-controllable and non-observable systems.\footnote{For a set $\ca A$, $\ca A^c$ denotes its complement.} We note that $\ca C^c$ is the set where all the $n \times n$ minors of $\bf C$ vanish. Similarly for $\ca O^c$ and $\ca C^c\ca O^c$. Hence, $\ca{(CO)}^c$ is an algebraic set. Consequently, $\ca{CO}$ is open.
  \end{proof}
  We now observe the interlacing property is generic.
  \begin{lemma}
    \label{lemma:interlacing}
    For a controllable and observable system $(A, b, c^\top)$, the property that the stabilizing and non-stabilizing intervals in $\ca H$ interlace is generic.
  \end{lemma}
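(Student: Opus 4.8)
The plan is to reduce the failure of interlacing to the vanishing of an algebraic quantity built from the system data, and then to certify that this quantity is not identically zero. Recall from the proof of Lemma~\ref{lemma:n_bound_continuous} that interlacing breaks precisely when two adjacent intervals are both stabilizing, and that this forces the root locus to be \emph{tangent} to the imaginary axis at the separating gain $k_j$. In terms of the crossing polynomial, tangency means that the imaginary crossing $\beta_j$ is a \emph{repeated} real root of $\phi(\beta)$, i.e.\ $\phi(\beta_j)=\phi'(\beta_j)=0$; the only other way two adjacent stabilizing intervals can arise is that two distinct crossings $\beta\neq\beta'$ (not of the form $\pm\beta$) are mapped by relation~\eqref{eq:myeq1_c} to the same gain $k_j$. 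Conversely, if $\phi$ has only simple real roots and the crossing map $\beta\mapsto k(\beta)$ separates them, then every crossing is a single transversal event, so the count of right-half-plane roots changes by a nonzero amount at each $k_j$; hence a stabilizing interval (count zero) can never be adjacent to another, and interlacing holds.

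First I would express the relevant polynomial directly in the raw parameters $(A,b,c^\top)$, bypassing the controllable canonical form whose transformation $T$ is only rational in $(A,b)$. Writing $p(\lambda,k)=p_0(\lambda)-k\,q(\lambda)$ with $p_0(\lambda)=\det(\lambda I - A)$ and (up to sign, by the matrix determinant lemma) $q(\lambda)=c^{\top}\mathrm{adj}(\lambda I - A)\,b$, the coefficients of $p_0$ and $q$ are \emph{polynomials} in the entries of $A,b,c$. Since $k$ is real exactly when $p_0(i\beta)\overline{q(i\beta)}$ is real, the crossing polynomial is $\phi(\beta)=\text{\bf Im}\big(p_0(i\beta)\,\overline{q(i\beta)}\big)$, a real polynomial in $\beta$ whose coefficients are polynomials in $(A,b,c)$. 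Consequently $\mathrm{disc}(\phi)=\mathrm{Res}(\phi,\phi')$ is a polynomial in $(A,b,c)$, and the coincidence of two distinct crossings at one gain is likewise the vanishing of a resultant $R(A,b,c)$. The bad set $\ca B\subseteq\ca{CO}$ on which interlacing fails therefore satisfies $\ca B\subseteq Z\coloneqq\{(A,b,c):\ \mathrm{disc}(\phi)\,R=0\}$, and $Z$ is Zariski closed.

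The main step is to show $Z$ is a \emph{proper} subset, i.e.\ that $\mathrm{disc}(\phi)\,R$ is not the zero polynomial. For this it suffices to exhibit a single controllable and observable triplet whose root locus meets the imaginary axis transversally at pairwise distinct gains; such a triplet has $\phi$ with simple real roots and $R\neq0$, so $\mathrm{disc}(\phi)\,R$ is nonzero at that point and hence nonzero as a polynomial. A convenient witness is a low-order system (e.g.\ $n=2$, or a generic companion pair with a one-parameter $c$ for $n=3$) for which the $k_j$ can be computed in closed form and checked to be simple and distinct. Once $Z$ is a proper algebraic set, its complement is Zariski open and dense; since $\ca{CO}$ is itself Zariski open (preceding proposition), interlacing holds on the nonempty Zariski-open set $\ca{CO}\setminus Z$, which is exactly the assertion of genericity.

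The points I expect to demand the most care are twofold. First, I must confirm that the characterization ``interlacing fails $\Rightarrow$ $\phi$ has a repeated real root, or two distinct crossings share a gain'' is \emph{exhaustive}, so that $\ca B$ is genuinely contained in $Z$; in particular I must absorb the degenerate cases where $\phi$ loses degree or where $q(i\beta)=0$ (so that the crossing is not described by relation~\eqref{eq:myeq1_c}), each of which is again an algebraic condition that can simply be appended as an extra factor to $Z$. Second, and this is the principal obstacle, is the non-degeneracy certificate: a carelessly chosen example might accidentally be tangent, so I would select the witness so that its imaginary-axis crossings are manifestly simple and transversal, guaranteeing $\mathrm{disc}(\phi)\,R\neq0$ there.
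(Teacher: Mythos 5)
Your proposal is correct and follows essentially the same route as the paper's proof: non-interlacing forces a repeated real root of the crossing polynomial $\phi(\beta)$, so the bad set lies in the zero locus of its discriminant --- an algebraic condition on $(A,b,c^\top)$ --- and genericity follows once a witness shows that locus is proper. Your write-up is if anything more careful than the paper's on two side points (polynomial dependence on the raw entries via $q(\lambda)=c^\top\mathrm{adj}(\lambda I-A)\,b$ rather than passing through the controllable canonical form, and the explicit demand that the witness have simple, transversal crossings mapped to distinct gains), but the decomposition and key ingredients coincide.
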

  \begin{proof}
 
    Denote $\ca U \subseteq \ca{CO}$ the subset such that the corresponding Hurwitz stabilizing set $\ca H$ has the interlacing property.
    As we have shown, for any $(A, b, c^\top) \in \ca{CO}$, if we have two adjacent stabilizing (or non-stabilizing) intervals, then the corresponding polynomial $\phi(\beta)$ must have a root with multiplicity greater than one. This means the discriminant of $\phi(\beta)$ must vanish. This in turn is a polynomial in the entries of $A, b, c^\top$. That is,
\begin{align*}
  \ca U^c = \{(A, b, c^\top) \in \ca U^c: \Delta(\phi(\beta)) = 0\}.
\end{align*}
This suggests that $\ca U$ is Zariski open and nonempty (see for example, Figure~\ref{fig:connected_components_cont}).\footnote{Note that $\bb M_n(\bb R) \times \bb R^n \times \bb R^n$ is identified by the affine space $\bb A^{n^2 + 2n}[\bb R]$ and the subset $\ca{CO}$ is equipped with the Zariski subspace topology.}
  \end{proof}
}
%%%%%%  
\subsubsection*{\bf An algorithm for characterizing the connected components of $\ca H$}
Our analysis for deriving the bound $\ceil{\frac{n}{2}}$ for the number of connected
components of $\ca H$ has direct algorithmic implications.
%clearly yields an algorithm on how to identify these stabilizing intervals. 
 We summarize the corresponding algorithm as follows.
 \begin{algorithm}[H]
\floatname{algorithm}{Algorithm $1$:}
\renewcommand{\thealgorithm}{}
\caption{\bf Identifying stabilizing intervals of $\ca H$}
\label{cont_alg}
\begin{algorithmic}[1]
\STATE Find the real roots $\{\lambda_1, \dots, \lambda_l\}$ of the real polynomial~\eqref{eq:alg_poly_cont}. Appending $\{0\}$ to this list if necessary, we get $L = \{0, \lambda_1, \dots, \lambda_l\}$. Map $L$ to $\{k_1, \dots, k_{l'}\}$ (order this list in an increasing manner) by~\eqref{eq:myeq1_c}.
\STATE Identify whether $(-\infty, k_1)$ and $(k_{l'}, +\infty)$ are stabilizing (Observation~\ref{obs:condition_unbounded}).
\STATE If $(-\infty, k_1)$ is stabilizing, check the multiplicity of $\lambda_{1'}$ that maps to $k_1$. If $\lambda_{1'}$ is simple, then $(k_1, k_2)$ is not stabilizing. If $\lambda_{1'}$ is not simple, check whether~\eqref{eq:cont_alg_derivative} is satisfied; if not, i.e., the corresponding derivative is not pure imaginary, then $(k_1, k_2)$ is not stabilizing. If this derivative is pure imaginary, then $(k_1, k_2)$ is stabilizing. Continue the process.
\end{algorithmic}
\end{algorithm} 
The main computational cost of Algorithm $1$ is finding the roots of a real polynomial. The specifics are beyond the scope of this paper; see~\cite{pan2016nearly, kobel2016computing} and references therein for the recent algorithmic developments in this direction. \par
{Let us demonstrate the progression of Algorithm $1$ for the example in Remark~\ref{remark:continuous_tangent}.
%\begin{itemize}
%\item 
The characteristic polynomial of the closed-loop system for this example is then
  \begin{align*}
   p(\lambda, k) = \lambda^3 +0.825 \lambda^2 + 1.21 \lambda + 0.3401666667 \\
    \quad +k (\lambda^2 + 7.5\lambda + 12.5).
  \end{align*}
%\item 
Furthermore, 
  \begin{align*}
    \phi(\beta) = \lambda^5 - 7.5225 \lambda^3 + 1.367899792 \lambda,
  \end{align*}
with nonnegative roots $0, \frac{\sqrt{6018}}{40},$ where $\frac{\sqrt{6018}}{40}$ has multiplicity $2$.
%\item
The root $\beta_1 =0$ is mapped to $k_1=-\frac{625919}{6 \times 10^7}$, and $\beta_2 =\frac{\sqrt{6018}}{40}$ to
$k_2 = \frac{2041}{6000}$.
%\item 
We start from the unbounded interval $(k_2, \infty)$: pick $k$ and test the stability of the closed loop system; n this case, we conclude that $(k_2, \infty)$ is stabilizing.
  %\item 
  Since $k_2$ is acquired from a multiple root of $\phi(\beta)$, we examine the derivative $p'(\lambda, k_2)$;
  since it is pure imaginary, we conclude that the interval $(k_1, k_2)$ is stabilizing.
    %\item
     As such $(-\infty, k_1)$ is not stabilizing by Observation~\ref{obs:condition_unbounded}.
%\end{itemize}
}
\section{Properties of Schur stabilizing feedback gains}
\label{sec:schur}
In this section, we study the properties of the set of static feedback gains~\eqref{eq:discrete_S} for discrete-time linear systems~\eqref{eq:discrete_system}.
%We now consider the discrete-time LTI SISO system
%\begin{align*}
%  &x(k+1) = Ax(k) + b u(k), \\
%  & y(k) = c^\top x(k),
%\end{align*}
%where $A \in M_n(\bb R)$ and $b,c \in \ca M(n \times p; \bb R)$. We are interested in investigating the properties of the set of all static feedback controllers, i.e.,
%\begin{align*}
%  \ca S = \{ k \in \bb R: \rho(A-k bc^\top) < 1\}.
%\end{align*}
Here is our first observation.
%We first observe that for state feedback SISO systems, there is a homeomorphism 
%between the sets $\ca S_{\bf x}$ and $\ca H_{\bf x}$.
\begin{lemma}
  \label{lemma:homeo_d_c}
  There is a homeomorphism $h: \ca H_{\bf x} \to \ca S_{\bf x}$.
\end{lemma}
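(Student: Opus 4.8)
The plan is to reduce the statement to a homeomorphism between two \emph{root regions} and then realize that homeomorphism concretely via the Cayley transform. By Observation~\ref{obs:feedback_equiv} the linear map $k \mapsto T^\top k$ carries $\ca H^{\flat}_{\bf x}$ onto $\ca H_{\bf x}$ and $\ca S^{\flat}_{\bf x}$ onto $\ca S_{\bf x}$ using the \emph{same} $T$; since $k \mapsto T^\top k$ is a linear homeomorphism of $\bb R^n$, it suffices to produce a homeomorphism $\ca H^{\flat}_{\bf x} \to \ca S^{\flat}_{\bf x}$. In controllable canonical form the closed-loop characteristic polynomial is $p(\lambda,k) = \lambda^n + (a_{n-1}-k_{n-1})\lambda^{n-1} + \dots + (a_0 - k_0)$, so the assignment $\alpha \colon k \mapsto (a_0 - k_0, \dots, a_{n-1}-k_{n-1})$ sending a gain to the coefficient vector of $p(\lambda,k)$ is an affine homeomorphism of $\bb R^n$. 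Under $\alpha$ the set $\ca H^{\flat}_{\bf x}$ maps onto the set of Hurwitz coefficient vectors and $\ca S^{\flat}_{\bf x}$ onto the set of Schur coefficient vectors, reducing the problem to showing these two coefficient regions are homeomorphic.

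First I would pass to root space via Corollary~\ref{cor:real_zeros}. The homeomorphism $\hat h \colon \bb R^n \to \bb C^n_{*}/S_n$ carries the Hurwitz coefficient region onto $\{\bb H_{-}^n\}_{*}/S_n$ (all roots in $\bb H_{-}$) and the Schur coefficient region onto $\{\bb D^n\}_{*}/S_n$ (all roots in $\bb D$); restricting $\hat h$ to these subspaces yields homeomorphisms onto the respective root regions. Thus everything now rests on exhibiting a homeomorphism $\{\bb H_{-}^n\}_{*}/S_n \to \{\bb D^n\}_{*}/S_n$.

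The key construction is the Cayley transform $\varphi(z) = \frac{1+z}{1-z}$, a biholomorphism of $\bb H_{-}$ onto $\bb D$: its only pole $z=1$ lies in the open right half plane, and the pole $w=-1$ of the inverse $\varphi^{-1}(w) = \frac{w-1}{w+1}$ lies on $\partial\bb D \setminus \bb D$, so both maps are well-defined and continuous on the relevant open sets. Applying $\varphi$ entrywise defines an $S_n$-equivariant map $\Phi \colon \bb H_{-}^n \to \bb D^n$. Since $\varphi$ has real coefficients, $\varphi(\bar z) = \overline{\varphi(z)}$, so $\Phi$ sends conjugation-invariant tuples to conjugation-invariant tuples and hence restricts to a homeomorphism $\{\bb H_{-}^n\}_{*} \to \{\bb D^n\}_{*}$. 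I would then descend $\Phi$ to the quotients: as $\Phi$ commutes with the $S_n$-action it induces a bijection $\Phi_* \colon \{\bb H_{-}^n\}_{*}/S_n \to \{\bb D^n\}_{*}/S_n$, and continuity of $\Phi_*$ and of its inverse follows from the universal property of the quotient topology, exactly as in the proof of Lemma~\ref{lemma:open_cont} (through the canonical projection $\pi$).

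Finally I would set $h = T^\top \circ (\hat h \circ \alpha)^{-1} \circ \Phi_* \circ (\hat h \circ \alpha) \circ (T^\top)^{-1}$, where $(\hat h\circ\alpha)$ and its inverse are used as homeomorphisms carrying the Hurwitz/Schur canonical-form gains to and from the corresponding root regions; each factor is a homeomorphism between the relevant subspaces, so the composition is a homeomorphism $\ca H_{\bf x} \to \ca S_{\bf x}$. The only genuinely delicate point is this passage to the quotient $\bb C^n/S_n$ — verifying that the equivariant $\Phi$ really induces a continuous map with continuous inverse — which I expect to be the main (though standard) obstacle, handled via $\pi$ and the quotient universal property as noted above.
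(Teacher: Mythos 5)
Your proposal is correct and follows essentially the same route as the paper's proof: reduce to controllable canonical form, identify gains with closed-loop coefficient vectors via an affine map, pass to root space $\bb C^n_{*}/S_n$ via Corollary~\ref{cor:real_zeros}, and transport roots between $\bb H_{-}$ and $\bb D$ by an entrywise M\"obius (bilinear/Cayley) transform descended to the quotient. The only differences are cosmetic — you run the transform from $\bb H_{-}$ to $\bb D$ rather than the reverse, and you spell out the $T^\top$ reduction and the quotient-descent argument that the paper treats as ``without loss of generality'' and ``passing to the quotient,'' respectively.
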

\begin{proof}
Without loss of generality, as we have done throughout this paper, we assume that the pair $(A,b)$ is in {the controllable canonical form}.
Recall that the bilinear transform,
$$g:\lambda \mapsto \frac{\lambda+1}{\lambda-1}$$ is a diffeomorphism between the unit disk $\bb D$ and the open left-half plane $\bb H_{-}$ in $\bb C$. Clearly $G \coloneqq (g, \dots, g): \bb D^n \to \bb H_{-}^n$ defines a diffeomorphism between $\bb D^n \to \bb H_{-}^n$. Passing to the quotient space (modulo the action of the symmetric group), we have a diffeomorphism $\tilde{G}: \bb D^n/S_n \to \bb H_{-}^n/S_n$ given by $\tilde{G} \circ \pi = G$, where $\pi$ is the canonical projection.  Let $\zeta$ be the bijection between $\bb R^n$ and the set of monic $n$th degree polynomials, i.e., if $\alpha = (\alpha_0, \dots, \alpha_{n-1}) \in \bb R^n$, then $\zeta(\alpha) = \lambda^n + \alpha_{n-1} \lambda^{n-1} + \dots + \alpha_0$. Denote the sets,
\begin{align*}
  \ca E &= \{ \alpha \in \bb R^n: \zeta(\alpha) \text{ has all roots in $\bb D$}\},\\
          \ca F &= \{ \alpha \in \bb R^n: \zeta(\alpha) \text{ has all roots in $\bb H_{-}$}\}.
\end{align*}
By Corollary~\ref{cor:real_zeros}, we have following commuting diagram:
\begin{figure}[H]
  \centering
        \begin{tikzcd}
          \ca E \arrow[d, "\hat{\sigma}"] \arrow[r, dashrightarrow, "\hat{\sigma}^{-1} \circ \tilde{G} \circ \sigma"]  & \ca F \arrow[d, "\hat{\sigma}"] \\
\{\bb D^n\}_{*}/S_n \arrow[r, "\tilde{G}"] & \{\bb H^n_{-}\}_{*}/S_n,
        \end{tikzcd}
 % \caption{\label{fig:label} }
\end{figure}
\noindent where $\{\bb D^n\}_{*}/S_n \subseteq \bb C^n/S_n$ denotes the $n$-dimensional
vector invariant under conjugation with entries inside the unit disk of $\bb C$; 
similarly for $\{\bb H^n_{-}\}_{*}/S_n$ (recall the notation in \S~\ref{sec:notations}). 

Now, the map
\begin{align*}
  \hat{\sigma}^{-1} \circ \tilde{G} \circ \hat{\sigma}: \ca E \to \ca F,
\end{align*}
defines a homeomorphism between $\ca E$ and $\ca F$. 
However, note that $\ca S_{\bf x} = a - \ca E$ and $\ca H_{\bf x} = a - \ca F,$ 
where $a$ is the last row of $A$. This completes the proof since a translation in $\bb R^n$  is 
a diffeomorphism.
\end{proof}
\begin{remark}
    The above result immediately implies that $\ca S_{\bf x}$ is open, connected and contractible since $\ca H_{\bf x}$ is. In our subsequent discussion, we will also outline more direct proofs for the above facts since they provide additional insights into the structure of the set of stabilizing feedback gains for discrete-time linear systems.
\end{remark}
In view of Lemma~\ref{lemma:homeo_d_c}, one might be inclined to construct a similar homeomorphism between 
the sets $\ca S$ and $\ca H$. However, as it turns out, the technique adopted in Lemma~\ref{lemma:homeo_d_c} can not be generalized for this purpose. 
For example, when $k_0 \in \ca S$, it does follow that $k_0 c \in \ca S_{\bf x}$. However,
under the homeomorphism constructed in Lemma~\ref{lemma:homeo_d_c}, the image of $k_0 c$ is not necessarily a scalar multiple of $c$. For a concrete example, one may consider the triplet $(A, b, c^\top)$
%where the last row of $A^{\flat}$ is zero; we now notice that $(A^{\flat}-kb^{\flat}c^T - I)^{-1}(A^{\flat}-kb^{\flat}c^T + I)$ can not be represented by $A^{\flat}-k' b^{\flat} c^T$ for some $k' \in \bb R$. \footnote{Note that we are not claiming that such homeomorphism does not exist. The non-existence of such a homeomorphism requires a deeper understanding of these two topological subspaces. To the best of our knowledge such a homeomorphism has not been reported in the literature.} 
given by
\begin{align*}
A = \begin{pmatrix}
  0 & 1 & 0 & 0 \\
  0 & 0 & 1 & 0 \\
  0 & 0 & 0 & 1 \\
  0 & 0 & 0 & 0
\end{pmatrix}, \; b = \begin{pmatrix}
  0 \\
  0 \\
  0 \\
  1
\end{pmatrix}, \; c = \begin{pmatrix}
                     1 \\
                     2 \\
                     3 \\
                     4
                   \end{pmatrix}.
\end{align*}
We note $0 \in \ca S$ and $\ca Z_{p(\lambda, A-0 b c^\top)} = \{0, \dots, 0\}$. Under the bilinear transform, the zeros will be mapped to $\{-1, \dots, -1\}$, which corresponds to characteristic polynomial $p(\lambda)=\lambda^4 - 4 \lambda^3 + 6 \lambda^2 - 4\lambda +1$. However, no $k \in \ca H$ can yield a closed-loop system $A-k b c^\top$ with this characteristic polynomial.\footnote{Note that we are not claiming that such homeomorphism does not exist. The non-existence of such a homeomorphism requires a deeper understanding of these two topological subspaces. To the best of our knowledge such a homeomorphism has not been reported in the literature.} 

We now demonstrate that:
\begin{enumerate}
  \item $\ca S$ and $\ca S_{\bf x}$ are both open in the Euclidean topology.
    \item $\ca S$ and $\ca S_{\bf x}$ are both bounded.
      \item $\ca S$ and $\ca S_{\bf x}$ are convex if the system has two states.
\item $\ca S_{\bf x}$ is connected and $\ca S$ has at most $\ceil{\frac{n}{2}}$ connected components.
\end{enumerate}
Most of the proofs for the discrete time case have a similar flavor as their continuous counterparts. 
However, the proof for the upper bound on the number of connected components of $\ca S$ 
has a few distinct steps. 
\begin{lemma}
The set $\ca S$ is open in $\bb R$ and $\ca S_{\bf x}$ is open in $\bb R^n$.
\end{lemma}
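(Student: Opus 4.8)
The plan is to mimic the structure of the proof of Lemma~\ref{lemma:open_cont}, where openness of $\ca H$ and $\ca H_{\bf x}$ was established by exhibiting the sets as preimages of an open interval under a continuous map. The only substantive change is that the map $\max \circ\, \text{\bf Re}$ (which measured distance to the Hurwitz region) must be replaced by the map $\max \circ\, |\cdot|$, i.e. the spectral radius, which measures distance to the Schur region $\bb D$.

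First I would invoke Observation~\ref{obs:feedback_equiv} to reduce, without loss of generality, to the case where $(A, b)$ is in controllable canonical form, so that for a scalar gain $k$ the closed-loop characteristic polynomial has the explicit form
\begin{align*}
  p(\lambda, k) = \lambda^n + (a_{n-1} - k c_{n-1})\lambda^{n-1} + \dots + (a_0 - k c_0),
\end{align*}
with coefficients that are continuous (indeed affine) in $k$. Next I would define the modulus-maximum function $\omega \coloneqq \max \circ\, |\cdot| : \bb C^n_{*} \to \bb R$, which is continuous, and argue exactly as in Lemma~\ref{lemma:open_cont} via the quotient topology (Theorem 3.73 in~\cite{lee2010introduction}) that it descends to a unique continuous map $\tilde{\omega}: \bb C^n_{*}/S_n \to \bb R$ with $\omega = \tilde{\omega} \circ \pi$. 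Composing $\tilde{\omega}$ with the root map $k \mapsto \ca Z_{p(\lambda, k)}$ — which is continuous by Corollary~\ref{cor:real_zeros} (the coefficients-to-roots homeomorphism, precomposed with the continuous coefficient map $k \mapsto (a_0 - kc_0, \dots, a_{n-1} - kc_{n-1})$) — yields a continuous map $g_S: k \mapsto \rho(A - kbc^\top)$.

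Then $\ca S = g_S^{-1}\big((-\infty, 1)\big)$ is open as the preimage of an open interval under a continuous map, whence it is a union of disjoint open intervals. The statement for $\ca S_{\bf x}$ follows the identical template: replacing the scalar $k$ by a vector $k \in \bb R^n$, the coefficient map $k \mapsto a - k$ (equivalently $k \mapsto (a_0 - k_0, \dots, a_{n-1}-k_{n-1})$) is still continuous, so $g_{S,\bf x}: \bb R^n \to \bb R$, $k \mapsto \rho(A - bk^\top)$, is continuous and $\ca S_{\bf x} = g_{S,\bf x}^{-1}\big((-\infty,1)\big)$ is open in $\bb R^n$.

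I do not anticipate a genuine obstacle here, since the argument is structurally identical to the continuous-time case; the one point requiring a moment's care is confirming that the modulus map $|\cdot|$ (rather than $\text{\bf Re}$) is still $S_n$-invariant and continuous, so that it legitimately factors through the quotient $\bb C^n_{*}/S_n$. This is immediate because permuting the entries of a tuple does not change the maximum of their moduli, so the factorization property of the quotient topology applies verbatim. Hence the hardest part is purely notational — swapping $\text{\bf Re}$ for $|\cdot|$ and $0$ for $1$ in the threshold — and the proof can be stated quite briefly by reference to Lemma~\ref{lemma:open_cont}.
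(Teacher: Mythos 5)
Your proposal is correct and follows essentially the same route as the paper's proof: both replace $\max \circ\, \text{\bf Re}$ with $\max \circ\, |\cdot|$, factor it through the quotient $\bb C^n_{*}/S_n$ via the quotient-topology argument, and realize $\ca S$ and $\ca S_{\bf x}$ as preimages of an open set under the resulting continuous map. The only cosmetic difference is that the paper takes the codomain to be $[0,\infty)$ and uses the preimage of $[0,1)$ (open in the subspace topology), whereas you use the preimage of $(-\infty,1)$ in $\bb R$; these are equivalent since the spectral radius is nonnegative.
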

\begin{proof}
  The proof proceeds similar to the proof of Lemma~\ref{lemma:open_cont}. We only need to observe
  that the composition map, $$\upsilon \coloneqq \max \circ \; |\cdot|: \bb C_{*}^n \to [0, \infty),$$ is continuous, even when adopted on the quotient space. That is, there is a unique continuous map $\tilde{\upsilon}: \bb C_{*}^n/S_n \to [0, \infty)$ such that $\upsilon = \tilde{\upsilon} \circ \pi$. 
  Hence the map,
\begin{align*}
  k \mapsto {\cal Z}_{p(\lambda, A-kbc^\top)} \mapsto \max \left(|{\cal Z}_{p{(\lambda, A-kbc^\top)}}|\right)
\end{align*}
is continuous.
The interval $[0, 1)$ is open in the subspace topology of $[0, \infty)$ and $\ca S$ is the preimage of $[0, 1)$ under 
the above map; thereby, $\ca S$ is open. 

In order to show that $\ca S_{\bf x}$ is open in $\mathbb R^n$, 
we only need to observe that the map $F: \bb R^n \to [0, \infty)$, given by
\begin{align*}
  k \mapsto {\cal Z}_{p(\lambda, A-bk^\top)}\mapsto \max \left(|{\cal Z}_{p{(\lambda,A-bk^\top)}}|\right),
\end{align*}
is continuous and $\ca S_{\bf x} = F^{-1}([0, 1))$.
\end{proof}
We now note that contrary to the continuous time case, 
the sets $\ca S$ and $\ca S_x$ are bounded.
\begin{proposition}
  \label{prop:discrete_bounded}
 The set $\ca S_{\bf x}$ is bounded in $\bb R^n$.
\end{proposition}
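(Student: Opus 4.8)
The plan is to reduce to the controllable canonical form and then exploit Vieta's formulas to show that Schur stability forces every coefficient of the closed-loop characteristic polynomial—and hence every component of $k$—to lie in a bounded range.

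First I would invoke Observation~\ref{obs:feedback_equiv}, so it suffices to establish boundedness of $\ca S^{\flat}_{\bf x}$ for the pair $(A^{\flat}, b^{\flat})$ in controllable canonical form, with last row $a=(a_0,\dots,a_{n-1})$. For $k=(k_0,\dots,k_{n-1})^\top$, the characteristic polynomial of $A^{\flat}-b^{\flat}k^\top$ is then
\[
p(\lambda,k)=\lambda^n+(a_{n-1}-k_{n-1})\lambda^{n-1}+\dots+(a_0-k_0),
\]
and by definition $k\in\ca S^{\flat}_{\bf x}$ precisely when all $n$ roots of $p(\lambda,k)$ lie in the open unit disk $\bb D$.

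The key step is to write the coefficients in terms of the roots via Vieta's formulas. If $z_1,\dots,z_n$ denote the roots of $p(\lambda,k)$, then the coefficient of $\lambda^j$ equals $(-1)^{n-j}e_{n-j}(z_1,\dots,z_n)$, the $(n-j)$th elementary symmetric function of the roots. When $k\in\ca S^{\flat}_{\bf x}$ we have $|z_i|<1$ for every $i$, so each such symmetric function is a sum of $\binom{n}{n-j}$ products of moduli each strictly less than $1$; consequently $|a_j-k_j|=|e_{n-j}(z)|<\binom{n}{n-j}\le 2^n$ for every $j$. Rearranging gives $|k_j|<|a_j|+\binom{n}{n-j}$, a bound independent of $k$. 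Hence every component of $k$ is confined to a bounded interval, so $\ca S^{\flat}_{\bf x}$ is contained in a bounded box of $\bb R^n$.

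Finally, since $\ca S_{\bf x}=T^{\top}\ca S^{\flat}_{\bf x}$ by Observation~\ref{obs:feedback_equiv} and the linear map $k\mapsto T^{\top}k$ carries bounded sets to bounded sets, $\ca S_{\bf x}$ is bounded in the original coordinates as well. I do not expect a genuine obstacle here beyond the bookkeeping with elementary symmetric functions; the substantive point is the sharp contrast with the continuous-time result that $\ca H_{\bf x}$ is unbounded. That contrast is explained precisely by the fact that Hurwitz stability places no upper bound on the size of the coefficients (the roots may tend to $-\infty$), whereas Schur stability caps them uniformly.
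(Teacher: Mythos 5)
Your proof is correct and takes essentially the same approach as the paper's: reduce to the controllable canonical form and then use Vieta's formulas to conclude that Schur stability (all roots in the open unit disk $\bb D$) uniformly bounds the closed-loop coefficients and hence the entries of $k$. The only difference is cosmetic—you make explicit the binomial bound $|e_{n-j}(z)|<\binom{n}{n-j}$ and the final observation that the linear map $k\mapsto T^{\top}k$ preserves boundedness, both of which the paper leaves implicit.
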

\begin{proof}
  It suffices to assume that $(A, b)$ is in the {controllable canonical form}. For any $k \in \bb R^n$, the characteristic polynomial
  of the corresponding closed-loop system assumes the form,\footnote{The entries of $k$ are consistent with the way the are indexed in the characteristic polynomial.}
\begin{align*}
  p(\lambda, k) = \lambda^{n} + (a_{n-1} - k_{n-1})\lambda^{n-1}+ \dots + (a_0 - k_0).
\end{align*}
Let $\lambda_1, \dots, \lambda_n$ denote the zeros of $p(\lambda,k)$.
By Vieta's formula, the coefficients of $p(\lambda, k)$ are elementary symmetric functions of its roots $\lambda_1, \dots, \lambda_n$. 
For every $k \in \ca S_{\bf x}$, $|\lambda_j| < 1$ for all $j$; as such, the coefficients of 
$p(\lambda, k)$, and by extension, $k_0, \ldots, k_{n-1}$, are bounded.
\end{proof}
\begin{corollary}
  \label{cor:bounded_discrete}
  For the controllable and observable output feedback system $(A, b, c^\top)$, the set $\ca S$ is bounded.
\end{corollary}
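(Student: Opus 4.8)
The plan is to reduce the boundedness of the output-feedback set $\ca S$ to the already-established boundedness of the state-feedback set $\ca S_{\bf x}$, exactly in parallel with how $\ca H$ and $\ca H_{\bf x}$ were related in Observation~\ref{obs:state-output-relation}. First I would invoke the Schur analog of that observation, namely that $k \in \ca S$ holds if and only if $kc \in \ca S_{\bf x}$. This is immediate from the definitions: the closed-loop matrix under output feedback is $A - kbc^\top = A - b(kc)^\top$, which coincides with the state-feedback closed loop corresponding to the gain $kc \in \bb R^n$, so the two stability conditions (spectral radius less than $1$) are literally the same condition.

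Next I would use the two structural facts already available. By Proposition~\ref{prop:discrete_bounded}, $\ca S_{\bf x}$ is bounded in $\bb R^n$, so there exists $M > 0$ with $\|v\| \le M$ for every $v \in \ca S_{\bf x}$. Since the triple $(A, b, c^\top)$ is observable, the observability Kalman rank condition forces $c \neq 0$, hence $\|c\| > 0$. These are the only two ingredients needed.

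Finally I would combine them. For any $k \in \ca S$ we have $kc \in \ca S_{\bf x}$, and therefore
\begin{align*}
  |k|\,\|c\| = \|kc\| \le M,
\end{align*}
which gives $|k| \le M/\|c\|$. Thus $\ca S \subseteq [-M/\|c\|,\, M/\|c\|]$ and $\ca S$ is bounded.

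There is no real obstacle here: the argument is a one-line pullback along the line $k \mapsto kc$, and the only point that must be stated explicitly (rather than assumed) is that observability guarantees $c \neq 0$, without which the map $k \mapsto kc$ would collapse and the bound would fail. The substantive work has already been done in Proposition~\ref{prop:discrete_bounded}.
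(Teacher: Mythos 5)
Your proposal is correct and follows exactly the paper's route: the paper's proof consists of the single observation $\ca S = \{k \in \bb R : kc \cap \ca S_{\bf x} \neq \emptyset\}$, leaving the appeal to Proposition~\ref{prop:discrete_bounded} and the fact that observability forces $c \neq 0$ implicit. You have simply spelled out those implicit steps, which is a faithful (and slightly more complete) rendering of the same argument.
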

\begin{proof}
  We observe that
%\begin{align*}
  $\ca S = \{k \in \bb R: kc \,\cap \,\ca S_{\bf x} \neq \emptyset\}.$
 % \end{align*}
   \end{proof}
In regards to convexity properties, it is known that the set $\ca S_{\bf x}$ is not convex in general~\cite{bialas1985convex, fell1980zeros}. 
%We now remark that when the system only has two states $\ca S_{\bf x}$ and $\ca S$ are convex.
\begin{observation}
  The set $\ca S_{\bf x}$ is convex when $n=2$.
\end{observation}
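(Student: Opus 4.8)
The plan is to mirror the continuous-time argument (Observation~\ref{obs:state_convex}) by reducing to the controllable canonical form and invoking the Schur--Cohn stability criterion, which for degree-two monic polynomials is particularly simple. First I would appeal to Observation~\ref{obs:feedback_equiv} (and its discrete analogue, whose proof is identical) to assume without loss of generality that $(A,b)$ is in controllable canonical form. Writing $k=(k_0,k_1)^\top$, the closed-loop characteristic polynomial is $p(\lambda,k)=\lambda^2 + (a_1-k_1)\lambda + (a_0-k_0)$.

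The key step is to recall the exact Schur-stability region for degree-two monic real polynomials. A monic quadratic $\lambda^2 + p_1 \lambda + p_0$ has both roots strictly inside the unit disk if and only if
\begin{align*}
  |p_0| < 1, \qquad |p_1| < 1 + p_0.
\end{align*}
This is the Jury/Schur--Cohn test specialized to $n=2$, and it carves out exactly the open triangle in the $(p_1,p_0)$-plane with vertices $(-2,1),(2,1),(0,-1)$. The next step is to observe that this region is the interior of a triangle, hence convex. Since the map $(k_0,k_1)\mapsto (p_1,p_0)=(a_1-k_1,\,a_0-k_0)$ is an affine isomorphism of $\bb R^2$, the preimage $\ca S_{\bf x}^{\flat}$ of this convex set is again convex; and by Observation~\ref{obs:feedback_equiv}, $\ca S_{\bf x}=T^\top \ca S_{\bf x}^{\flat}$ is the image of a convex set under the linear map $T^\top$, hence convex.

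Concretely, given two stabilizing gains $k,k'\in\ca S_{\bf x}^{\flat}$ I would check directly that $\hat k=(1-\delta)k+\delta k'$ stabilizes for $\delta\in(0,1)$: the coefficients of $p_{\hat k}$ are the corresponding convex combinations of the coefficients of $p_k$ and $p_{k'}$, so writing the two Schur inequalities as the conditions $|a_0-k_0|<1$ and $|a_1-k_1|<1+(a_0-k_0)$, the first is preserved by convex combination because $|\cdot|$ is convex, and the second rearranges to two linear (hence convexity-preserving) inequalities $a_1-k_1 < 1+a_0-k_0$ and $-(a_1-k_1) < 1+a_0-k_0$. Each of these four defining inequalities is convex in $k$, so their intersection is convex.

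The only genuine obstacle is establishing the precise form of the Schur--Cohn conditions for $n=2$ and confirming that the resulting region is convex rather than merely connected; unlike the Hurwitz case, Schur stability is not simply positivity of coefficients, so one must not carry the continuous-time proof over verbatim. Once the triangular description is in hand, convexity is immediate and the affine change of coordinates transfers it to $\ca S_{\bf x}$. I would state the Schur--Cohn inequalities explicitly (citing a standard reference such as~\cite{zabczyk2009mathematical}) to keep the argument self-contained.
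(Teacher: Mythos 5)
Your proposal is correct, but it follows a different route than the paper. The paper's proof never states the Schur--Cohn/Jury criterion as an if-and-only-if characterization: it takes two stabilizing gains $k, k'$, forms $p_{\hat k} = (1-\delta)p_k + \delta p_{k'}$, and analyzes the roots of $p_{\hat k}$ directly by cases. If the roots are complex conjugates, Vieta gives $|z_1|^2 = (1-\delta)k_0 + \delta k_0' < 1$; if they are real, the positivity of $p_k(\pm 1)$ and $p_{k'}(\pm 1)$ (hence of $p_{\hat k}(\pm 1)$) together with the product of roots being less than $1$ forces both roots into $(-1,1)$. In effect the paper re-derives inline exactly the three inequalities you quote ($p(\pm 1)>0$ and $|p_0|<1$), but it only ever needs their \emph{necessity} for stable polynomials, so the argument stays self-contained. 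Your proof instead imports the full sufficiency of the Jury test, which you would need to cite or prove; in exchange you get a more transparent picture: the Schur region of monic quadratics is literally the open triangle with vertices $(\pm 2, 1)$ and $(0,-1)$, convexity is then immediate as an intersection of open half-planes, and the affine pullback $(k_0,k_1)\mapsto(a_1-k_1,a_0-k_0)$ together with $\ca S_{\bf x} = T^\top \ca S^{\flat}_{\bf x}$ (Observation~\ref{obs:feedback_equiv} already covers the Schur case, so no separate discrete analogue is needed) finishes the proof. Your framing also makes it evident why the statement is special to $n=2$: for $n>2$ the Schur coefficient region is no longer polyhedral, and indeed not convex, which is the obstruction both proofs silently rely on.
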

\begin{proof}  Without loss of generality, we assume that the pair $(A, b)$ is in {the controllable canonical form}. 
Suppose that $k = (k_0, k_1)^\top$ and $e = (k'_0, k'_1)^\top$ are two stabilizing controllers. The characteristic polynomials of the corresponding closed-loop systems are then,
  \begin{align*}
    p_k(\lambda) = \lambda^2 + k_1 \lambda + k_0, \; \mbox{and} \; p_{k'}(\lambda) = \lambda^2 + k'_1 \lambda + k'_0.
  \end{align*}
  Note that by Vieta's formula, $k_0 < 1$ and $k'_0 < 1$ since the zeros are inside the unit disk.
  For ${\hat k} = (1-\delta) k + \delta k'$ with $\delta \in [0, 1]$, consider the corresponding characteristic equation
   $p_{\hat k}(\lambda) = (1-\delta)p_k(\lambda) + \delta p_{k'}(\lambda)$. 
  %   is the given by $p_{\hat k}(\lambda)=\lambda^2 + ((1-\delta)k_1 + \delta k'_1) x + ((1-\delta)k_0 + \delta k'_0)$. 
  %
If $p_{\hat k}$ has two conjugate zeros $z_1, \bar{z}_1$, then $|z_1| = |\bar{z}_1| < 1$ by Vieta's formula since $|z_1|^2 = (1-\delta) k_0 + \delta k'_0 < 1$. On the other hand, if $p_{\hat k}$ has two real zeros, suppose that one of them is $1$ or $-1$ (note that by Vieta's formula, the product of two zeros is strictly less than $1$; hence the other zero is inside the open unit disk), i.e., $p_{\hat k}(1) = 0$ or $p_{\hat k}(-1) = 0$. Note that $p_k(1), p_k(-1), p_{\hat k}(1), p_{\hat k}(-1)$ are all positive since if $p_k$ has two conjugate zeros, then $p_k(\lambda)$ is positive on the real line; if $p_k$ has two real zeros, by the assumption that the zeros are in $(-1, 1)$, $p_k(1), p_k(-1)$ are positive. This is a contradiction to the assumption that $p_{\hat k}(1) = 0$ or $p_{\hat k}(-1) = 0$ (as $p_{\hat k}(\lambda)$ is the convex combination of $p_k(\lambda)$ and $p_{k'}(\lambda)$).
Hence the zeros of $p_{\hat k}$ must be in the open unit disk for every $\delta \in [0,1]$. 
\end{proof}
\subsection{Connectedness properties of $\ca S_{\bf x}$ and $\ca S$}
The following topological property of the set $\ca S_{\bf x}$ has immediate algorithmic implications.
\begin{lemma}
  For the state feedback system, the set $\ca S_{\bf x}$ is connected and contractible in $\bb R^n$.
\end{lemma}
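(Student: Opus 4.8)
The plan is to mirror the argument of Lemma~\ref{lemma:continuous_state_connected}, replacing the left half-plane $\bb H_{-}$ with the open unit disk $\bb D$, and to perform the deformation on the space of \emph{roots} (where contractibility is transparent because $\bb D$ is star-shaped about the origin) rather than in gain space (where, as noted after Lemma~\ref{lemma:continuous_state_connected}, the set need not even be star-convex). By Observation~\ref{obs:feedback_equiv} it suffices to treat the controllable canonical form, since $k \mapsto T^\top k$ is a diffeomorphism of $\bb R^n$ and preserves both connectedness and contractibility. In this coordinate, writing $a$ for the last row of $A$, the proof of Lemma~\ref{lemma:homeo_d_c} already identifies $\ca S_{\bf x} = a - \ca E$, where $\ca E = \{\alpha \in \bb R^n : \zeta(\alpha) \text{ has all roots in } \bb D\}$ is the set of coefficient vectors of monic polynomials whose zeros lie inside the unit disk.

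First I would show that $\{\bb D^n\}_{*}/S_n$ is contractible. Consider the radial homotopy $\bar H \colon [0,1] \times \bb D^n \to \bb D^n$ defined by $\bar H(t, v) = (1-t) v$. For every $t \in [0,1]$ and every $v$ with $|v_j| < 1$ one has $|(1-t) v_j| = (1-t)|v_j| \le |v_j| < 1$, so $\bar H$ stays inside $\bb D^n$; moreover scaling by the real scalar $1-t$ commutes with entrywise conjugation and with the $S_n$-action, hence $\bar H$ restricts to $\{\bb D^n\}_{*}$ and is $S_n$-equivariant. Consequently $\bar H$ descends to a continuous map on the quotient (by the universal property of the quotient topology, exactly as in Lemma~\ref{lemma:open_cont}), yielding a contraction of $\{\bb D^n\}_{*}/S_n$ onto the class of the origin $(0, \dots, 0)$.

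Next I would transport this contraction through the coefficient-to-root homeomorphism. By Corollary~\ref{cor:real_zeros} the restriction $\hat\sigma \colon \ca E \to \{\bb D^n\}_{*}/S_n$ is a homeomorphism, so $\ca E$ is homeomorphic to a contractible space and is therefore itself contractible; explicitly, $\tilde H(t, \alpha) = \hat\sigma^{-1}(\bar H(t, \hat\sigma(\alpha)))$ is a contraction of $\ca E$ onto $\hat\sigma^{-1}(0) = 0 \in \bb R^n$, the coefficient vector of $\lambda^n$. Since $\ca S_{\bf x} = a - \ca E$ is the image of $\ca E$ under the affine homeomorphism $\alpha \mapsto a - \alpha$ of $\bb R^n$, it too is contractible, and contractibility implies path-connectedness and hence connectedness.

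The step requiring the most care — and the reason for carrying out the deformation on the root side — is the transport through Vieta's map $\hat\sigma^{-1}$: the straight-line contraction is natural only on the roots, whereas the induced deformation $\tilde H$ on the gain/coefficient side is genuinely nonlinear, so one cannot simply contract $\ca S_{\bf x}$ radially in $\bb R^n$. Verifying that $\bar H$ is well-defined and continuous on the quotient $\bb C^n_{*}/S_n$ (equivariance together with the universal property) is the only other point that must be checked, and it is routine given the machinery already used in Lemma~\ref{lemma:open_cont}.
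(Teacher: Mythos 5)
Your proposal is correct and takes essentially the same approach as the paper: the paper likewise reduces to the root side and contracts $\{\bb D^n\}_{*}/S_n$ to the origin via the radial segment $(1-\delta)v$, then pulls the result back through the coefficient--root homeomorphism of Corollary~\ref{cor:real_zeros} and an affine translation, exactly as in Lemma~\ref{lemma:continuous_state_connected}. The only difference is that you spell out the $S_n$-equivariance, descent to the quotient, and transport through $\hat\sigma^{-1}$ explicitly, details the paper leaves implicit.
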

\begin{proof}
  The proof proceeds similar to the proof of Lemma~\ref{lemma:continuous_state_connected}. Putting 
  $\Gamma = \{\lambda \in \bb C: |\lambda | < 1\}^n$, it suffices to show that $\Gamma_{*}/S_n$ is connected and contractible.\footnote{As such, $\Gamma_*$ is the subset of $n$-dimensional complex-valued vectors with entries having modulus less than one and closed under conjugation.}  But this is immediate since any $v \in \Gamma_{*}/S_n$ is connected to $(0, \dots, 0)$ by the convex line segment $(1-\delta) v + \delta \, 0$ ($\delta \in (0,1)$) in $\Gamma_{*}$.
\end{proof}
For the output feedback case, the set $\ca S$ is not connected in general. Following is an example of a SISO system with more than one path-connected component.\footnote{
This example actually shows that the bound in Lemma~\ref{lemma:n/2_components_discrete} is tight.}
\begin{example} \label{ex.3}
Consider the LTI system $(A, b, c^\top)$ with,
  \begin{align*}
    A = \begin{pmatrix}
  0 & 1 & 0 & 0 \\
  0 & 0 & 1 & 0 \\
  0 & 0 & 0 & 1 \\
  0 & 0 & 0 & 0
\end{pmatrix}, \; b = \begin{pmatrix}
  0 \\
  0 \\
  0 \\
  1
\end{pmatrix}, \; c = \begin{pmatrix}
                     0.5184 \\
                     -2.448 \\
                     4.33  \\
                     -3.4 
                   \end{pmatrix}.
 \end{align*}
The feedback controllers are then parametrized by intervals in $\bb R$.
% The corresponding root locus
{Figure~\ref{fig:a}} %
depicts that 
the roots of the closed loop system are inside the unit disk for some interval, then become unstable, 
and subsequently reenter the unit disk as $k$ varies; as such, $\ca S$ has two connected components.
 \begin{figure}[ht]
   \centering
   \includegraphics[width=0.4\textwidth]{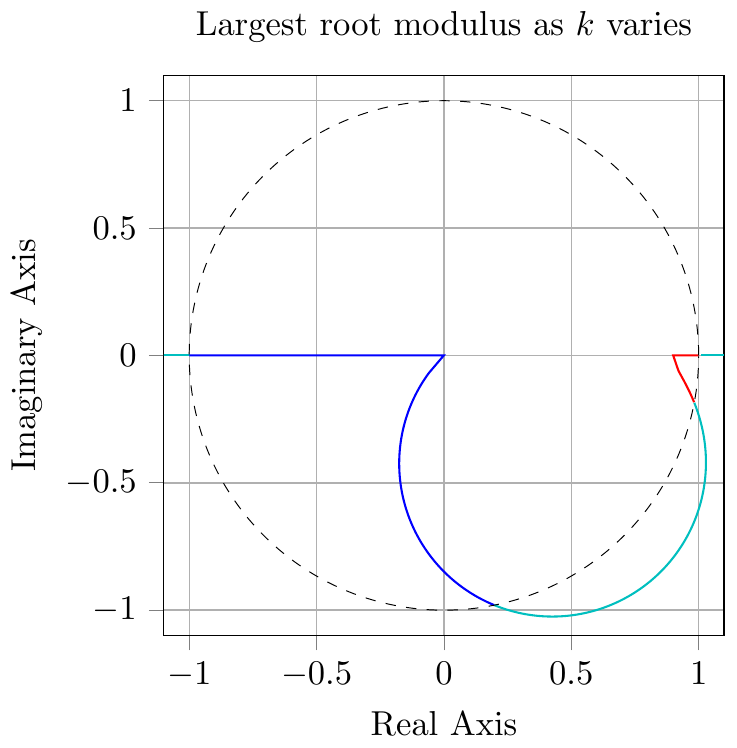}
   \caption{\label{fig:a}The figure depicts how the root with largest modulus varies with respect to the feedback gain $k$. The blue and red segments correspond to two stabilizing intervals.}
 \end{figure}
\end{example}
\subsubsection*{\bf Connected Components of $\ca S$ for SISO Systems}
We now show that there is at most $\ceil{\frac{n}{2}}$ connected components in $\ca S$.
 We will follow a similar line of reasoning as for the continuous systems presented in~\S~\ref{subsec:comonents_continuous}.
 % since the upper bound $n$ is straightforward to obtain. 
%Obtaining the upper bound of $\ceil{\frac{n}{2}}$, in the meantime, requires several technical details. 
 We shall demonstrate the upper bound $n$ first, followed by the upper bound of $\ceil{\frac{n}{2}}$.
The essential ideas for proving the two results are similar to the strategy we followed in Lemmas~\ref{lemma:2n_bound_continuous} and~\ref{lemma:n_bound_continuous}, with some subtle differences.
\begin{lemma}
  \label{lemma:n_components_discrete}
  The set $\ca S$ has at most $n$ connected components.
\end{lemma}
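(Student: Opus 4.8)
The plan is to mirror the continuous-time argument in Lemma~\ref{lemma:2n_bound_continuous}, replacing the imaginary axis (the boundary of $\bb H_{-}$) with the unit circle (the boundary of $\bb D$), which is the relevant stability boundary in the Schur case. First I would reduce, via Observation~\ref{obs:feedback_equiv} and the accompanying remark, to the case where $(A,b)$ is in controllable canonical form, so that for $k \in \bb R$ the closed-loop characteristic polynomial has the explicit form
\begin{align*}
  p(\lambda, k) = \lambda^n + (a_{n-1} - k c_{n-1})\lambda^{n-1} + \dots + (a_0 - k c_0).
\end{align*}
The feedback line $\ell(k) = a - kc$ in $\bb R^n$ will again sweep through coefficient space, and the set $\ca S$ is the union of open intervals on which $p(\lambda,k)$ has all $n$ roots inside $\bb D$.

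Next I would track how many roots lie inside the unit disk as $k$ varies, in the same spirit as the function $m(k)$ from Lemma~\ref{lemma:2n_bound_continuous}, but now using the contour $\gamma$ equal to the positively oriented unit circle $\{\lambda = e^{i\theta}: \theta \in [0, 2\pi)\}$ and the winding-number integral
\begin{align*}
  m(k) = \frac{1}{2\pi i}\int_{\gamma} \frac{p'(\lambda,k)}{p(\lambda,k)}\, d\lambda,
\end{align*}
which by Cauchy's Argument Principle counts the roots of $p(\lambda,k)$ strictly inside $\bb D$ whenever no root sits on $\gamma$. This $m(k)$ is integer-valued and locally constant, hence constant on each maximal interval of $k$ that avoids a critical value; a critical value is exactly a $k$ for which $p(\lambda,k)$ has a root on the unit circle, i.e.\ $\lambda_0 = e^{i\beta}$ for some real $\beta$. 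So the key step is to count the distinct critical gains $k$ arising from roots on the unit circle and show there are at most $n$ of them; then the real line is cut into at most $n+1$ intervals, and stability is constant on each.

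The main technical point is the counting of unit-circle crossings. Setting $\lambda = e^{i\beta}$ and demanding $p(e^{i\beta}, k) = 0$ with $k \in \bb R$, I would solve for $k$ as
\begin{align*}
  k = \frac{e^{in\beta} + a_{n-1} e^{i(n-1)\beta} + \dots + a_0}{c_{n-1} e^{i(n-1)\beta} + \dots + c_0},
\end{align*}
provided the denominator does not vanish (vanishing would force $e^{i\beta}$ to be a root for all $k$, forcing $\ca S = \emptyset$). Reality of $k$ means the imaginary part of the numerator times the conjugate denominator must vanish, yielding a trigonometric equation in $\beta$; the substitution $z = e^{i\beta}$ turns this into a real polynomial condition whose relevant (complex-conjugate-paired) solutions on the circle are finite in number. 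Here is where the discrete case diverges from the continuous one: on the imaginary axis the pair $\pm\beta$ collapsed roots of an even polynomial $\upsilon(\beta^2)$ into a clean count, whereas on the unit circle the natural pairing is the conjugate pair $\{e^{i\beta}, e^{-i\beta}\}$, both mapping to the same real $k$. I would argue that the admissible $\beta \in [0, 2\pi)$ come in such conjugate pairs (together with the fixed points $\beta = 0, \pi$ corresponding to $\lambda = \pm 1$), and that the total number of distinct real gains $k$ so produced is at most $n$. Combined with Corollary~\ref{cor:bounded_discrete}, which tells us $\ca S$ is bounded so that neither unbounded tail is stabilizing, this gives at most $n$ stabilizing intervals, hence at most $n$ connected components. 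The hard part will be pinning down the exact degree bookkeeping for the $\beta$-equation on the circle and confirming that conjugate pairs collapse to a single gain, since the modulus-based boundary lacks the simple even/odd parity structure that made the Hurwitz count transparent.
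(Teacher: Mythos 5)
Your route is the same as the paper's: reduce to controllable canonical form, use the argument principle over the unit circle to show the number of Schur roots of $p(\lambda,k)$ is locally constant in $k$, solve for $k$ at a crossing $\lambda_0 = e^{i\theta}$ via the quotient formula, impose reality of $k$ to get a trigonometric equation in $\theta$, pair $e^{i\theta}$ with $e^{-i\theta}$, and finish with the boundedness of $\ca S$ (Corollary~\ref{cor:bounded_discrete}). However, the step you explicitly defer --- ``pinning down the exact degree bookkeeping for the $\beta$-equation on the circle'' --- is precisely the substance of the lemma; everything preceding it is routine. The paper closes this gap with Chebyshev polynomials of the second kind: the reality condition reads $\text{\bf Im}\left(\left(\lambda^n + s(\lambda)\right)\overline{r(\lambda)}\right)\big\vert_{\lambda=e^{i\theta}} = \beta_n \sin(n\theta) + \dots + \beta_1\sin(\theta)$, and since $\sin(j\theta) = U_{j-1}(\cos\theta)\sin\theta$, this factors as $\sin(\theta)\,g(\cos\theta)$ with $\deg g = n-1$. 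Hence there are at most $n-1$ admissible values of $\cos\theta$, each conjugate pair $e^{\pm i\theta}$ collapsing to a single real gain via~\eqref{myeq1}, plus the gains coming from $\theta \in \{0,\pi\}$ (which always solve the equation, as $\sin\theta$ is a factor). Without this factorization, or an equivalent argument (e.g., clearing denominators to obtain a polynomial of degree at most $2n$ in $z = e^{i\theta}$ whose roots away from $\pm 1$ pair off under conjugation), you have no bound at all on the number of critical gains, and the proof does not close.

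There is also an off-by-one problem in the count you propose to establish. The method delivers at most $(n-1) + 2 = n+1$ distinct critical gains, not $n$: the roots of $g$ contribute at most $n-1$ gains, and $\lambda = 1$, $\lambda = -1$ contribute up to two more; generically all $n+1$ are distinct, so ``at most $n$ gains'' is simply false. This is harmless for the final statement --- $n+1$ gains cut $\bb R$ into at most $n+2$ intervals, of which at most $n$ are bounded, and boundedness of $\ca S$ excludes the two unbounded tails --- but note that your own bookkeeping is internally inconsistent as written: at most $n$ gains would yield at most $n+1$ intervals, hence at most $n-1$ bounded (candidate stabilizing) intervals, not $n$. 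So the intermediate claim you flag as the hard part is both unproven and not the one this analysis can deliver; the correct target is $n+1$ gains, after which the conclusion follows exactly as in the paper.
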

\begin{proof}
  Note that in this case, $\ca S$ is a subset of $\bb R$; it also suffices to assume that the system is in the controllable canonical form. 
  Consider  the characteristic polynomial of a closed-loop system, 
  \begin{align*}
    p(\lambda,k) = \lambda ^n + (a_{n-1} - k c_{n-1}) \lambda ^{n-1} + \dots + a_0 - k c_0,
  \end{align*}
  where $a = (a_{n-1,}, \dots, a_0)$ is the last row of $A$ and $c_j$'s are components of $c$.
Let $\zeta$ be the bijection between $\bb R^n$ and the set of monic $n$th degree polynomials, and 
  \begin{align*}
    \Gamma = \{a \in \bb R^n: \zeta(a) \text{ has roots on the unit disk in } \bb C\},
  \end{align*}
  and parameterize the line $\ell(k) \coloneqq a - k c$.
  Suppose $\ell(k) \cap \Gamma$ for finitely many $k$'s, listed in increasing order $\{k_1, \dots, k_l\}$ (this will be proven subsequently). Let $n_{p(x, k)}(\bb D)$ denote the number of roots of the closed loop characteristic polynomial on $\bb D$. Moreover, let $\gamma$ be a counterclockwise oriented unit circle in $\bb C$, tracing the boundary of $\bb D$. For each $k \in (k_j, k_l)$, we define,
\begin{align*}
  m(k) = \frac{1}{2 \pi i}\int_{\gamma} \frac{p'(\lambda ,k)}{p(\lambda, k)} \, d\lambda.
\end{align*}
Note that $p(\lambda , k)$ does not vanish on $\gamma$, and by Cauchy's Argument Principle~\cite{marden1966geometry},  $m(k)$ is the number of zeros of $p(\lambda , k)$ inside $\gamma$, that is, %Therefore, $m_r(k)$ is a well-defined integer-valued function and 
$m(k) = n_{p(\lambda ,k)}(\bb D)$. We further note that $m(k)$ is continuous in $k$, and as such,
$n_{p(\lambda ,k)}(\bb D)$ is constant on each interval $(k_j, k_{j+1})$. 
Hence, either $n_{p(\lambda ,k)}(\bb D) = n$ or $n_{p(\lambda ,k)}(\bb D)< n$,  respectively, corresponding to stabilizing and non-stabilizing gains $k$.

Now by inspecting the number of intersections between $\ell(k)$ and $\Gamma$, we can derive an upper bound on the number of connected components of $\ca S$.
  In this direction, when $\ell(k)$ intersects $\Gamma$ there is $\lambda _0 = e^{i \theta} \in \bb C$ such that,
   \begin{align*}
      \lambda _0^n  +(a_{n-1}-k c_{n-1} )\lambda _0^{n-1} + \dots + (a_0-k c_0) = 0,
   \end{align*}
   and therefore,     
      \begin{align*}
       k =  \frac{ \lambda _0^n + a_{n-1} \lambda _0^{n-1} + \dots + a_0 }{c_{n-1} \lambda_0^{n-1}  + c_{n-2}  \lambda_0^{n-2}+ \dots + c_0}.  \stepcounter{equation}\tag{\theequation}\label{myeq1}
  \end{align*}
  Note that $r(\lambda _0) = c_{n-1} \lambda _0^{n-1} + \dots + c_0 \neq 0$ (see details in the proof of Lemma~\ref{lemma:2n_bound_continuous}). This implies that $h( \lambda ) \coloneqq \text{\bf Im}(( \lambda _0^n + s( \lambda _0)) \overline{r( \lambda _0)}) = 0$, where $\overline{r( \lambda _0)}$ is the complex conjugate of $r( \lambda _0)$. Substituting $ \lambda _0 = e^{i \theta}$ into $( \lambda _0^n + s( \lambda _0)) \overline{r( \lambda _0)}$, we have
  \begin{align*}
  \alpha_n e^{i n \theta} + \alpha_{n-1} e^{i(n-1)\theta} &+ \dots + \alpha_0 \\
    & + \alpha_1 e^{-i \theta} + \dots + \alpha_{-(n-1)} e^{-i(n-1) \theta}  = 0,
  \end{align*}
  where $(\alpha_n, \dots, \alpha_1, \alpha_0, \alpha_{-1}, \dots, \alpha_{-(n-1)})$ are the corresponding coefficients when we expand the product.
 This implies that
 \begin{align*}
   \beta_n \sin(n\theta) + \dots + \beta_1 \sin(\theta) = 0,
 \end{align*}
 where $(\beta_n, \dots, \beta_1) \in \bb R^n$ are the corresponding real coefficients.
 We now note {Chebyshev polynomials of second kind} satisfy,
 \begin{align*}
   U_{n-1}(\cos(\theta)) \sin(\theta) = \sin(n \theta),
 \end{align*}
 where $U_{n-1}(\cos(\theta))$ is the {Chebyshev polynomial} of degree $n-1$ in $\cos(\theta)$. It 
 thus follows that,
 \begin{align}
   \label{eq:discrete_alg_poly}
 \nonumber  &\sin(\theta) \left(\beta_1 + \beta_2 U_1(\cos(\theta)) + \dots + \beta_n U_{n-1}(\cos(\theta))\right) \\
   &\eqqcolon \sin{\theta} \, g(\cos(\theta)),
 \end{align}
 where $g(\cos(\theta)) \in \bb R[\cos(\theta)]$ has degree $n-1$.
 By Fundamental Theorem of Algebra, there will be at most $n-1$ possible values for $\cos(\theta)$. Noting that $\theta = 0$ or $\theta = \pi$ also satisfy the above relation. Mind that for each value of $\cos(\theta)$, there are two possible $\theta$'s in $[0, 2 \pi)$, yielding a conjugate pair $e^{i \theta}$ and $e^{-i\theta}$. But this conjugate pair will be mapped to the same gain $k$ via~\eqref{myeq1}. Thereby, we will have at most $n+1$ possible values for such $k$'s. Since the set of stabilizing controllers is bounded, we have at most $n$ connected components. 
\end{proof}
\begin{remark}
  \label{remark:discrete_tangent}
  Figure~\ref{fig:discrete_tangent} depicts a similar situation as observed perviously for the continuous systems: two adjacent intervals $(k'', k_0)$ and $(k_0, k')$ could be both stabilizing but $k_0$ is only marginally stabilizing. The system parameters are,
{
\begin{align*}
&  A = \begin{pmatrix}
   \frac{2909}{1000} & -\frac{283}{100} & \frac{26129688223-120 \sqrt{6273911930105230}}{18036490000} \\
    1 & 0 & 0 \\
    0 & 1 & 0
  \end{pmatrix}, \\ &b = \begin{pmatrix}
    1 \\ 
    0 \\
    0
  \end{pmatrix}, \quad 
  c = \begin{pmatrix}
    0.1343 \\ -0.1846 \\ 0.0623
  \end{pmatrix}.
\end{align*}
}
%
  % To be more precise, there are two $k$'s with $k_1 = $ such that the root path intersects the unit disk.
\end{remark}
  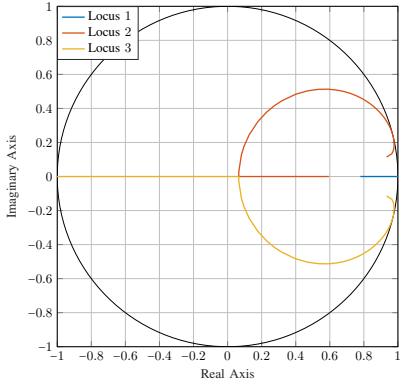
\begin{figure}[ht]
    \centering
    % This file was created by matlab2tikz.
%
%The latest updates can be retrieved from
%  http://www.mathworks.com/matlabcentral/fileexchange/22022-matlab2tikz-matlab2tikz
%where you can also make suggestions and rate matlab2tikz.
%
\definecolor{mycolor1}{rgb}{0.00000,0.44700,0.74100}%
\definecolor{mycolor2}{rgb}{0.85000,0.32500,0.09800}%
\definecolor{mycolor3}{rgb}{0.92900,0.69400,0.12500}%
\begin{tikzpicture}[scale=0.5]

\begin{axis}[%
width=3.566in,
height=3.566in,
at={(1.29in,0.481in)},
scale only axis,
unbounded coords=jump,
xmin=-1,
xmax=1,
xlabel style={font=\color{white!15!black}},
xlabel={Real Axis},
ymin=-1,
ymax=1,
ylabel style={font=\color{white!15!black}},
ylabel={Imaginary Axis},
axis background/.style={fill=white},
xmajorgrids,
ymajorgrids,
legend style={at={(0,1)}, anchor=north west, legend cell align=left, align=left, draw=white!15!black}
]
\draw[thin] (0,0) circle (1);
\addplot [color=mycolor1, line width=1.0pt]
  table[row sep=crcr]{%
1.04154334547393	0\\
0.967017401137887	0\\
0.960010636907936	0\\
0.953054362290682	0\\
0.946244465458992	0\\
0.939641524695622	0\\
0.933275423909701	0\\
0.927153840128371	0\\
0.921270558515566	0\\
0.915611871667917	0\\
0.910160864125743	0\\
0.904900011866762	0\\
0.899812625992229	0\\
0.894883570519009	0\\
0.890099551723503	0\\
0.885449168070167	0\\
0.880922834638742	0\\
0.876512648022327	0\\
0.872212228481013	0\\
0.868016558915025	0\\
0.863921830306457	0\\
0.859925297709989	0\\
0.856025147845801	0\\
0.852220377784086	0\\
0.848510683478651	0\\
0.844896356641867	0\\
0.841378188443436	0\\
0.837957378643533	0\\
0.834635448972145	0\\
0.831414159805186	0\\
0.828295429447565	0\\
0.82528125560463	0\\
0.822373638901078	0\\
0.819574508586965	0\\
0.816885650847125	0\\
0.814308640395486	0\\
0.811844776276553	0\\
0.809495022998077	0\\
0.807259958263405	0\\
0.805139728640913	0\\
0.80313401448407	0\\
0.802146182134342	0\\
0.801242005287223	0\\
0.80031220959776	0\\
0.79946238642399	0\\
0.798987496969838	0\\
0.798758817846852	0\\
0.798535688326688	0\\
0.798521486811721	0\\
0.798507307453084	0\\
0.797793337873377	0\\
0.796232545110388	0\\
0.794777221851749	0\\
0.793424143841673	0\\
0.792169692379938	0\\
0.791009905878358	0\\
0.789940537418569	0\\
0.788957116100157	0\\
0.780415600996786	0\\
0.779592810850193	0\\
};
\addlegendentry{Locus $1$}

\addplot [color=mycolor2, line width=1.0pt]
  table[row sep=crcr]{%
0.933728327263038	0.114517943502045\\
0.963391601088158	0.133235285191056\\
0.965950158159915	0.13713479630936\\
0.968366005978441	0.141585988123832\\
0.970576596805151	0.146568885184212\\
0.972535222261772	0.152057699809301\\
0.974208479851332	0.158028192242201\\
0.975571775429015	0.16446177680834\\
0.976604880358111	0.171346911507384\\
0.977288411053443	0.178678859981665\\
0.977601326033031	0.186458719640907\\
0.977519220635623	0.194692243589618\\
0.977013149224172	0.203388699602975\\
0.976048753650409	0.212559844007609\\
0.974585542210078	0.222219007989692\\
0.972576216648902	0.232380260761747\\
0.969965981353522	0.243057603125468\\
0.966691791725106	0.254264142637042\\
0.962681512081745	0.266011201169231\\
0.957852960646106	0.278307304121314\\
0.952112822558047	0.291156996428389\\
0.945355412866458	0.304559423015273\\
0.937461270978256	0.318506599605753\\
0.928295566587166	0.332981282718821\\
0.917706294955367	0.34795432357036\\
0.905522236725693	0.36338135685302\\
0.89155065426707	0.379198627970812\\
0.875574692916936	0.395317694968563\\
0.857350451364069	0.411618644100139\\
0.836603680776768	0.42794131436639\\
0.813026067070305	0.444073808942711\\
0.786271044859199	0.459737232449454\\
0.755949085078022	0.474565046231198\\
0.72162239089248	0.488074516269009\\
0.682798928264884	0.499626115944091\\
0.638925708276658	0.508363753321968\\
0.589381227926346	0.513122761203826\\
0.533466964480966	0.512279840710781\\
0.470397805411754	0.503488790673311\\
0.399291281323482	0.483163037020644\\
0.319155452897268	0.445292406129998\\
0.274036282569746	0.415536998471938\\
0.228875284490919	0.377961352325681\\
0.178056293588025	0.322985149579327\\
0.127197316427285	0.245934224926801\\
0.096727541204901	0.178047907094137\\
0.0814882707916639	0.128833386681334\\
0.0662462255770164	0.0319480732142694\\
0.065263371112114	0\\
0.0962593796726613	0\\
0.252346939653836	0\\
0.354921625310215	0\\
0.407505576847541	0\\
0.442377791625308	0\\
0.467869561181753	0\\
0.487521118786594	0\\
0.503188887487703	0\\
0.515973195268751	0\\
0.589952080386449	0\\
0.59494181312598	0\\
};
\addlegendentry{Locus $2$}

\addplot [color=mycolor3, line width=1.0pt]
  table[row sep=crcr]{%
0.933728327263038	-0.114517943502045\\
0.963391601088158	-0.133235285191056\\
0.965950158159915	-0.13713479630936\\
0.968366005978441	-0.141585988123832\\
0.970576596805151	-0.146568885184212\\
0.972535222261772	-0.152057699809301\\
0.974208479851332	-0.158028192242201\\
0.975571775429015	-0.16446177680834\\
0.976604880358111	-0.171346911507384\\
0.977288411053443	-0.178678859981665\\
0.977601326033031	-0.186458719640907\\
0.977519220635623	-0.194692243589618\\
0.977013149224172	-0.203388699602975\\
0.976048753650409	-0.212559844007609\\
0.974585542210078	-0.222219007989692\\
0.972576216648902	-0.232380260761747\\
0.969965981353522	-0.243057603125468\\
0.966691791725106	-0.254264142637042\\
0.962681512081745	-0.266011201169231\\
0.957852960646106	-0.278307304121314\\
0.952112822558047	-0.291156996428389\\
0.945355412866458	-0.304559423015273\\
0.937461270978256	-0.318506599605753\\
0.928295566587166	-0.332981282718821\\
0.917706294955367	-0.34795432357036\\
0.905522236725693	-0.36338135685302\\
0.89155065426707	-0.379198627970812\\
0.875574692916936	-0.395317694968563\\
0.857350451364069	-0.411618644100139\\
0.836603680776768	-0.42794131436639\\
0.813026067070305	-0.444073808942711\\
0.786271044859199	-0.459737232449454\\
0.755949085078022	-0.474565046231198\\
0.72162239089248	-0.488074516269009\\
0.682798928264884	-0.499626115944091\\
0.638925708276658	-0.508363753321968\\
0.589381227926346	-0.513122761203826\\
0.533466964480966	-0.512279840710781\\
0.470397805411754	-0.503488790673311\\
0.399291281323482	-0.483163037020644\\
0.319155452897268	-0.445292406129998\\
0.274036282569746	-0.415536998471938\\
0.228875284490919	-0.377961352325681\\
0.178056293588025	-0.322985149579327\\
0.127197316427285	-0.245934224926801\\
0.096727541204901	-0.178047907094137\\
0.0814882707916639	-0.128833386681334\\
0.0662462255770164	-0.0319480732142694\\
0.0652633297445542	0\\
0.0323015487303117	0\\
-0.226922089930545	0\\
-0.587248761060017	0\\
-0.929928974412894	0\\
-1.29124656113544	0\\
-1.68403554999352	0\\
-2.11689881687775	0\\
-2.59738504272603	0\\
-3.13299480180354	0\\
%-61.121263717454	0\\
inf	0\\
};
\addlegendentry{Locus $3$}

\end{axis}
\end{tikzpicture}%
    \caption{\label{fig:discrete_tangent}  Two adjacent stabilizing intervals $(k_{j-1}, k_j)$ and $(k_j, k_{j+1})$ where $k_j$ is not stabilizing.}
  \end{figure}
Now we prove that the bound on the number of connected components of $\ca S$ can be tightened to $\ceil{\frac{n}{2}}$. The strategy for proving this bound is similar to Lemma~\ref{lemma:n_bound_continuous}. That is, we need to examine the implications of having two adjacent stabilizing intervals $(k_{j-1}, k_j)$ and $(k_j, k_{j+1})$; see Figure~\ref{fig:discrete_tangent}.

We first make an observation.
\begin{proposition}
  \label{prop:poly_derivative_discrete}
  Let
  \begin{align*}
    \varphi(\lambda) = a_{-n} \lambda^{-n} + a_{n-1} \lambda^{-(n-1)} + \dots + a_0 + a_1 \lambda + \dots + a_n \lambda^n ,
  \end{align*}
  and
  \begin{align*}
    h(\theta) = \text{\bf Im} (\varphi(\lambda)|_{\lambda=e^{i\theta}}),
  \end{align*}
where $ \varphi(\lambda) \in \bb R[\lambda]$. Then $h'(\theta) = \text{\bf Re}\left( (\lambda \varphi'(\lambda))\big \vert_{\lambda=e^{i\theta}}\right)$; note that $h'(\theta)$ denotes differentiation with respect to $\theta$ and $\varphi'$ refers to differentiation with respect to $\lambda$.
\end{proposition}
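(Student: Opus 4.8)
The plan is to reduce the identity to two elementary facts: that for any differentiable complex-valued function of a real variable, differentiation in the real variable commutes with taking the imaginary part, and that $\text{\bf Im}(iz) = \text{\bf Re}(z)$ for every $z \in \bb C$. Once these are in hand the statement follows from a single application of the chain rule, exactly paralleling the role of the chain-rule/expansion step in Proposition~\ref{prop:poly_derivative_continuous}.

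First I would set $\lambda = e^{i\theta}$ and treat $\theta \mapsto \varphi(e^{i\theta})$ as a smooth $\bb C$-valued function of the real variable $\theta$; this is legitimate because the only singularity of the Laurent polynomial $\varphi$ is at $\lambda = 0$, which is never reached on the unit circle, so each term $a_j\lambda^j$ is smooth along the curve. The chain rule then gives $\frac{d}{d\theta}\varphi(e^{i\theta}) = \varphi'(e^{i\theta})\cdot \frac{d}{d\theta} e^{i\theta} = i e^{i\theta}\varphi'(e^{i\theta}) = i\,(\lambda\varphi'(\lambda))\big\vert_{\lambda=e^{i\theta}}$. Since $h(\theta) = \text{\bf Im}(\varphi(e^{i\theta}))$ and the imaginary part is $\bb R$-linear, it commutes with $d/d\theta$, so $h'(\theta) = \text{\bf Im}\left(i\,(\lambda\varphi'(\lambda))\big\vert_{\lambda=e^{i\theta}}\right)$. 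Applying $\text{\bf Im}(iz) = \text{\bf Re}(z)$ with $z = (\lambda\varphi'(\lambda))\big\vert_{\lambda=e^{i\theta}}$ yields $h'(\theta) = \text{\bf Re}\left((\lambda\varphi'(\lambda))\big\vert_{\lambda=e^{i\theta}}\right)$, which is the claim.

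I expect no genuine obstacle here; the result is essentially a bookkeeping identity. If one prefers a proof in the same term-by-term style as Proposition~\ref{prop:poly_derivative_continuous}, the alternative is to use that the coefficients are real, so that $h(\theta) = \sum_{j=-n}^{n} a_j \sin(j\theta)$ and hence $h'(\theta) = \sum_{j=-n}^{n} j\,a_j \cos(j\theta)$, while $(\lambda\varphi'(\lambda))\big\vert_{\lambda=e^{i\theta}} = \sum_{j=-n}^{n} j\,a_j e^{ij\theta}$ has real part $\sum_{j=-n}^{n} j\,a_j \cos(j\theta)$; the two sums agree term by term. The only points requiring mild care are tracking the negative powers in the Laurent expansion and noting that the constant term $a_0$ is annihilated by the derivative and so contributes nothing. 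This proposition then feeds directly into the discrete-time analysis, supplying the Schur analogue of~\eqref{eq:cont_alg_derivative} used to test whether two adjacent stabilizing intervals can meet at a merely marginally stabilizing gain.
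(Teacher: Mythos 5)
Your proof is correct, but it takes a genuinely different route from the paper. The paper argues term by term: by ($\bb R$-)linearity it reduces the identity to the monomials $\lambda^{l}$ and $\lambda^{-l}$, and then verifies both sides explicitly as $\pm l\cos(l\theta)$ using $\text{\bf Im}(e^{\pm il\theta}) = \pm\sin(l\theta)$ — exactly the computation you sketch as your ``alternative'' at the end. Your primary argument instead avoids any decomposition: the chain rule along the curve $\theta \mapsto e^{i\theta}$ gives $\frac{d}{d\theta}\varphi(e^{i\theta}) = i\left(\lambda\varphi'(\lambda)\right)\big\vert_{\lambda = e^{i\theta}}$, the $\bb R$-linearity of $\text{\bf Im}$ lets it pass through $d/d\theta$, and the pointwise identity $\text{\bf Im}(iz) = \text{\bf Re}(z)$ finishes the proof. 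This buys two things: it explains conceptually where the factor $\lambda$ in $\lambda\varphi'(\lambda)$ comes from (it is the derivative of $e^{i\theta}$), and it never uses the hypothesis that the coefficients $a_j$ are real, so your argument actually proves the identity for arbitrary complex Laurent polynomials, whereas the paper's linearity reduction implicitly relies on the coefficients being real scalars. Both proofs are elementary and short; the paper's has the advantage of being immediately checkable against the trigonometric expressions that appear downstream in Lemma~\ref{lemma:n/2_components_discrete}, while yours is cleaner and more general.
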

\begin{proof}
  By linearity of the operations involved, it suffices to show this relation holds for $\lambda^{-l}$ and $\lambda^l$, when $l \in \bb N$. For $\lambda^j$, $v(\theta) \coloneqq \text{\bf Im}(\lambda^l|_{\lambda=e^{i \theta}}) = \sin(l\theta)$ and $v'(\theta)= l \cos(l\theta)$; however, 
  \begin{align*}
    \text{\bf Re}\left( (\lambda (\lambda^l)')|_{\lambda=e^{i\theta}}\right) = \text{\bf Re}\left( l \lambda^l|_{\lambda=e^{i\theta}}\right) = l \cos(l \theta).
  \end{align*}
  For $\lambda^{-l}$, $w(\theta) \coloneqq \text{\bf Im}(\lambda^{-l}|_{e^{i \theta}}) = -  \sin(l \theta)$ and $w'(\theta) = - l \cos(l \theta)$. The proof now follows by observing that, 
  \begin{align*}
    \text{\bf Re}\left( (\lambda (\lambda^{-l})')|_{\lambda=e^{i\theta}}\right) = \text{\bf Re}\left( -l \lambda^{-l}|_{\lambda=e^{i\theta}}\right) = -l \cos(l \theta).
  \end{align*}
\end{proof}
We are now in the position to prove the tight bound of $\ceil{n/2}$ on the number of connected components of $\ca S$.
\begin{lemma}
  \label{lemma:n/2_components_discrete}
  When $\ca S \neq \emptyset$, it has at most $\ceil{\frac{n}{2}}$ connected components.
\end{lemma}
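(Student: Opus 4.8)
The plan is to mirror the proof of Lemma~\ref{lemma:n_bound_continuous}, replacing the imaginary axis by the unit circle $\partial\bb D$ and the polynomial $\phi(\beta)$ by the degree-$(n-1)$ polynomial $g(\cos\theta)$ produced in Lemma~\ref{lemma:n_components_discrete}. Assuming the system is in controllable canonical form, I would let $\{k_1,\dots,k_l\}$ be the marginal gains obtained from the real roots of $\sin\theta\,g(\cos\theta)$ via~\eqref{myeq1}; by Corollary~\ref{cor:bounded_discrete} the set $\ca S$ is bounded, so both unbounded intervals are automatically non-stabilizing. First I would fix two adjacent stabilizing intervals $(k_{j-1},k_j)$ and $(k_j,k_{j+1})$ sharing a marginal gain $k_j$ at which $p(\lambda,k_j)$ has a root $\lambda_0=e^{i\theta_0}$ on $\partial\bb D$.

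Second, I would argue that $\lambda_0$ must be a simple root of $p(\cdot,k_j)$. Proposition~\ref{prop:asymptotic_root} applies verbatim here, since its hypotheses use only the affine dependence $p(\lambda,k)=p(\lambda,k_0)-(k-k_0)r(\lambda)$, which is identical in the discrete setting; a root of multiplicity $m$ therefore splits into $m$ branches leaving $\lambda_0$ along $m$ equally spaced directions. If $m\ge 3$, at least one branch points out of the disk; if $m=2$, the two branches on either side of $k_j$ sweep out two perpendicular lines through $\lambda_0$, and since two perpendicular lines through a boundary point of a disk cannot both be tangent to $\partial\bb D$, at least one branch again exits the disk. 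Either case makes an adjacent interval non-stabilizing, a contradiction, so $m=1$.

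Third -- the crux -- I would convert "both intervals stabilizing" into a multiplicity statement for $g$. Simplicity of $\lambda_0$ yields a differentiable branch $\eta$ with $\eta'(k_j)=r(\lambda_0)/p'(\lambda_0,k_j)$, and both intervals being stabilizing forces $\eta$ to be tangent to $\partial\bb D$ at $\lambda_0$, i.e.\ its radial component vanishes: $\text{\bf Re}\big(\bar\lambda_0\,r(\lambda_0)\,\overline{p'(\lambda_0,k_j)}\big)=0$. Writing $h(\theta)=\text{\bf Im}\big(\varphi(e^{i\theta})\big)$ with $\varphi(\lambda)=(\lambda^n+s(\lambda))\,r(1/\lambda)$ -- which on $\partial\bb D$ equals the imaginary part used in Lemma~\ref{lemma:n_components_discrete}, because $r$ is real -- Proposition~\ref{prop:poly_derivative_discrete} gives $h'(\theta_0)=\text{\bf Re}\big((\lambda\varphi'(\lambda))|_{\lambda=\lambda_0}\big)$. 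Using the marginal identity $\lambda_0^n+s(\lambda_0)=k_j\,r(\lambda_0)$ together with $1/\lambda_0=\bar\lambda_0$ to simplify $\lambda_0\varphi'(\lambda_0)$, I expect this real part to match the tangency expression term by term, so that tangency is equivalent to $h'(\theta_0)=0$. Since $h(\theta)=\sin\theta\,g(\cos\theta)$ and $h'(\theta_0)=\cos\theta_0\,g(\cos\theta_0)-\sin^2\theta_0\,g'(\cos\theta_0)$, at a root with $\theta_0\ne 0,\pi$ this forces $g'(\cos\theta_0)=0$, so $\cos\theta_0$ is a multiple root of $g$; at the boundary points $\lambda_0=\pm 1$ one has $h'(\theta_0)=g(\pm1)$, so a tangency there already forces $\cos\theta_0=\pm1$ to be a root of $g$ and no genuinely new tangency gains appear. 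This identity-matching computation is where I expect the main difficulty to lie.

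Finally, I would close with the counting argument of Lemma~\ref{lemma:n_bound_continuous}. The plan is to split the marginal gains into those coming from simple roots of $g$ (together with the transversal endpoints $\lambda=\pm1$), numbering $\mu'$, and those coming from multiple roots of $g$, numbering $\nu'$; the degree of $g$ gives $\mu+2\nu\le n-1$ with $\mu'\le\mu+2$ and $\nu'\le\nu$. By the previous step, any gain admitting two adjacent stabilizing intervals is of multiple-root type, so every simple-type gain is an endpoint of a non-stabilizing interval; since both unbounded intervals are non-stabilizing, the collection $\Upsilon$ of non-stabilizing intervals obeys $\mu'\le 2|\Upsilon|-2$. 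The number of stabilizing components is then $\mu'+\nu'+1-|\Upsilon|\le\tfrac{\mu'}{2}+\nu'\le\tfrac{\mu+2\nu}{2}+1\le\tfrac{n+1}{2}$, and because the number of components is an integer this yields the bound $\ceil{\frac{n}{2}}$.
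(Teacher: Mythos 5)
Your proposal follows essentially the same route as the paper's proof: simplicity of the boundary root via Proposition~\ref{prop:asymptotic_root}, tangency of the root branch to $\partial\bb D$ at the marginal gain, conversion of tangency into $g'(\cos\theta_0)=0$ via Proposition~\ref{prop:poly_derivative_discrete}, and the identical counting ($\mu'\le 2\abs{\Upsilon}-2$, number of stabilizing intervals at most $\mu'/2+\nu'\le\frac{n+1}{2}$). The identity you defer as the "main difficulty" does work out exactly as you anticipate, and is precisely the computation the paper performs: writing $q(\lambda)=r(1/\lambda)$, so $q'(\lambda)=-r'(1/\lambda)/\lambda^2$, then using the marginal identity $\lambda_0^n+s(\lambda_0)=k_j r(\lambda_0)$, the relation $1/\lambda_0=\bar\lambda_0$ on the circle, and realness of $r$, one obtains $G'(\theta_0)=\text{\bf Re}\left(r(\bar\lambda)\,p_{k_j}'(\lambda)\,\lambda\,\vert_{\lambda=e^{i\theta_0}}\right)$, which is exactly your tangency expression after conjugation.

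The one genuine weak spot is your treatment of the endpoints $\lambda_0=\pm 1$, i.e., $\theta_0\in\{0,\pi\}$. Your counting requires that every gain in the simple/endpoint class $\mu'$ borders a non-stabilizing interval, and you justify this with "any gain admitting two adjacent stabilizing intervals is of multiple-root type." But your tangency step at $\pm 1$ only shows that tangency there would force $g(\pm 1)=0$, i.e., that $\pm 1$ is a root of $g$ --- possibly a \emph{simple} one; it does not show such a gain is of multiple-root type, nor that it borders a non-stabilizing interval. If tangency at $+1$ were possible with $g(1)=0$ simple, you would have a simple-type gain flanked by two stabilizing intervals and the bound $\mu'\le 2\abs{\Upsilon}-2$ would fail. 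The fix --- and this is how the paper dispatches it, by noting $\theta_0\neq 0,\pi$ follows from the tangency equation --- is to rule out tangency at $\pm 1$ outright: there the branch velocity $r(\pm 1)/p'(\pm 1,k_j)$ is a ratio of nonzero real numbers ($r(\pm1)\neq 0$ since otherwise $\ca S=\emptyset$, and $p'(\pm 1,k_j)\neq 0$ by simplicity), hence real and nonzero, while tangency to the unit circle at $\pm 1$ would require a purely imaginary velocity. Equivalently, by conjugate symmetry a simple real root remains real as $k$ varies, so it moves along the real axis and must exit $\bb D$ on one side of $k_j$. With this observation your argument closes and coincides with the paper's.
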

\begin{proof}
  Similar to the continuous case, we need to explore the implications of having two adjacent intervals be stabilizing. Suppose $(k_{j-1}, k_j)$ and $(k_j, k_{j+1})$ are two such intervals, where $k$'s in each interval are stabilizing. By assumption, $p(\lambda_0, k_j) = 0$ for some $|\lambda_0| = 1$.  By Proposition~\ref{prop:asymptotic_root}, $\lambda_0 = e^{i \theta_0}$ is a simple root for $p(\lambda, k_j) \in \bb R[\lambda]$ since the unit disk has positive Gaussian curvature (see argument in the proof to Lemma~\ref{lemma:n_bound_continuous}).
  Let us denote by $s(\lambda) = a_{n-1}\lambda^{n-1} + \dots + a_0$ and $r(\lambda)= c_{n-1}\lambda^{n-1} + \dots + c_0$. Now the curve $\eta(t)$ is differentiable at $\lambda_0$ with
 \begin{align*}
   \frac{d{\eta}}{d t}\Big \vert_{t=k_j} = \frac{r(\lambda_0)}{(\lambda^n + s(\lambda))'|_{\lambda=\lambda_0}},
 \end{align*}
 by appealing to the asymptotic formula or just formally differentiating.
 Geometrically, at $k_j$, the derivative should be orthogonal to $\lambda_0$ if the curve $\eta(k)$ is tangent to the unit sphere at $\lambda_0$. This implies that,
 \begin{align*}
   \frac{r(\lambda)}{\left(\lambda^n + s(\lambda)\right)'}\Big\vert_{\lambda=\lambda_0} = i \gamma \lambda_0,
 \end{align*}
 where $\gamma \in \bb R$. Hence, 
 \begin{align}
   \label{eq:alg_derivative_discrete}
   \text{\bf Re}\left(r( \bar{\lambda} ) \left(\lambda^n + s(\lambda)\right)' \lambda|_{\lambda = e^{i \theta_0}}\right) = 0.
 \end{align}
 Recall that $k_j$'s correspond to the roots of $\sin(\theta) g (\cos(\theta))$ via~\eqref{myeq1}. So if $(k_{j-1}, k_j)$ and $(k_j, k_{j+1})$ are both stabilizing, then the $\theta_0$ that maps to $k_j$ must satisfy, 
 \begin{align}
   &\text{\bf Im}\left( \left(\lambda^n + s(\lambda)\right) r(\bar{\lambda}) |_{\lambda= e^{i \theta_0}}\right) = 0, \label{myeq2}\\
   &\text{\bf Re}\left(r(\bar{\lambda}) p_{k_j}'(\lambda) \lambda|_{\lambda= e^{i \theta_0}}\right) = 0. \label{myeq3}
 \end{align}
 Let us now show that these two relations imply if $\cos(\theta_0)$ is a solution to $g(\cos(\theta_0))$ (recall that~\eqref{myeq2} is equivalent to $\sin(\theta) g(\cos(\theta)) = 0$), then $g'(\cos(\theta_0)) = 0$ due to~\eqref{myeq3}. Note that $\theta_0 \neq 0 \text{ or } \pi$ by~\eqref{eq:alg_derivative_discrete}.  
 
 Let us consider the function $G(\theta)$ defined by
 \begin{align*}
   G(\theta) = \text{\bf Im} \left(\left(\lambda^n + s(\lambda)\right) r(\bar{\lambda})|_{\lambda=e^{i\theta}} \right).
 \end{align*}
 Note that $G(\theta) = \sin(\theta) g(\cos(\theta))$.
 We adopt the notation $G'(\theta)$ to refer to differentiation with respect to $\theta$.
 Let us define $q(\lambda) = r(1/\lambda)$ and observe that $q'(\lambda) = -r'(1/\lambda)/\lambda^2$. Then
 \begin{align*}
   G(\theta_0) = \text{\bf Im} \left(\left( \lambda^n + s(\lambda) \right) q(\lambda)|_{\lambda=e^{i\theta_0}} \right).
 \end{align*}
 By Proposition~\ref{prop:poly_derivative_discrete}, we have
 \begin{align*}
   G'(\theta_0) &= \text{\bf Re} \left(\lambda\left((\lambda^n + s(\lambda)) q(\lambda)\right)'|_{\lambda=e^{i\theta_0}} \right) \\
              &= \text{\bf Re}\left( \lambda \left(\lambda^n + s(\lambda)\right)' q(\lambda)|_{\lambda=e^{i \theta_0}}\right.\\
&\quad  \left. + \lambda \left(\lambda^n + s(\lambda)\right)q'(\lambda)|_{\lambda=e^{i \theta_0}}\right).
 \end{align*}
 Noting that $\lambda_0^n + s(\lambda_0) - k_j r(\lambda_0) = 0$, i.e., $\lambda_0^n + s(\lambda_0) = k_j r(\lambda_0)$, it follows that,
  \begin{align*}
   G'(\theta_0) &= \text{\bf Re} \left( \left( \lambda(\lambda^n + s(\lambda))' r(\frac 1 \lambda)
   + \lambda k_j r(\lambda) q'(\lambda) \right)|_{\lambda=e^{i\theta_0}} \right) \\
    &= \text{\bf Re}\left( \big( \lambda(\lambda^n + s(\lambda))' r(\frac 1 \lambda) - \frac{1}{\lambda} k_j r(\lambda) r'(\frac 1 \lambda) \big) |_{\lambda=e^{i\theta_0}} \right) \\
   &= \text{\bf Re}\left( \left( \lambda \left(\lambda^n + s(\lambda)\right)' r(\bar{\lambda}) -  k_jr(\lambda)\bar{\lambda} r'(\bar{\lambda}) \right)|_{\lambda=e^{i\theta_0}} \right)  \\
   &= \text{\bf Re}\left( \left( \lambda \left(\lambda^n + s(\lambda)\right)' r(\bar{\lambda}) -  k_jr(\bar{\lambda})\lambda r'(\lambda) \right)|_{\lambda=e^{i\theta_0}} \right)  \\
                &= \text{\bf Re}\left( r(\bar{\lambda}) p_{k_j}'(\lambda) \lambda|_{\lambda=e^{i \theta_0}}\right).
\end{align*}
The above identity is precisely~\eqref{myeq3}, i.e., $G'(\theta_0) = 0$. But by the chain rule,
\begin{align*}
  G'(\theta_0) = \cos(\theta_0) g(\cos(\theta_0)) - \sin^{2}(\theta_0) g'(\cos(\theta_0)) = 0.
\end{align*}
 Since $\sin(\theta_0) \neq 0$, it follows that $g'(\cos(\theta_0)) = 0$. Putting $v=\cos(\theta_0)$, 
 it now follows that $v$ must be a multiple root if the two adjacent intervals are both stabilizing.  
 
 Now let $\{u_1, \dots, u_\mu, v_1, \dots, v_\nu\}$ denote the roots of the polynomial $g(v) = g(\cos(\theta)) = 0$, where $v_j$'s are roots of multiplicity greater than $1$. We have $\mu + 2 \nu \le n-1$. These roots will be mapped to $k_j$'s via~\eqref{myeq1}. Let $\Lambda = \{k_j\}_{j=1}^{\mu'}$ and $\Pi = \{k_j\}_{j=1}^{\nu'}$ be the corresponding sets to $\{u_j\}$ and $\{v_j\}$. Since $\theta = 0$ or $\theta = \pi$ also corresponds to real $k$'s, we add these two to the set $\Lambda$. Now $\mu' + 2 \nu' \le n+1$. Let $\Upsilon$ be the collection of nonstabilizing intervals. Note that if $k_i \in \Lambda$, then one of the intervals $(k_{j-1}, k_j)$ and $(k_j, k_{j+1})$ must be not stabilizing. Hence, $k_j$ must be the endpoint of some interval in $\Upsilon$. It now follows that the cardinality of $\Lambda$ is given by
\begin{align*}
  \mu' = |\Lambda| \le 2 + 2(| \Upsilon | - 2).
\end{align*}
On the other hand, the total number of stabilizing intervals would be $\mu' + \nu' + 1 - |\Upsilon|$ and thereby,
\begin{align*}
  \mu' + \nu' + 1 - |\Upsilon| \le \mu' + \nu' + 1 - \frac{\mu' + 2}{2} \le \frac{n+1}{2}.
\end{align*}
Consequently, the total number of stabilizing intervals is at most $\ceil{\frac {n} 2}$.\footnote{In particular, when $n=2$, the set $\ca S$ corresponding to the triplet $(A, b, c^\top)$, if nonempty, is connected.}
\end{proof}
\begin{remark}
  As shown in Example~\ref{ex.3}, this bound is tight. \par
\end{remark}
{
We now observe the property that the stabilizing and non-stabilizing intervals in $\ca S$ interlace is also generic. The proof is almost verbatim to the one to Lemma~\ref{lemma:interlacing} except that we need the nonemptyness of $\ca U$ for discrete systems; this is immediate from Example~\ref{ex.3}.
\begin{lemma}
    For a controllable and observable system $(A, b, c^\top)$, the property that stabilizing and non-stabilizing intervals in $\ca S$ interlace is generic.
\end{lemma}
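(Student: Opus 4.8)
The plan is to mirror the argument for Lemma~\ref{lemma:interlacing} in the continuous setting, replacing the polynomial $\phi(\beta)$ with its discrete-time counterpart $g(v) \in \bb R[v]$ from~\eqref{eq:discrete_alg_poly} and invoking the structural facts already established in the proof of Lemma~\ref{lemma:n/2_components_discrete}. First I would let $\ca U \subseteq \ca{CO}$ denote the collection of controllable and observable triples $(A, b, c^\top)$ for which the stabilizing and non-stabilizing intervals of $\ca S$ interlace, and I would show that $\ca U$ contains a nonempty Zariski-open subset of $\ca{CO}$, which is exactly what genericity demands.

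The key step is to identify the algebraic obstruction to interlacing. As shown in the proof of Lemma~\ref{lemma:n/2_components_discrete}, whenever two adjacent intervals $(k_{j-1}, k_j)$ and $(k_j, k_{j+1})$ are simultaneously stabilizing, the value $v_0 = \cos(\theta_0)$ associated to $k_j$ via~\eqref{myeq1} is forced, through conditions~\eqref{myeq2} and~\eqref{myeq3}, to satisfy both $g(v_0) = 0$ and $g'(v_0) = 0$; that is, $v_0$ is a multiple root of $g$. The identical tangency analysis (the root locus grazing the unit circle at $\lambda_0 = e^{i\theta_0}$) applies verbatim when two adjacent intervals are simultaneously non-stabilizing. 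Consequently, any failure of interlacing forces the discriminant $\Delta(g)$ to vanish, so that
\begin{align*}
  \ca U^c \subseteq \{(A, b, c^\top) \in \ca{CO}: \Delta(g) = 0\}.
\end{align*}

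Next I would verify that $\Delta(g)$ is genuinely a polynomial in the entries of $A$, $b$, and $c^\top$. The coefficients $\beta_1, \dots, \beta_n$ of $g$ arise by expanding $(\lambda^n + s(\lambda))\,\overline{r(\lambda)}$ at $\lambda = e^{i\theta}$, collecting the $\sin(j\theta)$ terms, and re-expressing each $\sin(j\theta)$ through the Chebyshev polynomial $U_{j-1}(\cos\theta)$; each $\beta_j$ is thereby a polynomial in the last row $a$ of $A$ and in the components of $c$. Since the discriminant of a univariate polynomial is a polynomial in its coefficients, $\Delta(g)$ is a polynomial in the system parameters, and $\{\Delta(g) = 0\}$ is Zariski closed in $\ca{CO}$. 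Hence $\ca U \supseteq \{\Delta(g) \neq 0\} \cap \ca{CO}$, which is Zariski open.

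Finally, to conclude genericity I need $\Delta(g) \not\equiv 0$ on $\ca{CO}$, equivalently a single controllable and observable triple whose associated $g$ has only simple roots, so that interlacing holds. This is precisely the content of Example~\ref{ex.3}, where $\ca S$ is computed to possess two cleanly interlaced stabilizing components; thus $\ca U$ is nonempty, and $\{\Delta(g) \neq 0\} \cap \ca{CO}$ is a nonempty Zariski-open subset contained in $\ca U$, establishing that interlacing is generic. I expect the main obstacle to be the bookkeeping that confirms the Chebyshev substitution keeps the coefficients of $g$ polynomial (rather than merely rational or trigonometric) in the entries of $(A, b, c^\top)$, and the need to ensure the boundary cases $\theta_0 \in \{0, \pi\}$ do not evade the discriminant obstruction; the latter is already settled inside Lemma~\ref{lemma:n/2_components_discrete} via~\eqref{eq:alg_derivative_discrete}, which excludes $\theta_0 \in \{0, \pi\}$ at a tangency.
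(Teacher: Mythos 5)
Your proposal is correct and takes essentially the same route as the paper: the paper proves this lemma by declaring it ``almost verbatim'' to the continuous-time Lemma~\ref{lemma:interlacing}, with the polynomial $g(\cos\theta)$ from~\eqref{eq:discrete_alg_poly} playing the role of $\phi(\beta)$, the vanishing of its discriminant cutting out the Zariski-closed obstruction via the multiple-root argument of Lemma~\ref{lemma:n/2_components_discrete}, and Example~\ref{ex.3} supplying nonemptiness---exactly the three ingredients you assemble. If anything, your write-up is more explicit than the paper's, since you verify that the Chebyshev-expanded coefficients of $g$ are genuinely polynomial in the entries of $(A,b,c^\top)$ and that the boundary cases $\theta_0 \in \{0,\pi\}$ are excluded, points the paper leaves implicit.
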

}
%When $n=2$, the set of stabilizing controllers for $(A, b, c^T)$ is connected.
%\begin{corollary}
%\sout{  If $n=2$, the set $\ca S$ corresponding to the triplet $(A, b, c^\top)$ is connected.}
%\end{corollary}
%\begin{proof}
%  \sout{If $\ca S = \emptyset$, no further steps are necessary. Otherwise, the conclusion follows from Lemma~\ref{lemma:n/2_components_discrete}.}
%\end{proof}
\subsubsection*{\bf An algorithm for characterizing the connected components of $\ca S$}
%\paragraph{An Algorithm}
Similar to the continuous case, our analysis leads to an algorithm for identifying the stabilizing intervals
for discrete-time linear SISO systems. We summarize this algorithm below.
\begin{algorithm}[H]
\floatname{algorithm}{Algorithm $2$:}
\renewcommand{\thealgorithm}{}
\caption{\bf Identifying stabilizing intervals of $\ca S$}
\label{discrete_alg}
\begin{algorithmic}[1]
\STATE Find real zeros $\{\lambda_1, \dots, \lambda_l\}$ of the real polynomial~\eqref{eq:discrete_alg_poly}. Note that these zeros correspond to $\cos(\theta)$. Mapping these zeros to the corresponding values of $\lambda$, we get $\{\lambda_1, \bar{\lambda}_1, \dots, \lambda_l, \bar{\lambda_l}\}$. Appending $\{-1, 1\}$ to this list if necessary, we get a new list $L = \{1, -1, \lambda_1, \bar{\lambda}_1, \dots, \lambda_l, \bar{\lambda}_l\}$. Mapping $L$ to $\{k_1, \dots, k_{l'}\}$ (order this list in an  increasing manner) by~\eqref{myeq1}.
\STATE Start from the interval $(-\infty, k_1)$. It is not stabilizing by Proposition~\ref{prop:discrete_bounded}. Check the multiplicity of $\lambda_{1'}$ that maps to $k_1$. If $\lambda_{1'}$ is simple, then $(k_1, k_2)$ is not stabilizing; if $\lambda_{1'}$ is not simple, check whether~\eqref{eq:alg_derivative_discrete} is satisfied; if not, i.e., the corresponding derivative is not pure imaginary, then $(k_1, k_2)$ is not stabilizing. On the other hand, if this derivative is pure imaginary, then $(k_1, k_2)$ is stabilizing. Continue the process.
\end{algorithmic}
\end{algorithm}
{We demonstrate the progression of Algorithm 2 on the example in Remark~\ref{remark:discrete_tangent}.
%\begin{enumerate}
%\item 
The characteristic polynomial of the closed-loop system is given by
  \begin{align*}
    p(\lambda, k) &=
    \lambda^3 +2.909 \lambda^2 + 2.83 \lambda - 0.9217272705 \\
    &\quad +k (0.1343\lambda^2 -0.1846\lambda + 0.06229).
  \end{align*}
  Furthermore, 
  \begin{align*}
    g(\cos(\theta)) &=0.24916\, \left( \cos \left( \theta \right)  \right) ^{2}-
 0.4840272752\,\cos \left( \theta \right)\\
 &\quad + 0.2350722459,
  \end{align*}
has a root $\frac{\sqrt{6018}}{40}$ with multiplicity $2$.
Then $\sin(\theta_1) =0$ is mapped to $k_1=0.06065638$ and $k_3 = 20.09687366$; $\cos(\theta_2) =\frac{\sqrt{6018}}{40}$ on the other hand maps to $k_2 = 0.6198635016$.
%\item 
By Corollary~\ref{cor:bounded_discrete}, the two unbounded intervals $(-\infty, k_1)$ and $(k_3, \infty)$ are both non-stabilizing. We note that $k_1$ corresponds to $\sin(\theta)=0$ and
% by our analysis, 
%it corresponds interlacing stabilizing and non-stabilizing intervals. 
thereby $(k_1, k_2)$ is stabilizing. 
  %\item 
  Since $k_2$ is obtained from a multiple root of $g(\cos(\theta))$, we need to check condition~\eqref{eq:alg_derivative_discrete}. In this case, we conclude that the interval $(k_2, k_3)$ is stabilizing.\footnote{Alternatively, we note that $k_3$ is obtained from $\sin(\theta)=0$, and as such, $(k_2, k_3)$ must be stabilizing since $(k_3, \infty)$ is not.}
%\end{enumerate}
} 
%
%%%%%%%%%%%%%%%%%%%%%%%%%%%%%%%%%%%%%%%%%%%%%%%%%%%%%%%%%%%%%%%%%%%%%%%%%%%%%%%%%%%%%%%
%\balance
\section{Conclusion}
The motivation for this work stems from recent interest in devising learning type algorithms for
control synthesis, that evolve over the set of stabilizing feedback gains. 
This in turn, has inspired the need to further examine the topological properties of these sets.
We envisage that some of these properties might been observed in the earlier literature in system theory and
known to experts;\footnote{Particularly those with deep affinity for root locus and geometry of polynomials.}
however, this work is an attempt to gather and prove these
properties in a concise and rigorous manner using basic topology and the theory of polynomials.
In this work, we have focused on topological and metrical properties of 
stabilizing state feedback gains and SISO output feedback gains for continuous and discrete time linear systems;
some of these results have MIMO counterparts that are discussed in~\cite{MIMO2019}. In this latter case,
topological arguments turn out to be even more dominant for characterizing the set of stabilizing feedback gains, with less reliance on the geometry of polynomials.
\section*{Acknowledgments}
The first author is grateful to Dillon Foight for providing a preliminary version of Figures~\ref{fig:continuous_tangent} and~\ref{fig:discrete_tangent} and many helpful remarks. The authors also acknowledge their discussions with Maryam Fazel, Sham Kakade, and Rong Ge, on exploring connections between control theory and learning, and Han Feng and Javad Lavaei for discussions on connectedness properties of stabilizing gains. This research was supported by DARPA Lagrange Grant FA8650-18-2-7836.
% Can use something like this to put references on a page
% by themselves when using endfloat and the captionsoff option.
\ifCLASSOPTIONcaptionsoff
  \newpage
\fi
\balance 
% that's all folks
\bibliographystyle{ieeetran}
\bibliography{ref}
\end{document}